\documentclass[journal]{IEEEtran}

\usepackage[T1]{fontenc}
\usepackage[utf8]{inputenc}

\usepackage{amsmath,amssymb,amsfonts,amsthm,bm,mathtools}

\newtheorem{lemma}{Lemma}
\newtheorem{proposition}{Proposition}

\ifCLASSINFOpdf
\usepackage[pdftex]{graphicx}
\else
\usepackage{graphicx}
\fi

\usepackage{cite}
\usepackage{float}
\usepackage{stfloats}            
\usepackage{algorithm}
\usepackage{algpseudocode}       

\usepackage{booktabs}
\usepackage{multirow}
\usepackage{tabularx}
\usepackage{adjustbox}
\usepackage[table,xcdraw]{xcolor}

\usepackage[font=scriptsize]{caption}
\usepackage{subcaption}          

\usepackage{url}
\usepackage{hyperref}

\usepackage{array}
\usepackage{color,soul}          
\usepackage{pifont}              
\newcommand{\cmark}{\ding{51}}%
\newcommand{\xmark}{\ding{55}}%

\usepackage{varwidth}
\usepackage[compatibility=false]{caption}

\hyphenation{op-tical net-works semi-conduc-tor}

\begin{document}

\title{STAR-RIS-Aided Secure Communications: Analytical Insights and Performance Comparison}
\author{Taissir Y. Elganimi,~\IEEEmembership{Senior Member,~IEEE},
Mahmoud Aldababsa,~\IEEEmembership{Senior Member,~IEEE},
Ali A. Nasir,~\IEEEmembership{Senior Member,~IEEE},
and Khaled M. Rabie,~\IEEEmembership{Senior Member,~IEEE}

\thanks{

T. Elganimi is with the Department of Electrical and Electronic Engineering, University of Tripoli, Libya (e-mail: t.elganimi@uot.edu.ly).

M. Aldababsa is with the Department of Electrical and Electronics Engineering, Nisantasi University, 34481742, Istanbul, Turkey (e-mail: mahmoud.aldababsa@nisantasi.edu.tr).

A. Nasir is with the Department of Electrical Engineering and the Center for Communication Systems and Sensing, King Fahd University of Petroleum and Minerals (KFUPM), Dhahran 31261, Saudi Arabia (e-mail: anasir@kfupm.edu.sa).

K. Rabie is with the Department of Computer Engineering and the Center for Communication Systems and Sensing, King Fahd University of Petroleum and Minerals (KFUPM), Dhahran 31261, Saudi Arabia (e-mail: k.rabie@kfupm.edu.sa).
}}

\markboth{Submitted to IEEE Internet of Things Journal,~Vol.~xx, No.~xx, XXXXX~2025}
{Shell \MakeLowercase{\textit{et al.}}: A Sample Article Using IEEEtran.cls for IEEE Journals}



\maketitle

\begin{abstract}
Simultaneously transmitting and reflecting reconfigurable intelligent surfaces (STAR-RISs) have emerged as a promising technology for enabling full-space signal manipulation and enhancing wireless network coverage and capacity. In this article, we present a comprehensive analytical comparison of STAR-RIS-assisted systems with single-input single-output (SISO), conventional RISs, and decode-and-forward (DF) relaying schemes, including both half-duplex (HD) and full-duplex (FD) modes. Closed-form expressions are derived for the achievable secrecy rates of STAR-RIS-aided communications under both the absence and presence of eavesdroppers. Unlike most existing works, the direct source–destination link is incorporated in all considered schemes, and optimal transmit power allocation is investigated for HD- and FD-DF relaying. Furthermore, we provide the conditions under which STAR-RIS outperforms HD- and FD-DF relaying and quantify the minimum number of STAR-RIS elements required to achieve superior rates. The impacts of key system parameters—including transmit power, number of elements, reflection-to-transmission power ratio, element-splitting factor, and deployment positions—on both achievable and secrecy performance are investigated. The results reveal that STAR-RIS systems can achieve superior rates and secrecy rates compared to all benchmark schemes.
\end{abstract}

\begin{IEEEkeywords}
Decode-and-forward relaying, full-duplex (FD), half-duplex (HD), repetition coding, secrecy rate, simultaneously transmitting and reflecting reconfigurable intelligent surface (STAR-RIS).
\end{IEEEkeywords}

\section{Introduction}

\IEEEPARstart{R}{ecently}, reconfigurable intelligent surfaces (RISs) have been deemed a key enabling technology for sixth-generation (6G) wireless communication systems, owing to their ability to significantly improve the achievable data rates and energy efficiency (EE) \cite{8910627}. RISs have the potential to enable the emerging paradigm of smart radio environments by exploiting the unique properties of metamaterials and large-scale arrays of low-cost passive elements. They assist wireless systems by transforming the wireless propagation environment into an intelligently reconfigurable and software-controllable space. This is achieved by carefully adjusting the phase shifts of numerous low-cost passive reflecting elements integrated within the RIS \cite{9475160}. In wireless communication networks, RIS can be employed to redirect incident signals towards the destination (D), thereby improving the reception quality and link performance \cite{REF1}.  Consequently, RIS technology  fulfills the requirement for a smart radio environment and possesses the ability to improve the effectiveness of data transmission systems \cite{8796365, 9140329}.

Different from conventional relaying protocols, RISs passively reflect incident signals without requiring expensive radio frequency (RF) chains, in contrast to decode-and-forward (DF) and amplify-and-forward (AF) relaying schemes that operate in half-duplex (HD) mode, and require active signal processing. Compared to conventional active transmitters, RISs offer substantially more cost-effective operation, characterized by reduced transceiver hardware requirements and lower power consumption. Moreover, RISs inherently operate in full-duplex (FD) mode without suffering from self-interference or introducing additional thermal noise due to their passive reflection mechanism. Furthermore, RISs offer superior spectral efficiency compared to active HD relays, while maintaining a lower signal processing complexity than active FD relays, which suffer from residual self-interference and require advanced cancellation techniques \cite{9475160}. Therefore, RISs offer a more cost-effective alternative to the classical relaying and multi-antenna technologies.

In general, conventional RISs are typically deployed on one side of objects such as walls, ceilings, building facades, street lamps, etc, and are limited to passively reflecting incident signals toward the same side as the source (S). This, however, leads to a half-space smart radio environment, requiring both the S and the users to be located on the same side of the RIS. In practical deployments, users are typically distributed on both sides of the RIS, which significantly restricts both the flexibility and overall effectiveness of conventional RIS configurations. Therefore, simultaneously transmitting and reflecting RISs (STAR-RISs) have been proposed in \cite{REF_STAR1_4} to overcome the inherent half-space coverage limitation of conventional RISs, thereby enabling intelligent signal manipulation across the entire coverage area and serving users located on both sides of the surface. The STAR-RIS is functionally divided into two parts; one part reflects the incident signal towards users in the same space, referred to as the reflection space, while the other part transmits the signal to the opposite space, known as the transmission space. In STAR-RISs, the transmission and reflection coefficients are used to simultaneously reconfigure both the transmitted and reflected signals, achieving full-space coverage and enabling smart control of the wireless environment across all spatial directions \cite{REF_STAR1_4, 9690478}.

Despite the promising advantages of RISs and STAR-RISs, and considering that RISs, STAR-RISs, and various types of relays all provide alternative transmission links, it is essential to compare these emerging technologies with conventional relaying schemes. Previous comparative studies have primarily focused on scenarios where RISs reflect incident signals \cite{REF222, 8811733, 9095301, REF_Emil_3, 9184098, 9217321, 9154308, 9119122, 9703626, 9625201, 9359653, 9569598, 10015640, 10.1007/978-3-031-28076-4_10, 10211119, 10183245, 10373918, 10598369, JIANG2024514, 10963359, 10811893, 26262626}, with only one study \cite{11142285} addressing STAR-RIS, as highlighted in the following subsection.

\subsection{Related Work}
In \cite{REF222}, RIS and HD-AF relaying are compared, and the use of RIS is shown to produce significant improvements in the EE performance compared to conventional AF relaying schemes. In another work \cite{8811733}, it was shown that the RIS-aided system achieves a higher rate and greater cost-efficiency compared to HD- and FD-AF relaying systems when the number of reflecting elements is sufficiently large. In \cite{9095301}, a complete performance analysis of RIS-assisted systems is provided including the instantaneous and average end-to-end signal-to-noise ratio (SNR), the outage probability (OP), the symbol error rate (SER), the diversity gain, the diversity order, and the ergodic capacity. It is shown in \cite{9095301} that RIS-assisted systems outperform HD-AF relaying systems in terms of average SNR, OP, SER, and ergodic capacity.

The authors in \cite{REF_Emil_3} compared RIS with the classical HD-DF relaying scheme in terms of both the achievable rate and EE performance using the radio propagation model that has been defined by the third generation partnership project (3GPP). The results reported in \cite{REF_Emil_3} showed that RIS does not consistently outperform HD-DF relaying; however, the former can achieve performance superior to that of the latter when a large number of reflecting elements are employed. In another comparative study \cite{9184098}, the authors compared RIS with the HD-DF relaying scheme in terms of information rate, where the results also showed that RIS can achieve performance superior to that of HD-DF relays in the case of adequately large RISs. Different from these two comparative studies, the author in \cite{9217321} compared the performance of RIS, single-input single-output (SISO), and HD-DF relays, and derived closed-form expressions of the achievable rates and transmit powers to determine the required number of reflecting elements in an RIS to outperform HD-DF relaying. The main finding in \cite{9217321} is that both RIS and HD-DF relays can complement each other’s strengths and have the potential for integration within the fifth-generation (5G) and beyond-5G (B5G) network architectures. In \cite{9154308}, the authors compared the achievable rate and EE performance of RIS and HD-DF relaying, taking into account the insertion losses and power consumption associated with the electronic components of the deployed nodes. The results reported in \cite{9154308} also showed that the RIS-aided system outperforms its relay-aided counterparts when the RIS is sufficiently large in size. In \cite{9119122}, the authors reviewed the key differences and similarities between HD- and FD-DF relaying and RISs, and showed that large RISs are capable of outperforming relay-based systems in terms of the achievable data rate. Later, the authors in \cite{9703626} investigated the performance of HD-DF relaying and RIS in unmanned aerial vehicle (UAV)-assisted communication scenarios. This study analyzed different system configurations and requirements in order to compare the effectiveness of RIS and relaying systems under different operational conditions. In another study \cite{9625201}, the performance of RIS is compared to that of HD- and FD-DF relaying for general multiple-input multiple-output (MIMO) systems by optimizing the beamforming matrices. This study highlighted the ability of RIS to offer comparable spectral efficiency to HD relaying while providing significantly higher EE than FD relaying. In \cite{9359653}, both HD- and FD-DF relaying systems are compared with spatially-distributed RISs in terms of OP and EE performance, where only one relay (R) or RIS is selected based on maximizing the SNR. It is shown in \cite{9359653} that RIS-based systems offer a more energy-efficient solution for assisting communication than relaying schemes. Furthermore, RIS can enhance both OP and EE performance, especially when equipped with a large number of reflecting elements. In a further comparison-based study \cite{9569598}, the authors presented a performance comparison between an RIS-assisted system and FD-DF relay-assisted system in the presence of a non-ideal transmitter. The main finding in \cite{9569598} is that the RIS-assisted system can never achieve better capacity performance than the traditional FD-DF relaying, regardless of the position of the RIS or the relay. It is also shown that the capacity of the RIS-assisted system approaches the channel capacity provided by the FD-DF relay-assisted system as the number of reflecting elements increases. However, this results in a reduction in the EE performance of the RIS-assisted system, which involves making a trade-off between channel capacity enhancement and EE degradation in the RIS-assisted system.

Recently, the authors in \cite{10015640} compared the performance of single RF chain multi-antenna FD-DF relaying and passive RIS, where the main findings showed that the former demonstrates significantly higher EE compared to the latter. The authors in \cite{10.1007/978-3-031-28076-4_10} compared RIS and relaying systems particularly focusing on UAV communications scenarios. The results reported in \cite{10.1007/978-3-031-28076-4_10} highlight that the active relays provide higher EE than the RIS when the UAV is located near the midpoint between the base station and the user, while the RIS outperforms active relays when the UAV is positioned close to either the base station or the user. In addition, it is shown that DF relaying offers higher EE than AF relaying. In \cite{10211119} and \cite{10183245}, the performance of RIS, HD and FD relays is compared from the operator perspective, and closed-form expressions of the achievable rates for RIS and active relays are derived to make a fair comparison. The results in \cite{10211119} and \cite{10183245} highlight that the RIS and FD-DF relaying are effective in enhancing the system performance. The superiority in the achievable rate between the two technologies depends on the number of reflecting elements in the RIS and the level of self-interference suppression in the FD-DF relay. Later, RIS is analytically compared with FD-DF and FD-AF relays in UAV cooperative communications \cite{10373918}. The numerical results in \cite{10373918} indicate that RIS of moderate size can achieve performance comparable to that of AF relays, while surpassing DF relays under conditions of high data rates and deploying a large number of reflecting elements. In another work for the same authors \cite{10598369}, the performance of RISs and FD relays is compared in a millimetre-wave MIMO network. It is found in \cite{10598369} that RISs offer higher EE at sub-6 GHz, whereas FD relays provide better spectral efficiency than that of RISs. Besides, DF relays usually outperform their AF counterparts. In \cite{JIANG2024514}, a secure performance comparison between RIS and HD-AF relaying is presented for non-orthogonal multiple access (NOMA) with the presence of an eavesdropper (E). In this work, the similar number of RIS elements and AF relay antennas is assumed for the sake of fair comparison, and the same secure transport strategies are used for both system models in order to maximize the secrecy rate. The simulation results in \cite{JIANG2024514} show that the RIS-assisted NOMA outperforms AF relay-assisted NOMA when the number of elements reaches a certain value.

More recently, the authors in \cite{10963359} investigated the impact of RIS deployment on improving both power efficiency and sustainability in wireless communications and proposed optimization frameworks to minimize the power consumption in RIS-assisted systems. In addition, a comparative analysis between RIS and classic active relaying systems is provided in \cite{10963359} and showed that RIS provides performance gains over HD-AF and HD-DF relaying. In a recent work \cite{10811893}, novel multiple access communication protocols are proposed for cooperative transmission in industrial Internet of Things subnetworks, leveraging secondary access points as HD-DF and HD-AF relays. These protocols are extended to include RISs. The results showed that the RIS-based protocol attains superior power savings, while the relay-based solution provides better OP than that of RIS-based solution. In another recent comparative study \cite{26262626}, the authors presented a comparative analysis between RIS and both DF and AF relaying schemes. The analysis in \cite{26262626} demonstrates that RIS-aided systems with optimized phase shifts are capable of achieving substantial power savings and superior EE compared to HD-AF and HD-DF relaying. Furthermore, RIS-aided systems require fewer reflecting elements to outperform relaying schemes under high achievable rate demands. This makes it an attractive choice for high-performance wireless system design. In a more recent work \cite{11142285}, the STAR-RIS system is compared against both HD-DF and FD-DF relaying. In this work, new analytical conditions are derived to compute the number of STAR-RIS elements required to outperform HD- and FD-DF relaying schemes, and provide new insights into the interplay between residual loop interference and transmission/reflection splits. Therefore, while the overall trend that large surfaces are needed is consistent with \cite{REF_Emil_3}, the comparison is extended in \cite{11142285} to include STAR-RIS and FD-DF relaying systems, given the STAR-RIS's ability to simultaneously reflect and transmit signals, resembling the FD operation and offering deeper insights into their performance trade-offs.

A comparative summary of the key characteristics of prior research and the present work is presented in Table \ref{tbl1}, focusing on the RIS scheme, relaying scheme, duplex mode, and secrecy performance.

\begin{table*} [t]
\caption{A comparison between the related and present work.}
\label{tbl1}
\begin{adjustbox}{width=\textwidth}
\begin{tabular}{ |l|c|c|c|c|c| } 
\noalign{\hrule height 1pt}

\textbf{Reference} & Year & RIS scheme & Relaying scheme & Duplex mode  & Secrecy performance  \\ \hline
 
\cite{REF222} & 2019 & RIS & AF & HD & \xmark  \\  \hline
 
\cite{8811733} & 2019 & RIS & AF & HD and FD & \xmark \\  \hline
 
\cite{9095301} & 2020 & RIS & AF & HD & \xmark \\  \hline
  
\cite{REF_Emil_3} & 2020 & RIS & DF & HD & \xmark \\  \hline
 
\cite{9184098} & 2020 & RIS & DF & HD & \xmark \\  \hline
 
\cite{9217321} & 2020 & RIS & DF & HD & \xmark  \\  \hline
 
\cite{9154308} & 2020 & RIS & DF & HD & \xmark \\  \hline

\cite{9119122} & 2020 & RIS & DF & HD and FD & \xmark  \\  \hline
 
\cite{9703626} & 2021 & RIS & DF & HD & \xmark \\  \hline
 
\cite{9625201} & 2021 & RIS & DF & HD and FD & \xmark  \\  \hline
 
\cite{9359653} & 2021 & RIS & DF & HD and FD & \xmark  \\  \hline

\cite{9569598} & 2021 & RIS & DF & HD and FD & \xmark  \\  \hline

\cite{10015640} & 2023 & RIS & DF & HD and FD & \xmark  \\  \hline

\cite{10.1007/978-3-031-28076-4_10} & 2023 & RIS & AF and DF & FD & \xmark  \\  \hline

\cite{10211119} & 2023 & RIS & DF & HD and FD & \xmark  \\  \hline

\cite{10183245} & 2023 & RIS & DF & HD and FD & \xmark  \\  \hline

\cite{10373918} & 2024 & RIS & AF and DF & FD & \xmark  \\  \hline

\cite{10598369} & 2024 & RIS & AF and DF & FD & \xmark  \\  \hline

\cite{JIANG2024514} & 2024 & RIS & AF & HD & \cmark  \\  \hline

\cite{10963359} & 2025 & RIS & AF and DF & HD & \xmark  \\  \hline

\cite{10811893} & 2025 & RIS & AF and DF & HD & \xmark  \\  \hline

\cite{26262626} & 2025 & RIS & AF and DF & HD & \xmark  \\  \hline

\cite{11142285} & 2025 & STAR-RIS and RIS & DF & HD and FD & \xmark  \\  \hline

This work & 2025 & STAR-RIS and RIS & DF & HD and FD & \cmark  \\  \hline

\end{tabular}
\end{adjustbox}
\end{table*}

\subsection{Motivations}
Since it was shown in \cite{10.1007/978-3-031-28076-4_10, 10598369} and \cite{REF333} that DF relaying outperforms AF relaying, it serves as a more appropriate benchmark for comparison with RIS. Consequently, most comparative studies in the existing literature focus on evaluating RIS-based systems against DF relaying schemes due to their superior performance characteristics. On the other hand, FD relays can simultaneously transmit and receive the signals over the same time-frequency dimension, theoretically doubling the achievable rate compared to traditional HD relays. Similarly, STAR-RISs are functionally analogous to FD systems because of their ability to simultaneously transmit and reflect incoming signals across the entire surface. In other words, both STAR-RISs and FD relays leverage concurrent bidirectional communication over the same time-frequency resources. This makes FD relaying a more relevant benchmark for performance comparisons with STAR-RIS-aided systems. Consequently, the question arises as to whether STAR-RIS-assisted systems can outperform FD relaying, and, if so, under what conditions. Therefore, although \cite{11142285} provides a brief comparison between STAR-RISs and DF relaying in both HD and FD modes in the absence of Es, a comprehensive and fair comparison supported by detailed mathematical analysis in the presence of Es is still required and has not yet been addressed in the existing literature.

\subsection{Contributions}
The main contributions of this paper are summarized as follows:
\begin{itemize}
   \item We present a fundamental analytical comparison of STAR-RIS-assisted systems with SISO, conventional RISs, and DF relaying schemes, including HD and FD modes.
   
   \item We derive closed-form expressions for both the achievable rate and secrecy rate of STAR-RISs in the absence and presence of Es.

   \item We consider transmit power adaptation in both HD- and FD-DF relaying, and the performance is optimized by computing the optimal transmit powers at both S and R. Throughout the paper, the direct link is considered in all schemes, unlike most existing works where it is ignored and the direct S–D signal is assumed to be strongly attenuated.

   \item We present the conditions under which STAR-RIS achieves higher rates than HD-DF relaying and FD-DF relaying. In particular, we quantify the minimum number of STAR-RIS elements required to outperform DF relaying schemes in terms of the achievable rate.

   \item We investigate the effects of key parameters—including transmit power, number of elements, reflection-to-transmission power ratio, element-splitting factor, and deployment positions of RIS/STAR-RIS—on the achievable and secrecy rate performance.

   \item Our results demonstrate that, unlike HD- and FD-DF relaying, the secrecy rate of STAR-RIS systems strongly depends on the number of elements, transmit power, and element-splitting factor. Notably, STAR-RIS systems achieve superior secrecy rates over all benchmark schemes even with a relatively small number of elements.
\end{itemize}

The results presented in this article provide valuable system design guidelines by identifying the appropriate use cases for SISO, HD- and FD-DF relaying modes, as well as RIS- and STAR-RIS-supported transmission.

\subsection{Outline}
The rest of this paper is organized as follows. Sections \ref{sec2} and \ref{sec3} analyze the achievable rate and secrecy rate of the considered system models, respectively. The simulation results are provided in Section \ref{sec4}, and finally, the conclusions are given in Section \ref{sec5}.

\section{Achievable Rate Analysis} \label{sec2}

In this paper, we adopt the system models of SISO and HD-DF relaying as presented in \cite{REF_Emil_3} with the corresponding rate expressions given in \cite[Eq.~(11)]{REF_Emil_3}, and \cite[Eq.~(14)]{REF_Emil_3}, respectively. In the following subsections, we introduce the system models of FD-DF relaying and STAR-RIS, together with their associated achievable rate expressions. We then compare the achievable rates of STAR-RIS with those of HD- and FD-DF relaying, and derive the minimum number of elements required for STAR-RIS to outperform both relaying schemes.

\begin{figure*}[t]
\centering
\subfloat[\label{figure_1a}]{\includegraphics[width=8.4cm,height=4.9cm]{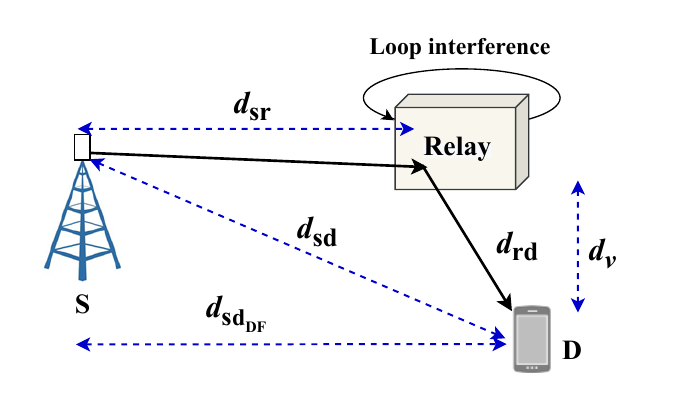}}
\subfloat[\label{figure_1b}]
{\includegraphics[width=9.4cm,height=4.9cm]{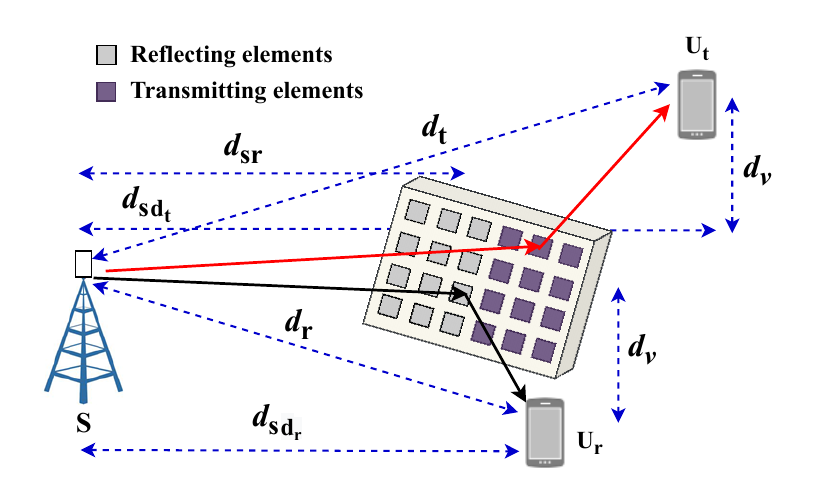}}
\caption{Illustration of (a) a two-antenna relay-supported network and, (b) STAR-RIS-supported network.}
\label{figure_1}
\end{figure*}

\subsection{FD-DF Relaying Transmission}
In FD relaying, as shown in Fig. \ref{figure_1a}, the two-antenna R receives and forwards the wireless signals simultaneously on the same frequency. In this relaying protocol, the communication is degraded by the residual loop-back interference (also known as “self-interference”) from the R transmission mode to the reception mode. At the same instant, S transmits $\sqrt{p_{1}} x$ to R, which is also overheard by D, and the two-antenna R receives $y_{\mathrm{r}}$ and transmits $\sqrt{p_{2}} t$ to D. The received signals at R and D are expressed, respectively, as \cite{5961159}
\begin{equation}\label{equ1}
y_{\mathrm{r}} = h_{\mathrm{sr}} \sqrt{p_{1}} x + h_{\mathrm{li}} \sqrt{p_{2}} t + n_{\mathrm{r}},
\end{equation}
and
\begin{equation}\label{equ2}
y_{\mathrm{d}} = h_{\mathrm{rd}} \sqrt{p_{2}} t + h_{\mathrm{sd}} \sqrt{p_{1}} x + n_{\mathrm{d}},
\end{equation}
where $p_{1}$ and $p_{2}$ denote the transmit powers of S and R, respectively, $x$ is the information signal with unit power, $h_{\mathrm{sr}} \in \mathbb{C}$, $h_{\mathrm{rd}} \in \mathbb{C}$ and $h_{\mathrm{sd}} \in \mathbb{C}$ represent the S-R, R-D and S-D channel coefficients, respectively, $h_{\mathrm{li}} \in \mathbb{C}$ refers to the residual loop interference channel, while $n_{\mathrm{r}} \sim \mathcal{N}\mathcal{C}\left(0, \sigma_{\mathrm{r}}^{2}\right)$ and $n_{\mathrm{d}} \sim \mathcal{N}\mathcal{C}\left(0, \sigma_{\mathrm{d}}^{2}\right)$ are the receiver noises at R and D, respectively, each modeled as a zero-mean complex Gaussian random variable with variances $\sigma_{\mathrm{r}}^{2}$ and $\sigma_{\mathrm{d}}^{2}$.

\begin{lemma} The achievable rate of the FD-DF\footnote{Unlike HD-DF relaying, the data rate at D of the FD-DF relaying does not include the 1/2-prelog penalty, where R is affected by the residual loop interference, while D is impaired by the interference generated by the simultaneous transmission of S and R.} relaying is
\begin{equation}\label{eq_rate_FD_DF}
R_{\mathrm{d}}^{\mathrm{FD-DF}} = \log _{2}\left(1 + \min \left[\frac{p_{1} {\left|h_{\mathrm{sr}}\right|^{2}}}{p_{2} {\left|h_{\mathrm{li}}\right|^{2}} + \sigma_{\mathrm{r}}^{2}}, \frac{p_{2} {\left|h_{\mathrm{rd}}\right|^{2}}}{p_{1} {\left|h_{\mathrm{sd}}\right|^{2}} + \sigma_{\mathrm{d}}^{2}}\right] \right).
\end{equation}
Using the following notations, $\beta_{\mathrm{sr}} = {\left|h_{\mathrm{sr}}\right|^{2}}$, $ \beta_{\mathrm{li}} = {\left|h_{\mathrm{li}}\right|^{2}}$, $\beta_{\mathrm{rd}} = {\left|h_{\mathrm{rd}}\right|^{2}}$, and $ \beta_{\mathrm{sd}} = {\left|h_{\mathrm{sd}}\right|^{2}}$, the rate in \eqref{eq_rate_FD_DF} can be written in terms of the channel gains as
\begin{equation}\label{rate_FD_DF}
R_{\mathrm{d}}^{\mathrm{FD-DF}} = \log _{2}\left(1 + \min \left[\frac{p_{1} \beta_{\mathrm{sr}}}{p_{2} \beta_{\mathrm{li}} + \sigma^{2}}, \frac{p_{2} \beta_{\mathrm{rd}}}{p_{1} \beta_{\mathrm{sd}} + \sigma^{2}}\right] \right),
\end{equation} 
where $\beta_{\mathrm{sr}}$, $\beta_{\mathrm{rd}}$, and $\beta_{\mathrm{sd}}$ are the channel gains of the S-R, R-D, and the direct S-D links, respectively, while $\beta_{\mathrm{li}}$ is the residual loop interference channel gain.
\end{lemma}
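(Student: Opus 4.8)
The plan is to derive \eqref{eq_rate_FD_DF} from the two signal models \eqref{equ1} and \eqref{equ2} using the standard information-theoretic characterization of the decode-and-forward protocol: the end-to-end rate equals the minimum of the largest rate that R can reliably decode from $y_{\mathrm{r}}$ and the largest rate that D can reliably decode from $y_{\mathrm{d}}$, where in each case the non-desired terms are treated as additional Gaussian noise and Gaussian codebooks are assumed. I would organize the argument in three short steps.

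First, I would analyze the relay link. In \eqref{equ1} the term $h_{\mathrm{li}}\sqrt{p_{2}}t$ is the residual loop interference that survives cancellation; since $t$ is the independently generated, unit-power retransmitted codeword and is statistically independent of the desired symbol $x$, R treats $h_{\mathrm{li}}\sqrt{p_{2}}t+n_{\mathrm{r}}$ as a composite noise term of power $p_{2}|h_{\mathrm{li}}|^{2}+\sigma_{\mathrm{r}}^{2}$. With a Gaussian input $x$ this yields the achievable rate $\log_{2}\!\bigl(1+\tfrac{p_{1}|h_{\mathrm{sr}}|^{2}}{p_{2}|h_{\mathrm{li}}|^{2}+\sigma_{\mathrm{r}}^{2}}\bigr)$, which is the first SINR argument inside the $\min$.

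Second, I would analyze the destination link. Because R operates in FD mode, at any channel use the forwarded symbol carried by $t$ corresponds to a previously decoded source symbol, whereas the direct-path term $h_{\mathrm{sd}}\sqrt{p_{1}}x$ in \eqref{equ2} carries the current (different) source symbol; hence no coherent combining at D is possible and $h_{\mathrm{sd}}\sqrt{p_{1}}x+n_{\mathrm{d}}$ is treated as noise of power $p_{1}|h_{\mathrm{sd}}|^{2}+\sigma_{\mathrm{d}}^{2}$, giving the achievable rate $\log_{2}\!\bigl(1+\tfrac{p_{2}|h_{\mathrm{rd}}|^{2}}{p_{1}|h_{\mathrm{sd}}|^{2}+\sigma_{\mathrm{d}}^{2}}\bigr)$, the second SINR argument. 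Taking the minimum of the two rates—so that R can recover the message and D can recover R's re-encoding of it—gives \eqref{eq_rate_FD_DF}; the absence of the $\tfrac12$ pre-log follows because, unlike HD-DF, transmission and reception occupy the same time--frequency resources, as noted in the footnote. Finally, introducing $\beta_{\mathrm{sr}},\beta_{\mathrm{li}},\beta_{\mathrm{rd}},\beta_{\mathrm{sd}}$ and setting $\sigma^{2}:=\sigma_{\mathrm{r}}^{2}=\sigma_{\mathrm{d}}^{2}$ restates the expression as \eqref{rate_FD_DF}.

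The main obstacle, and the only point that genuinely needs care, is the modeling justification in the second step: arguing that the direct S--D contribution must be treated as interference rather than exploited as a diversity branch at D. This rests on the one-symbol processing delay inherent to FD relaying—so that $t$ and the simultaneously received copy of $x$ are misaligned in time—together with the repetition-coding assumption used throughout the paper; I would state this explicitly and rely on the FD-DF signal model of \cite{5961159}. Everything else reduces to a direct substitution.
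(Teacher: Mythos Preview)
Your proposal is correct and follows the standard FD-DF argument: compute the SINR at R treating the residual loop term as noise, compute the SINR at D treating the misaligned direct-path symbol as noise, and take the minimum. The paper, however, does not actually derive this lemma; its proof consists of a single sentence citing the expression as a classical result from \cite{5961159} together with the assumption $\sigma_{\mathrm{r}}^{2}=\sigma_{\mathrm{d}}^{2}=\sigma^{2}$. So your write-up is more detailed than what the paper provides, but it reconstructs precisely the argument that the cited reference contains, including the key modeling point you flag in the second step (the processing delay at the FD relay forces D to treat the simultaneous S--D term as interference rather than as a diversity branch).
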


\begin{proof}
This is a classical result, as presented in \cite[Eq.~(11)]{5961159}, where $\sigma_{\mathrm{r}}^{2} = \sigma_{\mathrm{d}}^{2} = \sigma^{2}$ is assumed.
\end{proof}

\subsection{STAR-RIS-supported Transmission}
As depicted in Fig. \ref{figure_1b}, the STAR-RIS is considered where the S communicates simultaneously with two single-antenna Ds with the aid of a STAR-RIS that consists of $N_\mathrm{ref}$ passive elements. The two Ds are placed in the transmission ($\text{U}_{\text{t}}$) and reflection ($\text{U}_{\text{r}}$) zones, deployed on opposite sides of the STAR-RIS. The mode switching (MS) protocol is used by STAR-RIS allowing each element to operate in either full transmission mode or full reflection mode. As a result, the STAR-RIS is divided into two parts, the first part comprises $N_\text{t}$ transmitting elements, while the second one contains $N_\text{r}$ reflecting elements \cite{REF_Aldababsa_10}. In this setup, the diagonal reflection and transmission matrices can be defined as $\boldsymbol{\Phi}_{\text{r}, \text{d}} \triangleq \alpha_\text{r} \operatorname {diag} \left(\zeta e^{j \theta_{\text{r},1}}, \ldots, \zeta e^{j \theta_{\text{r},{N}_{\text{r}}}}\right)$ and $\boldsymbol{\Phi}_{\text{t}, \text{d}} \triangleq \alpha_\text{t} \operatorname {diag} \left({\sqrt{1-{\zeta}^2}}e^{j \theta_{\text{t},1}}, \ldots, {\sqrt{1-{\zeta}^2}} e^{j \theta_{\text{t}, {N}_{\text{t}}}}\right)$, respectively, where $\zeta \in$ [0, 1] is the reflection-to-transmission power ratio, $\alpha_\text{r}$ and $\alpha_\text{t} \in$ (0, 1] are the fixed amplitude coefficients, while $\theta_{\text{r},1}, \ldots, \theta_{\text{r}, N_\text{r}}$ and $\theta_{\text{t},1}, \ldots, \theta_{\text{t}, N_\text{t}}$ are the phase-shift parameters that the STAR-RIS is capable of correctly adjusting. The received signals at $\text{U}_{\text{r}}$ and $\text{U}_{\text{t}}$ in the reflection and transmission zones, respectively, are
\begin{equation}\label{ref_received}
y_\mathrm{d}^{\mathrm{ref}}=\left(h_{\mathrm{s {d_{ref}}}} + \mathbf{h}_{\mathrm{s {r_{ref}}}}^{\mathrm{T}} \boldsymbol{\Phi}_{\text{r}, \text{d}} \mathbf{h}_{\mathrm{r d_\mathrm{ref}}}\right) \sqrt{p} x + n_{\text{d}_\mathrm{ref}},
\end{equation}
and
\begin{equation}\label{tra_received}
y_\mathrm{d}^{\mathrm{tra}} = \left(h_{\mathrm{s {d_{tra}}}} + \mathbf{h}_{\mathrm{s {r_{tra}}}}^{\mathrm{T}} \boldsymbol{\Phi}_{\text{t}, \text{d}} \mathbf{h}_{\mathrm{r d_\mathrm{tra}}}\right) \sqrt{p} x + n_{\text{d}_\mathrm{tra}},
\end{equation}
where $p$ denotes the transmit power, $h_{\mathrm{s {d_{ref}}}} \in \mathbb{C}$ and $h_{\mathrm{s {d_{tra}}}} \in \mathbb{C}$
are the channels between the S and Ds in the reflection and transmission zones, respectively, $\mathbf{h}_{\mathrm{s {r_{ref}}}} \in \mathbb{C}^{{N_\text{r}}}$ and $\mathbf{h}_{\mathrm{s {r_{tra}}}} \in \mathbb{C}^{{N_\text{t}}}$ denote the channels between S and the reflection and transmission parts in the STAR-RIS, respectively, $\mathbf{h}_{\mathrm{r d_\mathrm{ref}}} \in \mathbb{C}^{N_\text{r}}$ and $\mathbf{h}_{\mathrm{r d_\mathrm{tra}}} \in \mathbb{C}^{N_\text{t}}$ denote the channel between each part in the STAR-RIS and Ds in the reflection and transmission zones, respectively, whereas $n_{\text{d}_\mathrm{ref}} \sim \mathcal{N}\mathcal{C}\left(0, {\sigma_{\text{d}_\mathrm{ref}}^{2}}\right)$ and $n_{\text{d}_\mathrm{tra}} \sim \mathcal{N}\mathcal{C}\left(0, {\sigma_{\text{d}_\mathrm{tra}}^{2}}\right)$ are the receiver noises at Ds in the reflection and transmission zones, respectively. It is assumed that $\sigma_{\text{d}_\mathrm{ref}}^{2} = \sigma_{\text{d}_\mathrm{tra}}^{2} = \sigma^{2}$.

\begin{lemma} The corresponding rates of $\text{U}_{\text{r}}$ and $\text{U}_{\text{t}}$ are given, respectively, as
\begin{equation}\label{R_D_ref}
R_{\mathrm{d}}^{\mathrm{ref}} = \max _{\theta_{1}, \ldots, \theta_{N_\mathrm{r}}} \log_{2}\left(1 + \frac{p \left|h_{\mathrm{s {d_{ref}}}} + \mathbf{h}_{\mathrm{s {r_{ref}}}}^{\mathrm{T}} \boldsymbol{\Phi}_{\mathrm{r}, \mathrm{d}} \mathbf{h}_{\mathrm{r d_\mathrm{ref}}}\right|^{2}}{\sigma^{2}} \right),
\end{equation}
and
\begin{equation}\label{R_D_tra}
R_{\mathrm{d}}^{\mathrm{tra}} = \max_{\theta_{1}, \ldots, \theta_{N_\mathrm{t}}} \log_{2}\left(1 + \frac{p \left|h_{\mathrm{s {d_{tra}}}} + \mathbf{h}_{\mathrm{s {r_{tra}}}}^{\mathrm{T}}  \boldsymbol{\Phi}_{\mathrm{t}, \mathrm{d}} \mathbf{h}_{\mathrm{r d_\mathrm{tra}}}\right|^{2}}{\sigma^{2}} \right).
\end{equation}
These rates can be written in terms of the channel gains, respectively as
\begin{equation}\label{rate_ref}
\begin{aligned}
R_{\mathrm{d}}^{\mathrm{ref}} =
\log_{2}\left(1 + \frac{p \left(\sqrt{\beta_{\mathrm{s d_\mathrm{ref}}}} + N_\mathrm{r} \alpha_\mathrm{r} \zeta \sqrt{\beta_{\mathrm{sr}} \beta_{\mathrm{r d_\mathrm{ref}}}}\right)^{2}}{\sigma^{2}}\right),
\end{aligned}
\end{equation}
and
\begin{equation}\label{rate_tra}
R_{\mathrm{d}}^{\mathrm{tra}} =
\log_{2} \left(1 + \frac{p \left(\sqrt{\beta_{\mathrm{s d_\mathrm{tra}}}} + N_\mathrm{t} \alpha_\mathrm{t} \sqrt{(1-{\zeta}^2)\beta_{\mathrm{sr}} \beta_{\mathrm{r d_\mathrm{tra}}}}\right)^{2}}{\sigma^{2}} \right),
\end{equation}
where $\beta_{\mathrm{r d_\mathrm{ref}}}$ and $\beta_{\mathrm{r d_\mathrm{tra}}}$ denote the channel gains of the links between the S and Ds in the reflection and transmission zones, respectively.
\end{lemma}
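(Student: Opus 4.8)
The plan is to turn each maximization in \eqref{R_D_ref}--\eqref{R_D_tra} into a coherent-combining (phase-alignment) argument, and then collapse the resulting per-element sum using the common large-scale fading across the surface. I would carry out the reflection zone in full; the transmission zone follows verbatim after the obvious substitutions, and the two zones decouple because the MS protocol assigns disjoint element subsets to them, so the phases chosen for one zone do not affect the other.

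\emph{Exposing the effective channel.} Factoring $\boldsymbol{\Phi}_{\mathrm{r},\mathrm{d}} = \alpha_\mathrm{r}\zeta\operatorname{diag}\big(e^{j\theta_{\mathrm{r},1}},\ldots,e^{j\theta_{\mathrm{r},N_\mathrm{r}}}\big)$ and writing $[\cdot]_n$ for the $n$-th vector entry, the cascaded term in \eqref{R_D_ref} equals $\mathbf{h}_{\mathrm{sr_{ref}}}^{\mathrm{T}}\boldsymbol{\Phi}_{\mathrm{r},\mathrm{d}}\mathbf{h}_{\mathrm{rd_{ref}}} = \alpha_\mathrm{r}\zeta\sum_{n=1}^{N_\mathrm{r}}[\mathbf{h}_{\mathrm{sr_{ref}}}]_n[\mathbf{h}_{\mathrm{rd_{ref}}}]_n e^{j\theta_{\mathrm{r},n}}$. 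Since the rate is a strictly increasing function of the modulus of the effective channel, it suffices to maximize the modulus of $h_{\mathrm{sd_{ref}}}+\alpha_\mathrm{r}\zeta\sum_n[\mathbf{h}_{\mathrm{sr_{ref}}}]_n[\mathbf{h}_{\mathrm{rd_{ref}}}]_n e^{j\theta_{\mathrm{r},n}}$ over the free phases $\theta_{\mathrm{r},1},\ldots,\theta_{\mathrm{r},N_\mathrm{r}}\in[0,2\pi)$.

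\emph{Coherent combining.} By the triangle inequality this modulus is at most $|h_{\mathrm{sd_{ref}}}|+\alpha_\mathrm{r}\zeta\sum_n|[\mathbf{h}_{\mathrm{sr_{ref}}}]_n|\,|[\mathbf{h}_{\mathrm{rd_{ref}}}]_n|$, and the bound is attained by choosing $\theta_{\mathrm{r},n} = \angle h_{\mathrm{sd_{ref}}} - \angle[\mathbf{h}_{\mathrm{sr_{ref}}}]_n - \angle[\mathbf{h}_{\mathrm{rd_{ref}}}]_n$, which rotates every cascaded path to be co-phased with the direct path; the choice is feasible because the phase shifts are unconstrained over $[0,2\pi)$. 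Hence the optimal squared effective gain equals $\big(|h_{\mathrm{sd_{ref}}}|+\alpha_\mathrm{r}\zeta\sum_n|[\mathbf{h}_{\mathrm{sr_{ref}}}]_n|\,|[\mathbf{h}_{\mathrm{rd_{ref}}}]_n|\big)^2$. I expect this to be the only non-mechanical point of the proof: one must verify that a single assignment of phases makes the triangle-inequality bound tight for all $N_\mathrm{r}$ terms simultaneously, which is precisely what the stated choice achieves.

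\emph{Large-scale fading and substitution.} Finally I would invoke the adopted channel model, under which all $N_\mathrm{r}$ reflecting elements share a common path loss, i.e. $|[\mathbf{h}_{\mathrm{sr_{ref}}}]_n|^2=\beta_{\mathrm{sr}}$ and $|[\mathbf{h}_{\mathrm{rd_{ref}}}]_n|^2=\beta_{\mathrm{rd_{ref}}}$ for every $n$, together with $|h_{\mathrm{sd_{ref}}}|^2=\beta_{\mathrm{sd_{ref}}}$. The per-element sum then collapses to $N_\mathrm{r}\sqrt{\beta_{\mathrm{sr}}\beta_{\mathrm{rd_{ref}}}}$, and substituting back into $R_{\mathrm{d}}^{\mathrm{ref}}$ yields \eqref{rate_ref}. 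Rerunning the same three steps for the transmission zone --- over the disjoint set of $N_\mathrm{t}$ elements, with $\alpha_\mathrm{t}$ in place of $\alpha_\mathrm{r}$ and $\sqrt{1-\zeta^2}$ in place of $\zeta$ --- gives \eqref{rate_tra}, which completes the argument.
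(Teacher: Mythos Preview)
Your proposal is correct and mirrors the paper's own proof: both expand the cascaded term elementwise, select the co-phasing shifts $\theta_{n}=\arg(h_{\mathrm{sd}})-\arg\big([\mathbf{h}_{\mathrm{sr}}]_n[\mathbf{h}_{\mathrm{rd}}]_n\big)$ to attain the triangle-inequality bound, and then collapse the sum via the common per-element path losses $\sum_n|[\mathbf{h}_{\mathrm{sr}}]_n[\mathbf{h}_{\mathrm{rd}}]_n|=N\sqrt{\beta_{\mathrm{sr}}\beta_{\mathrm{rd}}}$. Your explicit remark that the MS protocol decouples the two zones is a small addition, but otherwise the arguments are the same.
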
 

\begin{proof} 
The achievable rates expressed in \eqref{R_D_ref} and \eqref{R_D_tra} are achieved from the capacity of additive white Gaussian noise channel for given phase-shifting matrices, $\boldsymbol{\Phi}_{\text{r}, \text{d}}$ and $\boldsymbol{\Phi}_{\text{t}, \text{d}}$, respectively. It is worth mentioning that $\mathbf{h}_{\mathrm{s {r_{ref}}}}^{\mathrm{T}} \boldsymbol{\Phi}_{\text{r}, \text{d}} \mathbf{h}_{\mathrm{r d_\mathrm{ref}}}=\alpha_\text{r} \zeta \sum_{n_\text{r}=1}^{N_\text{r}} e^{j \theta_{n_\text{r}}}\left[\mathbf{h}_{\mathrm{s {r_{ref}}}}\right]_{n_\text{r}}\left[\mathbf{h}_{\mathrm{r d_\mathrm{ref}}}\right]_{n_\text{r}}$ and $\mathbf{h}_{\mathrm{s {r_{tra}}}}^{\mathrm{T}} \boldsymbol{\Phi}_{\text{t}, \text{d}} \mathbf{h}_{\mathrm{r d_\mathrm{tra}}} = \alpha_\text{t} {\sqrt{1-{\zeta}^2}} \sum_{n_\text{t}=1}^{N_\text{t}}e^{j \theta_{n_\text{t}}}\left[\mathbf{h}_{\mathrm{s {r_{tra}}}}\right]_{n_\text{t}}\left[\mathbf{h}_{\mathrm{r d_\mathrm{tra}}}\right]_{n_\text{t}}$, 
where $\left[\mathbf{h}_{\mathrm{s {r_{ref}}}}\right]_{n_\text{r}}$ and $\left[\mathbf{h}_{\mathrm{r d_\mathrm{ref}}}\right]_{n_\text{r}}$ are the $n_\text{r}$-th components, while $\left[\mathbf{h}_{\mathrm{s {r_{tra}}}}\right]_{n_\text{t}}$ and $\left[\mathbf{h}_{\mathrm{r d_\mathrm{tra}}}\right]_{n_\text{t}}$ are the $n_\text{t}$-th components. In addition, the phase-shifts are chosen as $\theta_{n_\text{r}}=\arg \left(h_{\mathrm{s d_\mathrm{ref}}}\right)-\arg \left(\left[\mathbf{h}_{\mathrm{s {r_{ref}}}}\right]_{n_\text{r}}\left[\mathbf{h}_{\mathrm{r d_\mathrm{ref}}}\right]_{n_\text{r}}\right)$ and $\theta_{n_\text{t}}=\arg \left(h_{\mathrm{s d_\mathrm{tra}}}\right)-\arg \left(\left[\mathbf{h}_{\mathrm{s {r_{tra}}}}\right]_{n_\text{t}}\left[\mathbf{h}_{\mathrm{r d_\mathrm{tra}}}\right]_{n_\text{t}}\right)$ so that the maximum rate is reached. For simplicity, the following notations, $\sum_{n_\text{r}=1}^{N_\text{r}}\left|\left[\mathbf{h}_{\mathrm{s {r_{ref}}}}\right]_{n_\text{r}}\left[\mathbf{h}_{\mathrm{r d_\mathrm{ref}}}\right]_{n_\text{r}}\right|={N_\text{r}}\sqrt{\beta_{\mathrm{sr}} \beta_{\mathrm{r d_\mathrm{ref}}}}$, $\sum_{n_\text{t}=1}^{N_\text{t}}\left|\left[\mathbf{h}_{\mathrm{s {r_{tra}}}}\right]_{n_\text{t}}\left[\mathbf{h}_{\mathrm{r d_\mathrm{tra}}}\right]_{n_\text{t}}\right|={N_\text{t}}\sqrt{\beta_{\mathrm{sr}} \beta_{\mathrm{r d_\mathrm{tra}}}}$, ${\left|h_{\mathrm{s d_\mathrm{ref}}}\right|^{2}} = \beta_{\mathrm{s d_\mathrm{ref}}}$ and ${\left|h_{\mathrm{s d_\mathrm{tra}}}\right|^{2}} = \beta_{\mathrm{s d_\mathrm{tra}}}$, are used and \eqref{rate_ref} and \eqref{rate_tra} are obtained.
\end{proof}

\subsection{Rate Performance Comparison}
In the following, Proposition \ref{Proposition_1} provides the optimal transmit power for FD-DF relaying. Propositions \ref{Proposition_2} and \ref{Proposition_3} present the conditions under which STAR-RIS achieves higher rates than HD-DF relaying and FD-DF relaying, respectively.

\begin{proposition}\label{Proposition_1}
By selecting $p_1$ and $p_2$ under the constraint $p=\frac{p_{1}+p_{2}}{2}$, the FD-DF relaying rate in \eqref{rate_FD_DF} is maximized by selecting the R transmit power, $p_2$, in terms of the average power $p$, as given in \eqref{p_2}, while $p_1$ is then obtained as $2p-p_2$.
\end{proposition}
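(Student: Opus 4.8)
The plan is to use the monotonicity of $\log_2(1+\cdot)$ to reduce the optimization of \eqref{rate_FD_DF} to maximizing the inner $\min[\cdot,\cdot]$ over the power split. I would write the total-power constraint as $p_1+p_2=2p$, substitute $p_1=2p-p_2$, and regard the two SINR terms as functions of the single variable $p_2\in(0,2p)$,
\[
f_1(p_2)=\frac{(2p-p_2)\,\beta_{\mathrm{sr}}}{p_2\,\beta_{\mathrm{li}}+\sigma^{2}},\qquad
f_2(p_2)=\frac{p_2\,\beta_{\mathrm{rd}}}{(2p-p_2)\,\beta_{\mathrm{sd}}+\sigma^{2}},
\]
so that the rate equals $\log_2\!\bigl(1+\min[f_1(p_2),f_2(p_2)]\bigr)$ and it suffices to maximize $g(p_2):=\min[f_1(p_2),f_2(p_2)]$ on $(0,2p)$.

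The second step is to establish the monotone structure. By inspection (or a one-line derivative check), $f_1$ has a decreasing numerator and an increasing positive denominator, hence is strictly decreasing; symmetrically $f_2$ is strictly increasing. Since $f_1(0^{+})=2p\beta_{\mathrm{sr}}/\sigma^{2}>0=f_2(0)$ and $f_2(2p^{-})=2p\beta_{\mathrm{rd}}/\sigma^{2}>0=f_1(2p)$, continuity forces $f_1-f_2$ to change sign exactly once, at a unique interior point $p_2^{\star}$ with $f_1(p_2^{\star})=f_2(p_2^{\star})$. For $p_2<p_2^{\star}$ we then have $f_2(p_2)<f_2(p_2^{\star})$ and $\min=f_2(p_2)$, and for $p_2>p_2^{\star}$ we have $\min=f_1(p_2)<f_1(p_2^{\star})$; hence $g$ is strictly increasing then strictly decreasing, and the rate is maximized precisely at $p_2^{\star}$.

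The third step is to solve $f_1=f_2$ in closed form. Cross-multiplying gives
\[
(2p-p_2)\beta_{\mathrm{sr}}\bigl[(2p-p_2)\beta_{\mathrm{sd}}+\sigma^{2}\bigr]=p_2\,\beta_{\mathrm{rd}}\bigl[p_2\,\beta_{\mathrm{li}}+\sigma^{2}\bigr],
\]
which, after expanding in $p_2$, is the quadratic $a\,p_2^{2}+b\,p_2+c=0$ with $a=\beta_{\mathrm{sr}}\beta_{\mathrm{sd}}-\beta_{\mathrm{rd}}\beta_{\mathrm{li}}$, $b=-\bigl(4p\,\beta_{\mathrm{sr}}\beta_{\mathrm{sd}}+(\beta_{\mathrm{sr}}+\beta_{\mathrm{rd}})\sigma^{2}\bigr)$, and $c=2p\,\beta_{\mathrm{sr}}\bigl(2p\,\beta_{\mathrm{sd}}+\sigma^{2}\bigr)$. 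Because Step~2 shows the crossing lies in $(0,2p)$, exactly one root of this quadratic is admissible, and that root is the expression reported in \eqref{p_2}; then $p_1=2p-p_2$. The degenerate case $a=0$ (i.e.\ $\beta_{\mathrm{sr}}\beta_{\mathrm{sd}}=\beta_{\mathrm{rd}}\beta_{\mathrm{li}}$) collapses the equation to a linear one and should be recorded separately.

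I expect the main obstacle to be this last step: selecting the correct sign in the quadratic formula and certifying that the chosen root is the one in $(0,2p)$. Since $\mathrm{sgn}(a)$ depends on whether $\beta_{\mathrm{sr}}\beta_{\mathrm{sd}}\gtrless\beta_{\mathrm{rd}}\beta_{\mathrm{li}}$, the cleanest justification is to invoke the sign-change property from Step~2 — which guarantees a single admissible root regardless of $\mathrm{sgn}(a)$ — rather than attempting a direct discriminant/root-location analysis. The monotonicity and unique-crossing arguments are otherwise routine.
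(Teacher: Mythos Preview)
Your proposal is correct and follows essentially the same route as the paper: set the two SINR terms equal under the constraint $p_1+p_2=2p$ and solve the resulting quadratic in $p_2$, which yields \eqref{p_2}. Your monotonicity/unique-crossing argument and the remarks on root selection and the degenerate case $a=0$ actually supply justification that the paper's proof simply asserts, so your write-up is, if anything, more complete than the original.
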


\begin{figure*}[t!]
\begin{equation}\label{p_2}
p_2 = \frac{ - {\sigma^2} \left(\beta_{\text{rd}} + \beta_{\text{sr}} \right) - 4 \beta_{\text{sr}} \beta_{\text{sd}} p + \sqrt{{\sigma^4} \left(\beta_{\text{rd}} + \beta_{\text{sr}}\right)^2 
+ 16 \beta_{\text{sr}} \beta_{\text{sd}} \beta_{\text{rd}}\beta_{\text{li}} p^2 
+ 8 {\sigma^2} \beta_{\text{sr}} \beta_{\text{rd}} \left( \beta_{\text{sd}} + \beta_{\text{li}} \right) p }}
{2 \left(\beta_{\text{rd}} \beta_{\text{li}} - \beta_{\text{sr}} \beta_{\text{sd}}\right)}.
\end{equation}
\hrule
\end{figure*}

\begin{proof}
In order to maximize the rate in \eqref{rate_FD_DF}, $p_1$ and $p_2$ are optimally chosen such that $\frac{p_{1} \beta_{\mathrm{sr}}}{p_{2} \beta_{\mathrm{li}} + \sigma^{2}} = \frac{p_{2} \beta_{\mathrm{rd}}}{p_{1} \beta_{\mathrm{sd}} + \sigma^{2}}$ while maintaining the same average power $p$ as in the STAR-RIS case, under the constraint $p=\frac{p_{1}+p_{2}}{2}$. 
Using the well-known quadratic formula, $p_2$ can be mathematically found in terms of $p$, as in \eqref{rate_FD_DF}, while $p_1$ can be found as $2p-p_2$. Thus, the maximum achievable rate of the FD-DF relaying can be obtained as
\begin{equation}\label{rate_FD_DF_cal}
R_{\mathrm{d}}^{\text{FD-DF}} = \log _{2}\left(1 + \frac{p_{2} \beta_{\mathrm{rd}}}{(2p - p_{2}) \beta_{\mathrm{sd}} + \sigma^{2}} \right).
\end{equation}
By substituting \eqref{rate_FD_DF} into \eqref{rate_FD_DF_cal}, the average power $p$ required to achieve a target data rate ${\bar{R}}$ can be found by taking the positive root according to the following formula
\begin{equation}
p = \frac{-b + \sqrt{b^2 - 4ac}}{2a},
\end{equation}
where $a$, $b$ and $c$ are given, respectively, as
\begin{equation}
a = {\left(C_6 C_8 + C_5 C_7 \right)^2 - C_7^2 C_2},
\end{equation}
\begin{equation}
b = {2\left(C_6 C_8 + C_5 C_7\right) \left(C_6 C_9 + C_4 C_7\right) - C_7^2 C_3},
\end{equation}
and
\begin{equation}
c = \left(C_6 C_9 + C_4 C_7 \right)^2 - C_7^2 C_1,
\end{equation}
where $C_1 = \big(\sigma^2(\beta_{\text{rd}} + \beta_{\text{sr}}) \big)^2$, $C_2 = 16\beta_{\text{sr}}\beta_{\text{sd}}\beta_{\text{rd}}\beta_{\text{li}}$, $C_3 = 8\beta_{\text{sr}}\beta_{\text{rd}}\sigma^2 \big(\beta_{\text{sd}} + \beta_{\text{li}}\big)$, $C_4 = \sigma^2 \big(\beta_{\text{rd}} + \beta_{\text{sr}} \big)$, $C_5 = 4\beta_{\text{sr}}\beta_{\text{sd}}$, $C_6 = 2 \big(\beta_{\text{rd}}\beta_{\text{li}} - \beta_{\text{sr}}\beta_{\text{sd}} \big)$, $C_7 = \big(2^{\bar{R}}-1 \big)\beta_{\text{sd}} + \beta_{\text{rd}}$, $C_8 = 2 \big(2^{\bar{R}}-1 \big)\beta_{\text{sd}}$ and $C_9 = \big(2^{\bar{R}}-1 \big)\sigma^2$.
\end{proof}

\begin{proposition}\label{Proposition_2}
The highest rate can be provided by the STAR-RIS-supported transmission at Ds in the reflection and transmission zones for any $N_\mathrm{r} \geq 1$ and $N_\mathrm{t} \geq 1$ if $\beta_{\mathrm{s d_{ref}}}>\beta_{\mathrm{sr}}$ and $\beta_{\mathrm{s d_{tra}}}>\beta_{\mathrm{sr}}$, respectively. On the other hand, the STAR-RIS-supported network gives higher rate at Ds in the reflection and transmission zones than the rate of the HD-DF relaying when $\beta_{\mathrm{s d_{ref}}} \leq \beta_{\mathrm{sr}}$ and $\beta_{\mathrm{s d_{tra}}} \leq \beta_{\mathrm{sr}}$, respectively, if and only if
\begin{equation}\label{Nr_max}
N_\mathrm{r}>\frac{\sqrt{\left(\sqrt{1+\frac{2 p \beta_{\mathrm{sr}} \beta_{\mathrm{rd}}}{\left(\beta_{\mathrm{sr}}+\beta_{\mathrm{rd}}-\beta_{\mathrm{sd}}\right) \sigma^{2}}}-1\right) \frac{\sigma^{2}}{p}}-\sqrt{\beta_{\mathrm{s d_\mathrm{ref}}}}}{\alpha_\text{r} \zeta \sqrt{{\beta_{\mathrm{sr}}}{\beta_{\mathrm{r d_\mathrm{ref}}}} }},
\end{equation}
and
\begin{equation}\label{Nt_max}
N_\mathrm{t}>\frac{\sqrt{\left(\sqrt{1+\frac{2 p \beta_{\mathrm{sr}} \beta_{\mathrm{rd}}}{\left(\beta_{\mathrm{sr}}+\beta_{\mathrm{rd}}-\beta_{\mathrm{sd}}\right) \sigma^{2}}}-1\right) \frac{\sigma^{2}}{p}}-\sqrt{\beta_{\mathrm{s d_\mathrm{tra}}}}}{\alpha_\text{t} \sqrt{1-{\zeta}^2} \sqrt{{\beta_{\mathrm{sr}}}{\beta_{\mathrm{r d_\mathrm{tra}}}} }}.
\end{equation}
\end{proposition}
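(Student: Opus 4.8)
Both parts follow from a single scalar comparison of effective SNRs, so the plan is: (i) put the HD-DF rate in closed form, (ii) reduce the inequality $R_{\mathrm{d}}^{\mathrm{ref}}>R_{\mathrm{d}}^{\mathrm{HD-DF}}$ to an inequality in $N_{\mathrm{r}}$, and (iii) read off the two regimes. For step (i) I would recall the HD-DF rate adopted from \cite[Eq.~(14)]{REF_Emil_3}: with an equal time split, MRC of the direct and relayed paths at D, and the average-power budget $p=\tfrac{p_{1}+p_{2}}{2}$ used for the STAR-RIS, the rate equals $\min\big[\tfrac{1}{2}\log_2(1+\tfrac{p_{1}\beta_{\mathrm{sr}}}{\sigma^2}),\ \tfrac{1}{2}\log_2(1+\tfrac{p_{1}\beta_{\mathrm{sd}}+p_{2}\beta_{\mathrm{rd}}}{\sigma^2})\big]$. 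Equalizing the two arguments (optimal, exactly as in the proof of Proposition~\ref{Proposition_1}) gives $p_{1}=\tfrac{2p\beta_{\mathrm{rd}}}{\beta_{\mathrm{sr}}+\beta_{\mathrm{rd}}-\beta_{\mathrm{sd}}}$ in the relay-useful regime $\beta_{\mathrm{sr}}\ge\beta_{\mathrm{sd}}$, whence
\[
R_{\mathrm{d}}^{\mathrm{HD-DF}}=\tfrac{1}{2}\log_2\!\Big(1+\tfrac{2p\,\beta_{\mathrm{sr}}\beta_{\mathrm{rd}}}{(\beta_{\mathrm{sr}}+\beta_{\mathrm{rd}}-\beta_{\mathrm{sd}})\sigma^2}\Big)=\log_2(1+\gamma_{\mathrm{HD-DF}}),\quad \gamma_{\mathrm{HD-DF}}\triangleq\sqrt{1+\tfrac{2p\,\beta_{\mathrm{sr}}\beta_{\mathrm{rd}}}{(\beta_{\mathrm{sr}}+\beta_{\mathrm{rd}}-\beta_{\mathrm{sd}})\sigma^2}}-1,
\]
the second form moving the $1/2$ pre-log inside a square root. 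Since \eqref{rate_ref} is of the form $\log_2(1+\mathrm{SNR}_{\mathrm{ref}})$ with $\mathrm{SNR}_{\mathrm{ref}}=\tfrac{p}{\sigma^2}\big(\sqrt{\beta_{\mathrm{s d_\mathrm{ref}}}}+N_{\mathrm{r}}\alpha_{\mathrm{r}}\zeta\sqrt{\beta_{\mathrm{sr}}\beta_{\mathrm{r d_\mathrm{ref}}}}\big)^2$ and $\log_2$ is increasing, the comparison $R_{\mathrm{d}}^{\mathrm{ref}}>R_{\mathrm{d}}^{\mathrm{HD-DF}}$ is equivalent to $\mathrm{SNR}_{\mathrm{ref}}>\gamma_{\mathrm{HD-DF}}$, and likewise for the transmission zone via \eqref{rate_tra}.

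For step (ii), I would invert $\mathrm{SNR}_{\mathrm{ref}}>\gamma_{\mathrm{HD-DF}}$ for $N_{\mathrm{r}}$: multiplying by $\sigma^2/p$ and taking square roots (valid since both sides are nonnegative) gives $\sqrt{\beta_{\mathrm{s d_\mathrm{ref}}}}+N_{\mathrm{r}}\alpha_{\mathrm{r}}\zeta\sqrt{\beta_{\mathrm{sr}}\beta_{\mathrm{r d_\mathrm{ref}}}}>\sqrt{\gamma_{\mathrm{HD-DF}}\,\sigma^2/p}$, and dividing by $\alpha_{\mathrm{r}}\zeta\sqrt{\beta_{\mathrm{sr}}\beta_{\mathrm{r d_\mathrm{ref}}}}>0$ rearranges to exactly \eqref{Nr_max} once $\gamma_{\mathrm{HD-DF}}\,\sigma^2/p$ is spelled out. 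Repeating verbatim with $N_{\mathrm{t}}$, $\alpha_{\mathrm{t}}\sqrt{1-\zeta^2}$, $\beta_{\mathrm{s d_\mathrm{tra}}}$, $\beta_{\mathrm{r d_\mathrm{tra}}}$ in place of $N_{\mathrm{r}}$, $\alpha_{\mathrm{r}}\zeta$, $\beta_{\mathrm{s d_\mathrm{ref}}}$, $\beta_{\mathrm{r d_\mathrm{ref}}}$ yields \eqref{Nt_max}. This establishes the ``if and only if'' part of the statement for all $N_{\mathrm{r}},N_{\mathrm{t}}$.

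For step (iii), when $\beta_{\mathrm{s d_\mathrm{ref}}}>\beta_{\mathrm{sr}}$ I would show the right-hand side of \eqref{Nr_max} is strictly negative, so that it is met by every $N_{\mathrm{r}}\ge 1$. It suffices to bound $\gamma_{\mathrm{HD-DF}}\le \tfrac{p\beta_{\mathrm{sr}}}{\sigma^2}$, which follows from $1+\tfrac{2p\beta_{\mathrm{sr}}\beta_{\mathrm{rd}}}{(\beta_{\mathrm{sr}}+\beta_{\mathrm{rd}}-\beta_{\mathrm{sd}})\sigma^2}\le 1+\tfrac{2p\beta_{\mathrm{sr}}}{\sigma^2}\le\big(1+\tfrac{p\beta_{\mathrm{sr}}}{\sigma^2}\big)^2$, where the first inequality uses $\beta_{\mathrm{sr}}+\beta_{\mathrm{rd}}-\beta_{\mathrm{sd}}\ge\beta_{\mathrm{rd}}$ (equivalently, the HD-DF rate never exceeds its S-R-link term, which is capped because $p_{1}\le 2p$). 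Then $\gamma_{\mathrm{HD-DF}}\,\sigma^2/p\le\beta_{\mathrm{sr}}<\beta_{\mathrm{s d_\mathrm{ref}}}$, so the numerator of \eqref{Nr_max} is negative and $R_{\mathrm{d}}^{\mathrm{ref}}>R_{\mathrm{d}}^{\mathrm{HD-DF}}$ for all $N_{\mathrm{r}}\ge 1$; the same bound gives the transmission-zone claim, while STAR-RIS beats SISO trivially since $\mathrm{SNR}_{\mathrm{ref}}>\tfrac{p\beta_{\mathrm{s d_\mathrm{ref}}}}{\sigma^2}$ whenever one element is active.

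I do not anticipate a deep obstacle: once the HD-DF rate is in the closed form above, the argument is just monotonicity of $\log_2$ together with one quadratic rearrangement. The only point that needs care is that closed form itself, namely checking that the optimal power split is feasible ($p_{1},p_{2}\ge 0$, i.e., the relay-useful regime $\beta_{\mathrm{sr}}\ge\beta_{\mathrm{sd}}$ under which \eqref{Nr_max} and \eqref{Nt_max} are stated) and that $\gamma_{\mathrm{HD-DF}}$ as written is indeed the correct effective SNR. After that, the sign analysis of the threshold and the remaining algebra are routine.
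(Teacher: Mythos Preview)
Your approach is correct and essentially the same as the paper's: rewrite the $\tfrac{1}{2}$ pre-log of the HD-DF rate from \cite[Eq.~(14)]{REF_Emil_3} as $\log_2(1+\gamma_{\mathrm{HD\text{-}DF}})$ with $\gamma_{\mathrm{HD\text{-}DF}}=\sqrt{1+\cdot}-1$, compare SNRs by monotonicity, and invert for $N_{\mathrm r}$ (resp.\ $N_{\mathrm t}$) to obtain \eqref{Nr_max}--\eqref{Nt_max}. For the first regime the paper argues slightly differently, via the chain $R_{\mathrm{ref}}>R_{\mathrm{SISO}}>R_{\mathrm{HD\text{-}DF}}$ (relaying cannot help once the direct link beats the S--R link), whereas your quantitative bound on $\gamma_{\mathrm{HD\text{-}DF}}$ invokes $\beta_{\mathrm{sr}}+\beta_{\mathrm{rd}}-\beta_{\mathrm{sd}}\ge\beta_{\mathrm{rd}}$, which is exactly the inequality that fails when $\beta_{\mathrm{sd}}$ is identified with $\beta_{\mathrm{sd_{ref}}}>\beta_{\mathrm{sr}}$; the paper's route sidesteps that circularity, but the substance is the same.
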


\begin{proof}
Since $R_{\mathrm{ref}}(N_\text{r})>R_{\mathrm{SISO}}$ and $R_{\mathrm{tra}}(N_\text{t})>R_{\mathrm{SISO}}$ for $N_\text{r} \geq 1$ and $N_\text{t} \geq 1$, respectively, the STAR-RIS provides the highest data rate only if $R_{\mathrm{ref}}(N_\text{r})>$ $R_{\text{HD-DF}}$ and $R_{\mathrm{tra}}(N_\text{t})>$ $R_{\text{HD-DF}}$. These, however, occur only when $\beta_{\mathrm{s d_\mathrm{ref}}}>\beta_{\mathrm{sr}}$ and $\beta_{\mathrm{s d_\mathrm{tra}}}>\beta_{\mathrm{sr}}$ since $R_{\mathrm{SISO}}>R_{\mathrm{DF}}$. On the other hand, if $\beta_{\mathrm{s d_\mathrm{ref}}} \leq \beta_{\mathrm{sr}}$ and $\beta_{\mathrm{s d_\mathrm{tra}}} \leq \beta_{\mathrm{sr}}$, the inequalities $R_{\mathrm{ref}}(N_\text{r})>$ $R_{\text{HD-DF}}$ and $R_{\mathrm{tra}}(N_\text{t})>$ $R_{\text{HD-DF}}$ can be simplified, respectively, to \eqref{Nr_max} and \eqref{Nt_max} by utilizing 
\cite[Eq.~(14)]{REF_Emil_3}, \eqref{rate_ref} and \eqref{rate_tra}.
\end{proof}

\begin{proposition}\label{Proposition_3}
The STAR-RIS-supported system with Ds in the reflection and transmission zones, and the conventional RIS system provide higher data rate than the FD-DF relaying, respectively, if and only if
\begin{equation}\label{eq19}
N_\mathrm{r} >
\frac{
	\sqrt{
		\frac{
			p_{2}\,\sigma^{2}\,\beta_{\mathrm{rd}}
		}{
			p\Big( (2p - p_{2})\,\beta_{\mathrm{sd}} + \sigma^{2} \Big)
		}
	}
	-
	\sqrt{\beta_{\mathrm{sd,ref}}}
}{
	\alpha_\mathrm{r}\,\zeta\,\sqrt{ \beta_{\mathrm{sr}}\,\beta_{\mathrm{rd,ref}} }
},
\end{equation}
\begin{equation}\label{eq20}
N_\mathrm{t} >
\frac{
	\sqrt{
		\frac{
			p_{2}\,\sigma^{2}\,\beta_{\mathrm{rd}}
		}{
			p\Big( (2p - p_{2})\,\beta_{\mathrm{sd}} + \sigma^{2} \Big)
		}
	}
	-
	\sqrt{\beta_{\mathrm{sd,tra}}}
}{
	\alpha_{\mathrm{t}} 
	\sqrt{1-\zeta^{2}}
	\sqrt{\beta_{\mathrm{sr}}\,\beta_{\mathrm{rd,tra}}}
}
\end{equation}
and
\begin{equation}\label{eq21}
N_\mathrm{ref} >
\frac{
	\sqrt{
		\frac{
			p_{2}\,\sigma^{2}\,\beta_{\mathrm{rd}}
		}{
			p\Big( (2p - p_{2})\,\beta_{\mathrm{sd}} + \sigma^{2} \Big)
		}
	}
	-
	\sqrt{\beta_{\mathrm{sd}}}
}{
	\alpha \sqrt{ \beta_{\mathrm{sr}}\,\beta_{\mathrm{rd}} }
}.
\end{equation}
\end{proposition}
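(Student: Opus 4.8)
The plan is to prove each of the three characterizations by reducing the corresponding rate comparison to the stated bound on the number of elements through a reversible chain of manipulations, exploiting the strict monotonicity of $\log_2(1+\cdot)$. For the reflection zone, I would start from $R_{\mathrm{d}}^{\mathrm{ref}} > R_{\mathrm{d}}^{\mathrm{FD-DF}}$, taking the left side from \eqref{rate_ref} and the right side from the power-optimized form \eqref{rate_FD_DF_cal} supplied by Proposition \ref{Proposition_1} (with $p_2$ given by \eqref{p_2} and $p_1 = 2p - p_2$). Since $\log_2(1+x)$ is strictly increasing and the additive $1$ cancels, this is equivalent to $\frac{p\,\big(\sqrt{\beta_{\mathrm{sd,ref}}} + N_\mathrm{r}\alpha_\mathrm{r}\zeta\sqrt{\beta_{\mathrm{sr}}\beta_{\mathrm{rd,ref}}}\big)^2}{\sigma^2} > \frac{p_2\beta_{\mathrm{rd}}}{(2p-p_2)\beta_{\mathrm{sd}} + \sigma^2}$.

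Next I would multiply both sides by $\sigma^2/p$ to obtain $\big(\sqrt{\beta_{\mathrm{sd,ref}}} + N_\mathrm{r}\alpha_\mathrm{r}\zeta\sqrt{\beta_{\mathrm{sr}}\beta_{\mathrm{rd,ref}}}\big)^2 > \frac{p_2\sigma^2\beta_{\mathrm{rd}}}{p\big((2p-p_2)\beta_{\mathrm{sd}} + \sigma^2\big)}$, which already displays the radicand appearing in \eqref{eq19}. The one delicate step is taking square roots: the bracket on the left is a non-negative real number (all channel gains, $\alpha_\mathrm{r}$, $\zeta$ and $N_\mathrm{r}$ being non-negative), and the right side is non-negative since $2p - p_2 = p_1 > 0$ by the construction in Proposition \ref{Proposition_1}; hence $a^2 > c$ with $a \ge 0$, $c \ge 0$ is equivalent to $a > \sqrt{c}$. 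This yields $\sqrt{\beta_{\mathrm{sd,ref}}} + N_\mathrm{r}\alpha_\mathrm{r}\zeta\sqrt{\beta_{\mathrm{sr}}\beta_{\mathrm{rd,ref}}} > \sqrt{\tfrac{p_2\sigma^2\beta_{\mathrm{rd}}}{p((2p-p_2)\beta_{\mathrm{sd}} + \sigma^2)}}$, and moving $\sqrt{\beta_{\mathrm{sd,ref}}}$ to the right and dividing by the positive coefficient $\alpha_\mathrm{r}\zeta\sqrt{\beta_{\mathrm{sr}}\beta_{\mathrm{rd,ref}}}$ gives exactly \eqref{eq19}. Every step is an equivalence, so the characterization is genuinely ``if and only if.''

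The transmission-zone bound \eqref{eq20} follows by the identical argument with the substitutions $\beta_{\mathrm{sd,ref}} \to \beta_{\mathrm{sd,tra}}$, $\beta_{\mathrm{rd,ref}} \to \beta_{\mathrm{rd,tra}}$, $\alpha_\mathrm{r}\zeta \to \alpha_\mathrm{t}\sqrt{1-\zeta^2}$, $N_\mathrm{r} \to N_\mathrm{t}$, using \eqref{rate_tra} in place of \eqref{rate_ref}. For the conventional RIS bound \eqref{eq21}, I would use the RIS achievable rate with all $N_\mathrm{ref}$ elements reflecting and amplitude $\alpha$ (no transmission split), $R_{\mathrm{d}}^{\mathrm{RIS}} = \log_2\!\big(1 + \tfrac{p(\sqrt{\beta_{\mathrm{sd}}} + N_\mathrm{ref}\alpha\sqrt{\beta_{\mathrm{sr}}\beta_{\mathrm{rd}}})^2}{\sigma^2}\big)$, and repeat the same three steps against \eqref{rate_FD_DF_cal}.

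I do not expect a serious obstacle: the argument is a careful but routine chain of equivalences. The points that need attention are the square-root step and the sign bookkeeping --- confirming $p_1 = 2p - p_2 > 0$ so the FD-DF SNR is well defined, and noting that when the numerator on the right of \eqref{eq19}--\eqref{eq21} is non-positive the inequality holds for every admissible (non-negative) number of elements, so no case is lost in either direction.
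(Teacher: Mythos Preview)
Your proposal is correct and follows essentially the same route as the paper: compare the STAR-RIS (resp.\ RIS) achievable rate in \eqref{rate_ref}, \eqref{rate_tra} (resp.\ \cite[Eq.~(12)]{REF_Emil_3}) against the power-optimized FD-DF rate \eqref{rate_FD_DF_cal}, strip the logarithm by monotonicity, and solve for the number of elements. Your treatment is in fact more careful than the paper's, which simply states that the rate inequalities ``can be simplified'' to \eqref{eq19}--\eqref{eq21}; your explicit justification of the square-root step via non-negativity of both sides, and your remark on the non-positive-numerator edge case, fill in details the paper leaves implicit.
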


\begin{proof}
 For $\beta_{\mathrm{s d_\mathrm{ref}}} \leq \beta_{\mathrm{sr}}$, $\beta_{\mathrm{s d_\mathrm{tra}}} \leq \beta_{\mathrm{sr}}$ and $\beta_{\mathrm{sd}} \leq \beta_{\mathrm{sr}}$, the inequalities $R_{\mathrm{ref}}(N_\text{r})>$ $R_{\text{FD-DF}}$, $R_{\mathrm{tra}}(N_\text{t})>$ $R_{\text{FD-DF}}$ and $R_{\mathrm{RIS}}(N_\mathrm{ref})>$ $R_{\text{FD-DF}}$ can be simplified, respectively, to \eqref{eq19}, \eqref{eq20} and \eqref{eq21} by utilizing 
\cite[Eq.~(12)]{REF_Emil_3}, \eqref{rate_ref}, \eqref{rate_tra} and \eqref{rate_FD_DF_cal}. 
\end{proof}

\section{Secrecy Rate Analysis}\label{sec3}

In this section, we analyze the performance of SISO, HD-DF relaying, FD-DF relaying, RIS-assisted, and STAR-RIS-assisted systems in the presence of Es. In particular, we derive the corresponding achievable secrecy rate expressions for these systems as follows.

\subsection{Secrecy rate of SISO transmission}
Considering a SISO system with a deterministic flat-fading channel denoted by $h_{\mathrm{sd}} \in \mathbb{C}$, the received signal at D is expressed as
\begin{equation}
y_{\mathrm{d}} = h_{\mathrm{sd}} \sqrt{p} x + n_{\mathrm{d}},
\end{equation}
where $n_{\mathrm{d}}$ is the receiver noise with zero mean and variance of $\sigma_{\mathrm{d}}^2$, $n_{\mathrm{d}} \sim \mathcal{N}\mathcal{C}\left(0, \sigma_{\mathrm{d}}^{2}\right)$. 

In the presence of an E in the SISO system, the received signal at E is expressed as
\begin{equation}
y_{\mathrm{e}} = h_{\mathrm{se}} \sqrt{p} x + n_{\mathrm{e}},
\end{equation}
where $h_{\mathrm{se}} \in \mathbb{C}$ is the S-E channel, and $n_{\mathrm{e}} \sim \mathcal{N}\mathcal{C}\left(0, \sigma_{\mathrm{e}}^{2}\right)$ is the receiver noise at E which has a zero mean and variance of $\sigma_{\mathrm{e}}^{2}$.

\begin{proposition}\label{Proposition_4}
The secrecy rate of the SISO system, denoted by $R_{\mathrm{sec}}^{\mathrm{SISO}}$, is given by
\begin{equation}\label{secrecy_SISO}
R_{\mathrm{sec}}^{\mathrm{SISO}} = \max \Bigg[0, \log_{2}\left(\frac{\sigma^{2} + p \beta_{\mathrm{sd}}}{\sigma^{2} + {p \beta_{\mathrm{se}}}}\right) \Bigg].
\end{equation}
\end{proposition}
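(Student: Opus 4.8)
The plan is to invoke the classical Gaussian wiretap channel result (Wyner; Leung-Yan-Cheong and Hellman): when the legitimate receiver D and the eavesdropper E each observe the transmitted codeword through an independent AWGN channel with a fixed deterministic gain, the secrecy capacity equals the non-negative part of the difference between the two point-to-point capacities. So the first step is to read off the capacity of the S--D link from the signal model $y_{\mathrm d} = h_{\mathrm{sd}}\sqrt{p}\,x + n_{\mathrm d}$, namely $C_{\mathrm d} = \log_2\!\big(1 + p|h_{\mathrm{sd}}|^2/\sigma_{\mathrm d}^2\big)$, and the capacity of the S--E link from $y_{\mathrm e} = h_{\mathrm{se}}\sqrt{p}\,x + n_{\mathrm e}$, namely $C_{\mathrm e} = \log_2\!\big(1 + p|h_{\mathrm{se}}|^2/\sigma_{\mathrm e}^2\big)$.

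Next I would substitute the channel-gain notation $\beta_{\mathrm{sd}} = |h_{\mathrm{sd}}|^2$, $\beta_{\mathrm{se}} = |h_{\mathrm{se}}|^2$ and apply the common-noise assumption $\sigma_{\mathrm d}^2 = \sigma_{\mathrm e}^2 = \sigma^2$ used throughout the paper, giving $C_{\mathrm d} = \log_2(1 + p\beta_{\mathrm{sd}}/\sigma^2)$ and $C_{\mathrm e} = \log_2(1 + p\beta_{\mathrm{se}}/\sigma^2)$. Then the secrecy rate is $R_{\mathrm{sec}}^{\mathrm{SISO}} = \max[\,0,\;C_{\mathrm d}-C_{\mathrm e}\,]$, and I would simplify the logarithmic difference via $\log a - \log b = \log(a/b)$ together with $1 + p\beta/\sigma^2 = (\sigma^2 + p\beta)/\sigma^2$, so the factor $\sigma^2$ cancels between numerator and denominator:
\[
C_{\mathrm d} - C_{\mathrm e} = \log_2\!\frac{\sigma^2 + p\beta_{\mathrm{sd}}}{\sigma^2 + p\beta_{\mathrm{se}}},
\]
which is exactly the expression inside the $\max[\cdot,\cdot]$ in \eqref{secrecy_SISO}.

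There is essentially no hard step: the statement is a direct specialization of the Gaussian wiretap secrecy capacity to a single-antenna, deterministic-channel setting. The only point I would make explicit is the role of the $\max[0,\cdot]$ operator — a reliable and perfectly secure codebook exists only when $C_{\mathrm d} > C_{\mathrm e}$; if $\beta_{\mathrm{se}} \ge \beta_{\mathrm{sd}}$ the argument of the logarithm is at most $1$, the rate difference is non-positive, and the achievable secrecy rate collapses to zero. I would also note, for consistency with the rest of the paper and with \cite{REF_Emil_3}, that \eqref{secrecy_SISO} is a deterministic quantity rather than an ergodic average, since the channels $h_{\mathrm{sd}}$ and $h_{\mathrm{se}}$ are treated as fixed.
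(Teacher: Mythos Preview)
Your proposal is correct and follows essentially the same approach as the paper: compute the legitimate and eavesdropper channel capacities from the signal models, substitute $\beta_{\mathrm{sd}}=|h_{\mathrm{sd}}|^2$, $\beta_{\mathrm{se}}=|h_{\mathrm{se}}|^2$ and the common-noise assumption $\sigma_{\mathrm d}^2=\sigma_{\mathrm e}^2=\sigma^2$, and then take $\max[0,\,C_{\mathrm d}-C_{\mathrm e}]$. Your write-up is in fact slightly more detailed than the paper's (you make the Wyner/Leung-Yan-Cheong--Hellman wiretap result and the $\log a-\log b=\log(a/b)$ simplification explicit), but the argument is the same.
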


\begin{proof}
The secrecy rate of a SISO system can generally be expressed as  $ R_{\mathrm{sec}}^{\text{SISO}} = \max\Big[0,R_{\mathrm{d}}^{\text{SISO}} - R_{\mathrm{e}}^{\text{SISO}}\Big]$. 
Assuming that the antenna gains are included in the channels for notational simplicity, the channel capacity of the SISO link at D is expressed as \cite{REF_Emil_3}
\begin{equation}
R_{\mathrm{d}}^{\text{SISO}} = \log_{2}\left(1 + \frac{p \beta_{\mathrm{sd}}}{\sigma_{\mathrm{d}}^{2}}\right),
\end{equation}
where the channel gain $\beta_{\mathrm{sd}} = {\left|h_{\mathrm{sd}}\right|^{2}}$ is considered. Similarly, the channel capacity of the SISO links at E is expressed as
\begin{equation}
R_{\mathrm{e}}^{\text{SISO}} = \log_{2}\left(1 + \frac{p \beta_{\mathrm{se}}}{\sigma_{\mathrm{e}}^{2}}\right),
\end{equation}
where $\beta_{\mathrm{se}} = {\left|h_{\mathrm{se}}\right|^{2}}$ is the channel gain of the direct S-E link. After simplification, and assuming $\sigma_{\mathrm{e}}^{2} = \sigma_{\mathrm{d}}^{2} = \sigma^{2}$, the
aforementioned secrecy rate expression in \eqref{secrecy_SISO} can be readily obtained.
\end{proof}

\subsection{Secrecy rate of HD-DF relaying}

Consider the traditional repetition-coded HD-DF relaying scheme in which the transmission is divided into two phases of equal size. In the first phase, S transmits $\sqrt{p_{1}} \, x$ to both D and R, where the received signals at D and R are presented as
\begin{equation}
y_{1 \mathrm{d}}=h_{\mathrm{sd}} \sqrt{p_{1}} x + n_{1 \mathrm{d}},
\end{equation}
and
\begin{equation}
y_{1 \mathrm{r}}=h_{\mathrm{sr}} \sqrt{p_{1}} x + n_{1 \mathrm{r}},
\end{equation}
respectively, where $n_{1 \mathrm{d}}, \sim \mathcal{N}\mathcal{C}\left(0, \sigma_{\mathrm{d}}^{2}\right)$ and $n_{1 \mathrm{r}}, \sim \mathcal{N}\mathcal{C}\left(0, \sigma_{\mathrm{r}}^{2}\right)$ are the receiver noises at D and R, respectively.
The E still overhears the transmission from S, and therefore, the received signal at E is expressed as
\begin{equation}
y_{1 \mathrm{e}}=h_{\mathrm{se}} \sqrt{p_{1}} x + n_{1 \mathrm{e}},
\end{equation}
where $n_{1 \mathrm{e}}, \sim \mathcal{N}\mathcal{C}\left(0, \sigma_{\mathrm{e}}^{2}\right)$ is the receiver noise at E.

In the second phase, R transmits $\sqrt{p_{2}} \, x$ to D, and the received signal at D is expressed as
\begin{equation}
y_{2 d}=h_{\mathrm{rd}} \sqrt{p_{2}} x + n_{2 \mathrm{d}},
\end{equation}
while E receives from R the following signal
\begin{equation}
y_{2 \mathrm{e}}=h_{\mathrm{re}} \sqrt{p_{2}} x + n_{2 \mathrm{e}},
\end{equation}
where $h_{\mathrm{re}} \in \mathbb{C}$ is the R-E channel. Without loss of generality, this paper assumes $\sigma_{\mathrm{e}}^{2} = \sigma_{\mathrm{r}}^{2} = \sigma_{\mathrm{d}}^{2} = \sigma^{2}$. It also considers an informed E that decodes the information from both S and R.

\begin{proposition}\label{Propostion_5}
The secrecy rate of the HD-DF relaying system, denoted by $R_{\mathrm{sec}}^{\mathrm{HD-DF}}$, is given by
\begin{multline}\label{rate_secrecy_HD-DF}
R_{\mathrm{sec}}^{\mathrm{HD-DF}} = \max \Bigg[0, \frac{1}{2} \log_{2}\left(1 + \frac{2 p \beta_{\mathrm{sr}} \beta_{\mathrm{rd}}}{\left(\beta_{\mathrm{sr}}+\beta_{\mathrm{rd}}-\beta_{\mathrm{sd}}\right) \sigma^{2}} \right) \\
- \frac{1}{2} \log_{2}\left(1 + \frac{p_{1} \beta_{\mathrm{se}} + p_{2} \beta_{\mathrm{re}}}{\sigma^{2}} \right) \Bigg],
\end{multline}
where $\beta_{\mathrm{se}} = {\left|h_{\mathrm{se}}\right|^{2}}$ and $\beta_{\mathrm{re}} = {\left|h_{\mathrm{re}}\right|^{2}}$ are the channel gains of the S-E and R-E links, respectively, $p_{1}=\frac{2 p \beta_{\mathrm{rd}}}{\beta_{\mathrm{sr}}+\beta_{\mathrm{rd}}-\beta_{\mathrm{sd}}}$ and $p_{2}=\frac{2 p\left(\beta_{\mathrm{sr}}-\beta_{\mathrm{sd}}\right)}{\beta_{\mathrm{sr}}+\beta_{\mathrm{rd}}-\beta_{\mathrm{sd}}}$ are optimally selected to have similar average power, $p$, as when utilizing the STAR-RIS and RIS under the constraint $p=\frac{p_{1}+p_{2}}{2}$, and to maximize the achievable rate of the HD-DF relaying.
\end{proposition}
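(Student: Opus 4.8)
The plan is to start from the generic wiretap identity $R_{\mathrm{sec}}^{\mathrm{HD-DF}} = \max\bigl[0,\, R_{\mathrm{d}}^{\mathrm{HD-DF}} - R_{\mathrm{e}}^{\mathrm{HD-DF}}\bigr]$, evaluate the legitimate destination rate and the worst-case eavesdropper rate separately for the two-phase repetition-coded protocol already set up, and then substitute the power split $(p_1,p_2)$ asserted in the proposition.

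First I would treat the legitimate link. Recalling the HD-DF destination rate from \cite[Eq.~(14)]{REF_Emil_3}: with repetition coding the destination applies maximal-ratio combining to its phase-1 and phase-2 observations, so $R_{\mathrm{d}}^{\mathrm{HD-DF}} = \tfrac12\log_2\!\bigl(1 + \sigma^{-2}\min\{\,p_1\beta_{\mathrm{sr}},\; p_1\beta_{\mathrm{sd}} + p_2\beta_{\mathrm{rd}}\,\}\bigr)$, where the first argument of the $\min$ is the decoding constraint at R, the second is the post-combining SNR at D, and the prefactor $\tfrac12$ is the usual HD penalty. Under $p_1+p_2=2p$, the first argument increases in $p_1$ while the second decreases in $p_1$ in the relaying-relevant regime ($\beta_{\mathrm{sr}}>\beta_{\mathrm{sd}}$, $\beta_{\mathrm{rd}}>\beta_{\mathrm{sd}}$, which also keeps both powers positive), so the maximizing split equalizes the two arguments; solving $p_1(\beta_{\mathrm{sr}}-\beta_{\mathrm{sd}})=p_2\beta_{\mathrm{rd}}$ together with $p_1+p_2=2p$ gives exactly $p_1=\tfrac{2p\beta_{\mathrm{rd}}}{\beta_{\mathrm{sr}}+\beta_{\mathrm{rd}}-\beta_{\mathrm{sd}}}$, $p_2=\tfrac{2p(\beta_{\mathrm{sr}}-\beta_{\mathrm{sd}})}{\beta_{\mathrm{sr}}+\beta_{\mathrm{rd}}-\beta_{\mathrm{sd}}}$, and substituting the common value $p_1\beta_{\mathrm{sr}}=\tfrac{2p\beta_{\mathrm{sr}}\beta_{\mathrm{rd}}}{\beta_{\mathrm{sr}}+\beta_{\mathrm{rd}}-\beta_{\mathrm{sd}}}$ yields the first logarithm in \eqref{rate_secrecy_HD-DF}.

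Next I would handle the informed eavesdropper. Since S (phase 1) and R (phase 2) send the \emph{same} codeword $x$, the strongest E coherently combines $y_{1\mathrm{e}} = h_{\mathrm{se}}\sqrt{p_1}\,x + n_{1\mathrm{e}}$ and $y_{2\mathrm{e}} = h_{\mathrm{re}}\sqrt{p_2}\,x + n_{2\mathrm{e}}$ by MRC, achieving effective SNR $\sigma^{-2}(p_1\beta_{\mathrm{se}} + p_2\beta_{\mathrm{re}})$ (using $\sigma_{\mathrm{e}}^2=\sigma^2$), hence $R_{\mathrm{e}}^{\mathrm{HD-DF}} = \tfrac12\log_2\!\bigl(1 + \sigma^{-2}(p_1\beta_{\mathrm{se}} + p_2\beta_{\mathrm{re}})\bigr)$ with the same HD prefactor; this is the second logarithm. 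Inserting $R_{\mathrm{d}}^{\mathrm{HD-DF}}$ and $R_{\mathrm{e}}^{\mathrm{HD-DF}}$ into the wiretap identity and taking the positive part gives \eqref{rate_secrecy_HD-DF}. The algebra of the first step is routine; the delicate point is the eavesdropper model — one must justify that crediting E with coherent combining over \emph{both} hops (rather than with reception in only one phase) is the right worst-case choice, so that $R_{\mathrm{e}}^{\mathrm{HD-DF}}$ upper-bounds any attainable leakage and the stated $R_{\mathrm{sec}}^{\mathrm{HD-DF}}$ is a valid secrecy lower bound. A secondary subtlety is that $(p_1,p_2)$ here are inherited from the legitimate-rate optimization rather than re-optimized for secrecy, and the expressions implicitly assume $\beta_{\mathrm{sr}}+\beta_{\mathrm{rd}}-\beta_{\mathrm{sd}}>0$ with $\beta_{\mathrm{sr}}\ge\beta_{\mathrm{sd}}$.
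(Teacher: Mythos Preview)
Your proposal is correct and follows essentially the same route as the paper: start from $R_{\mathrm{sec}}^{\mathrm{HD-DF}} = \max[0,\,R_{\mathrm{d}}^{\mathrm{HD-DF}} - R_{\mathrm{e}}^{\mathrm{HD-DF}}]$, take the destination rate from \cite[Eq.~(14)]{REF_Emil_3}, and credit the informed eavesdropper with the MRC-combined SNR $\sigma^{-2}(p_1\beta_{\mathrm{se}}+p_2\beta_{\mathrm{re}})$ over the two hops. Your treatment is in fact slightly more explicit than the paper's, since you re-derive the equalizing power split $(p_1,p_2)$ rather than simply quoting it, and you justify the combined-SNR eavesdropper expression via MRC rather than asserting it.
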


\begin{proof}
The secrecy rate of the HD-DF relaying in \eqref{rate_secrecy_HD-DF} is generally found as $R_{\mathrm{sec}}^{\text{HD-DF}} = \max\left[0, R_{\mathrm{d}}^{\text{HD-DF}} - R_{\mathrm{e}}^{\text{HD-DF}}\right]$. In this setup, $R_{\mathrm{d}}^{\text{HD-DF}}$ is calculated as \cite{REF_Emil_3}
\begin{equation}\label{rate_HD_DF}
R_{\mathrm{d}}^{\text{HD-DF}}=\frac{1}{2} \log _{2}\left(1+\frac{2 p \beta_{\mathrm{sr}} \beta_{\mathrm{rd}}}{\left(\beta_{\mathrm{sr}}+\beta_{\mathrm{rd}}-\beta_{\mathrm{sd}}\right) \sigma^{2}}\right),
\end{equation}
while the rate at E, denoted by $R_{\mathrm{e}}^{\text{HD-DF}}$, can be formulated as
\begin{equation}
R_{\mathrm{e}}^{\text{HD-DF}} = \frac{1}{2} \log _{2}\left(1 + \max \left[\frac{p_{1} {\left|h_{\mathrm{se}}\right|^{2}}}{\sigma_{\mathrm{e}}^{2}}, \frac{p_{2} {\left|h_{\mathrm{re}}\right|^{2}}}{\sigma_{\mathrm{e}}^{2}}\right] \right),
\end{equation}
which can be written in terms of the channel gains as
\begin{equation}\label{rate_e_HD_DF}
R_{\mathrm{e}}^{\text{HD-DF}} = \frac{1}{2} \log_{2}\left(1 + \max \left[\frac{p_{1} \beta_{\mathrm{se}}}{\sigma^{2}}, \frac{p_{2} \beta_{\mathrm{re}}}{\sigma^{2}} \right]\right).
\end{equation}
In this case, assuming an informed E with the knowledge of the R’s operation is deployed, the rate can be expressed as
\begin{equation}\label{rate_e_HD_DF_final}
R_{\mathrm{e}}^{\text{HD-DF}} = \frac{1}{2} \log_{2}\left(1 + \frac{p_{1} \beta_{\mathrm{se}} + p_{2} \beta_{\mathrm{re}}}{\sigma^{2}} \right).
\end{equation}
Substituting \eqref{rate_HD_DF} and \eqref{rate_e_HD_DF_final} into the general expression of $R_{\mathrm{sec}}^{\text{HD-DF}}$ yields the secrecy rate expression of the HD-DF relaying system in \eqref{rate_secrecy_HD-DF}.
\end{proof}

\subsection{Secrecy rate of FD-DF relaying}

In the FD-DF relaying system, the received signals at R and D are given in \eqref{equ1} and \eqref{equ2}, respectively, while the received signal at E, as shown in Fig. \ref{figure_2a}, is given by
\begin{equation}
y_{\mathrm{e}} = h_{\mathrm{se}} \sqrt{p_{1}} x + h_{\mathrm{re}} \sqrt{p_{2}} t + n_{\mathrm{e}}.
\end{equation}

\begin{proposition}\label{Proposition_6}
For an informed E with knowledge of the R’s operation, the secrecy rate of the FD-DF relaying system, denoted by $R_{\mathrm{sec}}^{\mathrm{FD-DF}}$, is given by 
\begin{multline}\label{rate_secrecy_FD-DF}
R_{\mathrm{sec}}^{\mathrm{FD-DF}} = \max \Bigg[0, \log_{2}\left(1 + \frac{p_{2} \beta_{\mathrm{rd}}}{(2p - p_{2}) \beta_{\mathrm{sd}} + \sigma^{2}} \right) \\
- \log_{2}\left(1 + \frac{p_{1} \beta_{\mathrm{se}}}{\sigma^{2}} + \frac{p_{2} \beta_{\mathrm{re}}}{p_{1} \beta_{\mathrm{se}} + \sigma^{2}} \right) \Bigg].
\end{multline}
Here, the same power values as those derived in Section II are employed.
\end{proposition}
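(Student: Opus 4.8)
The plan is to mirror the template of Proposition \ref{Propostion_5}: write $R_{\mathrm{sec}}^{\mathrm{FD-DF}} = \max\!\left[0,\, R_{\mathrm{d}}^{\mathrm{FD-DF}} - R_{\mathrm{e}}^{\mathrm{FD-DF}}\right]$ and handle the two rates separately. For the legitimate link, since E is passive it does not affect the S--R--D chain, so $R_{\mathrm{d}}^{\mathrm{FD-DF}}$ is exactly the quantity already obtained in the rate analysis; I would invoke Proposition \ref{Proposition_1} together with \eqref{rate_FD_DF_cal}, where the optimal split $p_1 = 2p - p_2$ with $p_2$ as in \eqref{p_2} equalises the two arguments of the $\min$ in \eqref{rate_FD_DF}, so that $R_{\mathrm{d}}^{\mathrm{FD-DF}} = \log_2\!\left(1 + \frac{p_2 \beta_{\mathrm{rd}}}{(2p-p_2)\beta_{\mathrm{sd}}+\sigma^2}\right)$. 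This supplies the first logarithm in \eqref{rate_secrecy_FD-DF}, and the same $p_1,p_2$ from Section \ref{sec2} are retained, as stated.

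The substantive step is the eavesdropper rate $R_{\mathrm{e}}^{\mathrm{FD-DF}}$, derived from the observation $y_{\mathrm{e}} = h_{\mathrm{se}}\sqrt{p_1}x + h_{\mathrm{re}}\sqrt{p_2}t + n_{\mathrm{e}}$ under the informed-E assumption (E knows R's operation and that, under repetition coding, $t$ conveys the same message as $x$). I would model E as a successive decoder: it first decodes the relayed component, treating S's fresh transmission as interference, which succeeds up to SINR $\frac{p_2\beta_{\mathrm{re}}}{p_1\beta_{\mathrm{se}}+\sigma^2}$; after cancelling it, E decodes the direct S component with SNR $\frac{p_1\beta_{\mathrm{se}}}{\sigma^2}$. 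Since both decoded streams carry the same information, their contributions add, giving $R_{\mathrm{e}}^{\mathrm{FD-DF}} = \log_2\!\left(1 + \frac{p_1\beta_{\mathrm{se}}}{\sigma^2} + \frac{p_2\beta_{\mathrm{re}}}{p_1\beta_{\mathrm{se}}+\sigma^2}\right)$ after substituting $\beta_{\mathrm{se}} = |h_{\mathrm{se}}|^2$, $\beta_{\mathrm{re}} = |h_{\mathrm{re}}|^2$ and $\sigma_{\mathrm{e}}^2 = \sigma^2$. Plugging $R_{\mathrm{d}}^{\mathrm{FD-DF}}$ and $R_{\mathrm{e}}^{\mathrm{FD-DF}}$ into $\max\!\left[0,\, R_{\mathrm{d}}^{\mathrm{FD-DF}} - R_{\mathrm{e}}^{\mathrm{FD-DF}}\right]$ then yields \eqref{rate_secrecy_FD-DF}.

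The main obstacle I anticipate is justifying the exact form of $R_{\mathrm{e}}^{\mathrm{FD-DF}}$: one has to argue that this SIC-plus-combining strategy is the best an informed E can do, so that the subtracted term is a genuine upper bound on E's information and the secrecy rate is not overstated, rather than, say, a MAC-type joint-decoding quantity; this depends crucially on how the informed eavesdropper is modelled and on the repetition structure linking $x$ and $t$. A secondary check is the boundary behaviour: $\beta_{\mathrm{re}}\to 0$ should collapse $R_{\mathrm{e}}^{\mathrm{FD-DF}}$ to the plain direct-link form, the argument of the first logarithm inherits the well-posedness condition $\beta_{\mathrm{sr}}+\beta_{\mathrm{rd}}>\beta_{\mathrm{sd}}$ from Proposition \ref{Proposition_1}, and the outer $\max[0,\cdot]$ must be verified to correctly zero out the regime in which E's link dominates.
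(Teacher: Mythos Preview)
Your proposal is correct and follows essentially the same route as the paper: both write $R_{\mathrm{sec}}^{\mathrm{FD-DF}} = \max[0,\,R_{\mathrm{d}}^{\mathrm{FD-DF}}-R_{\mathrm{e}}^{\mathrm{FD-DF}}]$, pull $R_{\mathrm{d}}^{\mathrm{FD-DF}}$ directly from \eqref{rate_FD_DF_cal} with the Section~\ref{sec2} power split, and arrive at the same eavesdropper expression $\log_2\!\big(1+\tfrac{p_1\beta_{\mathrm{se}}}{\sigma^2}+\tfrac{p_2\beta_{\mathrm{re}}}{p_1\beta_{\mathrm{se}}+\sigma^2}\big)$. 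The only difference is presentational: the paper first writes the uninformed-E rate as $\log_2\!\big(1+\max[\tfrac{p_1\beta_{\mathrm{se}}}{\sigma^2},\tfrac{p_2\beta_{\mathrm{re}}}{p_1\beta_{\mathrm{se}}+\sigma^2}]\big)$ and then replaces the $\max$ by the sum once E is informed, whereas you justify the sum directly via an SIC-plus-combining argument, which is arguably a cleaner motivation for the same formula.
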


\begin{proof}
The secrecy rate of the FD-DF relaying is generally expressed as $R_{\mathrm{sec}}^{\text{FD-DF}} = \max\big[0, R_{\mathrm{d}}^{\text{FD-DF}} - R_{\mathrm{e}}^{\text{FD-DF}}\big]$. 
In this setup, $R_{\mathrm{d}}^{\text{FD-DF}}$ is expressed in \eqref{rate_FD_DF_cal}, while the rate at E, denoted by $R_{\mathrm{e}}^{\text{FD-DF}}$, can be formulated as
\begin{equation}
R_{\mathrm{e}}^{\text{FD-DF}} = \log _{2}\left(1 + \max \left[\frac{p_{1} {\left|h_{\mathrm{se}}\right|^{2}}}{\sigma_{\mathrm{e}}^{2}}, \frac{p_{2} {\left|h_{\mathrm{re}}\right|^{2}}}{p_{1} {\left|h_{\mathrm{se}}\right|^{2}} + \sigma_{\mathrm{e}}^{2}}\right] \right),
\end{equation}
which can be written in terms of the channel gains as
\begin{equation}
R_{\mathrm{e}}^{\text{FD-DF}} = \log _{2}\left(1 + \max \left[\frac{p_{1} \beta_{\mathrm{se}}}{\sigma^{2}}, \frac{p_{2} \beta_{\mathrm{re}}}{p_{1} \beta_{\mathrm{se}} + \sigma^{2}}\right] \right).
\end{equation}
Given that E knows R's operation, the rate is expressed as
\begin{equation}\label{rate_e_FD-DF}
R_{\mathrm{e}}^{\text{FD-DF}} = \log_{2}\left(1 + \frac{p_{1} \beta_{\mathrm{se}}}{\sigma^{2}} + \frac{p_{2} \beta_{\mathrm{re}}}{p_{1} \beta_{\mathrm{se}} + \sigma^{2}} \right).
\end{equation}
By substituting \eqref{rate_FD_DF_cal} and \eqref{rate_e_FD-DF} into the general expression of $R_{\mathrm{sec}}^{\text{FD-DF}}$, the secrecy rate expression of the FD-DF relaying system in \eqref{rate_secrecy_FD-DF} is obtained.
\end{proof}

\subsection{Secrecy rate of RIS-supported transmission}

In RIS-supported secure transmission, the system consists of an RIS unit with $N_\mathrm{ref}$ passive reflecting elements, a single-antenna S, two single-antenna Ds, one in the reflection zone and one in the transmission zone, a single-antenna E in the reflection zone and another single-antenna E in the transmission zone. The received signal at the legitimate user is expressed as
\begin{equation}
y_\mathrm{d} = \left(h_{\mathrm{sd}} + \mathbf{h}_{\mathrm{sr}}^{\mathrm{T}} \boldsymbol{\Phi} \mathbf{h}_{\mathrm{rd}}\right) \sqrt{p} x + n_{\text{d}},
\end{equation}
where $\mathbf{h}_{\mathrm{sr}} \in \mathbb{C}^{N_\mathrm{ref}}$ and $\mathbf{h}_{\mathrm{rd}} \in \mathbb{C}^{N_\mathrm{ref}}$ denote the channel between S and the RIS, and the channel between the RIS and D, respectively. The received signal at E is
\begin{equation}
y_\mathrm{e} = \sqrt{p} \, h_{\mathrm{se}} x + \sqrt{p} \, \mathbf{h}_{\mathrm{sr}}^{\mathrm{T}} \boldsymbol{\Phi} \mathbf{h}_{\mathrm{re}} x + n_{\text{e}},
\end{equation}
where $\mathbf{h}_{\mathrm{re}} \in \mathbb{C}^{N_\mathrm{ref}}$ denotes the channel between RIS and E. The diagonal phase-shifting matrix that fully captures the characteristics of the RIS can be defined as $\boldsymbol{\Phi} \triangleq \alpha \operatorname {diag} \left(e^{j \theta_{1}}, \ldots, e^{j \theta_{{N}_{\mathrm{ref}}}}\right)$, where $\alpha$ is the fixed amplitude coefficient.

\begin{proposition}\label{Proposition_7}
The secrecy rate of the RIS-supported system, denoted by $R_{\mathrm{sec}}^{\mathrm{RIS}}$, is given by
\begin{multline}\label{rate_secrecy_RIS}
R_{\mathrm{sec}}^{\mathrm{RIS}} = \max\Bigg[0, \log_{2}\left(1 + \frac{p \left(\sqrt{\beta_{\mathrm{sd}}} + N_\mathrm{ref} \alpha \sqrt{\beta_{\mathrm{sr}} \beta_{\mathrm{rd}}}\right)^{2}}{\sigma^{2}}\right) \\
- \log_{2}\left(1 + \frac{p {\beta_{\mathrm{se}}}}
{p N_\mathrm{ref}^2 \alpha^2 \beta_{\mathrm{sr}} \beta_{\mathrm{re}} + \sigma^{2}}\right) \Bigg].
\end{multline}
\end{proposition}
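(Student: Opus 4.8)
The plan is to follow the template of the preceding secrecy-rate proofs: write the secrecy rate in the Wyner form $R_{\mathrm{sec}}^{\mathrm{RIS}} = \max\big[0,\, R_{\mathrm{d}}^{\mathrm{RIS}} - R_{\mathrm{e}}^{\mathrm{RIS}}\big]$, evaluate the legitimate-user rate $R_{\mathrm{d}}^{\mathrm{RIS}}$ as the AWGN capacity of the effective RIS-aided S--D channel under the phase profile the RIS actually applies, evaluate the eavesdropper rate $R_{\mathrm{e}}^{\mathrm{RIS}}$ under that \emph{same} fixed phase profile, and substitute.

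For $R_{\mathrm{d}}^{\mathrm{RIS}}$ I would reuse, essentially verbatim, the argument that produced \eqref{rate_ref} and \eqref{rate_tra}, specialised to a single fully reflecting RIS. Writing $\mathbf{h}_{\mathrm{sr}}^{\mathrm{T}} \boldsymbol{\Phi}\, \mathbf{h}_{\mathrm{rd}} = \alpha \sum_{n=1}^{N_\mathrm{ref}} e^{j\theta_n}\, [\mathbf{h}_{\mathrm{sr}}]_n [\mathbf{h}_{\mathrm{rd}}]_n$ and choosing $\theta_n = \arg(h_{\mathrm{sd}}) - \arg\big([\mathbf{h}_{\mathrm{sr}}]_n [\mathbf{h}_{\mathrm{rd}}]_n\big)$ co-phases every reflected contribution with the direct path, so the triangle inequality holds with equality and $\big|h_{\mathrm{sd}} + \mathbf{h}_{\mathrm{sr}}^{\mathrm{T}} \boldsymbol{\Phi}\, \mathbf{h}_{\mathrm{rd}}\big| = \sqrt{\beta_{\mathrm{sd}}} + \alpha\sum_n \big|[\mathbf{h}_{\mathrm{sr}}]_n\big|\,\big|[\mathbf{h}_{\mathrm{rd}}]_n\big| = \sqrt{\beta_{\mathrm{sd}}} + N_\mathrm{ref}\,\alpha\sqrt{\beta_{\mathrm{sr}}\beta_{\mathrm{rd}}}$ under the equal-gain notation already adopted in the proof of the RIS/STAR-RIS rate lemma. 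Dividing the received signal power $p\,\big|h_{\mathrm{sd}} + \mathbf{h}_{\mathrm{sr}}^{\mathrm{T}} \boldsymbol{\Phi}\, \mathbf{h}_{\mathrm{rd}}\big|^{2}$ by $\sigma^{2}$ and applying $\log_2(1+\cdot)$ delivers the first logarithm in \eqref{rate_secrecy_RIS}.

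For $R_{\mathrm{e}}^{\mathrm{RIS}}$ the crucial observation is that the RIS profile is locked to the S--D geometry, so the reflected replica $\sqrt{p}\,\mathbf{h}_{\mathrm{sr}}^{\mathrm{T}} \boldsymbol{\Phi}\, \mathbf{h}_{\mathrm{re}}\, x$ arriving at E is not a coherently combinable component for E and must be treated as interference; E decodes $x$ from the direct S--E link alone, seeing signal power $p\beta_{\mathrm{se}}$ against interference-plus-noise power $p\,\big|\mathbf{h}_{\mathrm{sr}}^{\mathrm{T}} \boldsymbol{\Phi}\, \mathbf{h}_{\mathrm{re}}\big|^{2} + \sigma^{2}$. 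Retaining the worst-case (fully constructive) reflected interference via the triangle inequality, $\big|\mathbf{h}_{\mathrm{sr}}^{\mathrm{T}} \boldsymbol{\Phi}\, \mathbf{h}_{\mathrm{re}}\big| \le \alpha\sum_n \big|[\mathbf{h}_{\mathrm{sr}}]_n\big|\,\big|[\mathbf{h}_{\mathrm{re}}]_n\big| = N_\mathrm{ref}\,\alpha\sqrt{\beta_{\mathrm{sr}}\beta_{\mathrm{re}}}$, gives $R_{\mathrm{e}}^{\mathrm{RIS}} = \log_2\!\big(1 + p\beta_{\mathrm{se}}\,/\,(p\,N_\mathrm{ref}^{2}\alpha^{2}\beta_{\mathrm{sr}}\beta_{\mathrm{re}} + \sigma^{2})\big)$, which is the second logarithm in \eqref{rate_secrecy_RIS}. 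Subtracting the two rates and clipping at zero closes the proof.

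I expect the only genuinely non-routine step to be the modelling of $R_{\mathrm{e}}^{\mathrm{RIS}}$: one must make explicit (i) that E is assumed unable to beamform-combine the RIS-reflected copy of the message with its direct observation — which is exactly what places the $N_\mathrm{ref}^{2}$ factor in the denominator rather than boosting the numerator — and (ii) that replacing $\big|\mathbf{h}_{\mathrm{sr}}^{\mathrm{T}} \boldsymbol{\Phi}\, \mathbf{h}_{\mathrm{re}}\big|^{2}$ by its maximal value $N_\mathrm{ref}^{2}\alpha^{2}\beta_{\mathrm{sr}}\beta_{\mathrm{re}}$ is the convention consistent with the equal-gain bookkeeping used for $R_{\mathrm{d}}^{\mathrm{RIS}}$. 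Once that convention is fixed, both rate evaluations are immediate and everything else is substitution.
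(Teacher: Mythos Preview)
Your proposal is correct and follows the same template as the paper: Wyner decomposition, coherent phase alignment for $R_{\mathrm{d}}^{\mathrm{RIS}}$, reflected-as-interference modelling for $R_{\mathrm{e}}^{\mathrm{RIS}}$, and the equal-gain bookkeeping $\sum_n |[\mathbf{h}_{\mathrm{sr}}]_n[\mathbf{h}_{\cdot}]_n| = N_\mathrm{ref}\sqrt{\beta_{\mathrm{sr}}\beta_{\cdot}}$ throughout.

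The one methodological difference is in how $R_{\mathrm{e}}^{\mathrm{RIS}}$ is justified. The paper does \emph{not} keep the RIS profile locked to the S--D geometry; it writes $R_{\mathrm{e}}^{\mathrm{RIS}} = \min_{\theta_1,\ldots,\theta_{N_\mathrm{ref}}}\log_2(\cdot)$ as a separate optimisation and selects a distinct phase profile $\theta_n = \arg(h_{\mathrm{se}}) - \arg\big([\mathbf{h}_{\mathrm{sr}}]_n[\mathbf{h}_{\mathrm{re}}]_n\big)$ that maximises the interference term exactly. You instead keep the D-optimal phases fixed and invoke the triangle inequality to bound the interference by its coherent maximum. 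Both routes land on the same closed form because the equal-gain convention collapses the bound to an equality, but they differ in interpretation: the paper's version is an idealised ``best-case secrecy'' evaluation (two incompatible phase configurations used simultaneously), while yours is a single-configuration argument that is physically consistent but, as you note, only delivers the stated $R_{\mathrm{e}}^{\mathrm{RIS}}$ as a lower bound in general. Your flagged caveats (i) and (ii) are exactly the right places to make this explicit.
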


\begin{proof}
The secrecy rate of the RIS in \eqref{rate_secrecy_RIS} can be generally found as $R_{\mathrm{sec}}^{\text{RIS}} =\max\left [0, R_{\mathrm{d}}^{\text{RIS}} - R_{\mathrm{e}}^{\text{RIS}}\right]$. In this setup, the achievable rate at D, denoted by $R_{\mathrm{d}}^{\text{RIS}}$, is expressed as
\cite{REF_Emil_3}
\begin{multline}
R_{\mathrm{d}}^{\text{RIS}} = \max _{\theta_{1}, \ldots, \theta_{N_\mathrm{ref}}} \log _{2} \left( 1 + \frac{p}{\sigma^{2}}\left|h_{\mathrm{sd}}+\mathbf{h}_{\mathrm{sr}}^{\mathrm{T}} \boldsymbol{\Phi} \mathbf{h}_{\mathrm{rd}}\right|^{2}\right)\\
= \log_{2} \left( 1 + \frac{p}{\sigma^{2}}\bigg( \left| h_{\mathrm{sd}}\right| + \alpha \sum_{n=1}^{N_\mathrm{ref}}  e^{j \theta_{n}}\left[\mathbf{h}_{\mathrm{sr}}\right]_{n}\left[\mathbf{h}_{\mathrm{rd}}\right]_{n} \bigg)^{2} \right).
\end{multline}
This rate can be expressed in terms of the channel gains as
\begin{equation}\label{rate_RIS}
\begin{aligned}
R_{\mathrm{d}}^{\text{RIS}} =
\log_{2}\left(1+\frac{p}{\sigma^{2}}\left(\sqrt{\beta_{\mathrm{sd}}}+ N_\mathrm{ref} \alpha \sqrt{\beta_{\mathrm{sr}}\beta_{\mathrm{rd}}}\right)^{2}\right).
\end{aligned}
\end{equation}
This expression is achieved for a given phase-shifting matrix $\boldsymbol{\Phi}$, where the term $\mathbf{h}_{\mathrm{sr}}^{\mathrm{T}} \boldsymbol{\Phi} \mathbf{h}_{\mathrm{rd}}$ is simplified to $\alpha \sum_{n=1}^{N_\mathrm{ref}} e^{j \theta_{n}}\left[\mathbf{h}_{\mathrm{sr}}\right]_{n}\left[\mathbf{h}_{\mathrm{rd}}\right]_{n}$, where $\left[\mathbf{h}_{\mathrm{sr}}\right]_{n}$ and $\left[\mathbf{h}_{\mathrm{rd}}\right]_{n}$ are the $n$-th components. When the phase-shifts are chosen as $\theta_{n}=\arg \left(h_{\mathrm{sd}}\right)-\arg \left(\left[\mathbf{h}_{\mathrm{sr}}\right]_{n}\left[\mathbf{h}_{\mathrm{rd}}\right]_{n}\right)$, the maximum rate is reached so that all terms in the summation have the same phase as $h_{\mathrm{sd}}$. In addition,  the notation $\sum_{n=1}^{N_\mathrm{ref}}\left|\left[\mathbf{h}_{\mathrm{sr}}\right]_{n}\left[\mathbf{h}_{\mathrm{rd}}\right]_{n}\right|={N_\mathrm{ref}}\sqrt{\beta_{\mathrm{sr}} \beta_{\mathrm{rd}}}$ is adopted to derive the achievable rate in \eqref{rate_RIS}.

The rate at E, denoted by $R_{\mathrm{e}}^{\text{RIS}}$, can be expressed as
\begin{multline}
R_{\mathrm{e}}^{\text{RIS}} = \min_{\theta_{1}, \ldots, \theta_{N_\mathrm{ref}}} \log_{2} \left( 1 + \frac{p \left|h_{\mathrm{se}} \right|^{2}}
{p \left|\mathbf{h}_{\mathrm{sr}}^{\mathrm{T}} \boldsymbol{\Phi} \mathbf{h}_{\mathrm{re}}\right|^{2} + \sigma^{2}} \right)\\
= \log_{2} \left( 1 + \frac{p \left|h_{\mathrm{se}} \right|^{2} }
{p \big( \alpha \sum_{n=1}^{N_\mathrm{ref}}  e^{j \theta_{n}}\left[\mathbf{h}_{\mathrm{sr}}\right]_{n}\left[\mathbf{h}_{\mathrm{re}}\right]_{n} \big)^{2} + \sigma^{2}} \right),
\end{multline}
where the signal from the RIS is treated as interference by the E. Therefore, the rate at E can be expressed in terms of the channel gains as
\begin{equation}\label{rate_e_RIS}
R_{\mathrm{e}}^{\text{RIS}} = \log_{2}\left(1 + \frac{p {\beta_{\mathrm{se}}}}
{p N_\mathrm{ref}^2 \alpha^2 \beta_{\mathrm{sr}} \beta_{\mathrm{re}} + \sigma^{2}} \right),
\end{equation}
where the term $\mathbf{h}_{\mathrm{sr}}^{\mathrm{T}} \boldsymbol{\Phi} \mathbf{h}_{\mathrm{re}}$ is simplified to $\alpha \sum_{n=1}^{N_\mathrm{ref}} e^{j \theta_{n}}\left[\mathbf{h}_{\mathrm{sr}}\right]_{n}\left[\mathbf{h}_{\mathrm{re}}\right]_{n}$. Here, $\left[\mathbf{h}_{\mathrm{re}}\right]_{n}$ is the $n$-th component. When the $n$-th phase-shifts are chosen as $\theta_{n}=\arg \left(h_{\mathrm{se}}\right)-\arg \left(\left[\mathbf{h}_{\mathrm{sr}}\right]_{n}\left[\mathbf{h}_{\mathrm{re}}\right]_{n}\right)$, the optimal rate is reached so that all terms in the summation have the same phase as $h_{\mathrm{se}}$. In addition,  the notation $\sum_{n=1}^{N_\mathrm{ref}}\left|\left[\mathbf{h}_{\mathrm{sr}}\right]_{n}\left[\mathbf{h}_{\mathrm{re}}\right]_{n}\right|={N_\mathrm{ref}}\sqrt{\beta_{\mathrm{sr}} \beta_{\mathrm{re}}}$ is adopted to derive the achievable rate in \eqref{rate_e_RIS}. By substituting \eqref{rate_RIS} and \eqref{rate_e_RIS} into the general expression of $R_{\mathrm{sec}}^{\text{RIS}}$, the secrecy rate expression of the RIS-supported system in \eqref{rate_secrecy_RIS} is obtained.
\end{proof}

\begin{figure*}[t]
\centering
\subfloat[\label{figure_2a}]{\includegraphics[width=9.3cm,height=5.4cm]{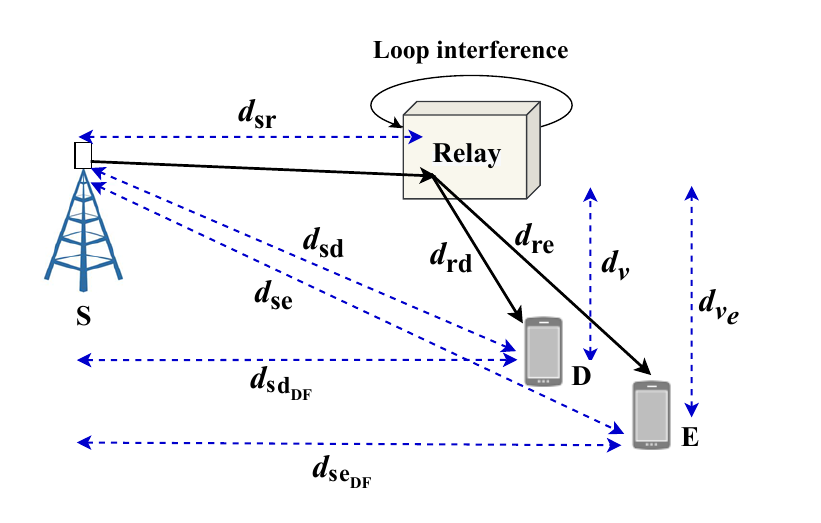}}
\subfloat[\label{figure_2b}]
{\includegraphics[width=9cm,height=5.4cm]{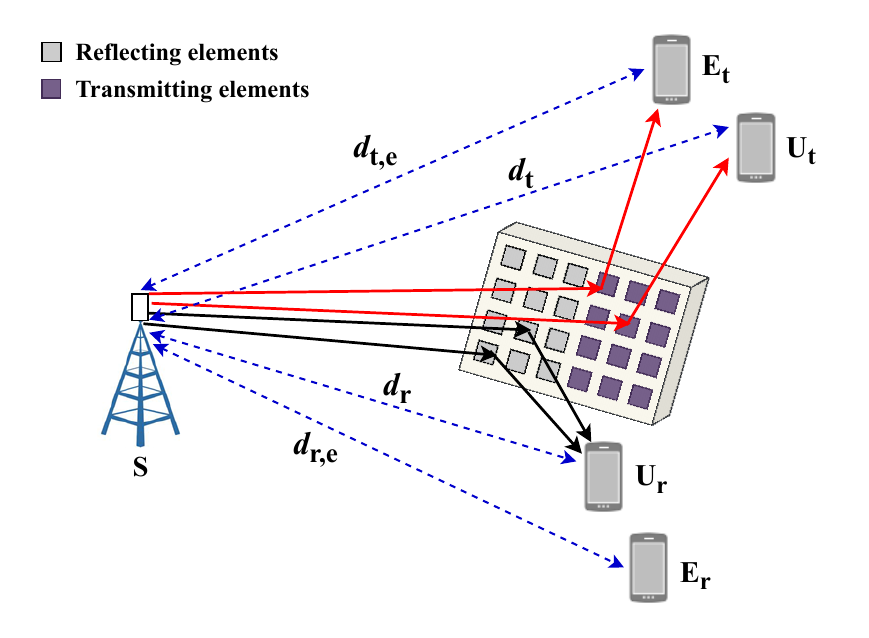}}
\caption{Illustration of (a) a two-antenna relay-supported network with an E and, (b) STAR-RIS-supported network with Es.}
\label{figure_2}
\end{figure*}

\subsection{Secrecy rate of STAR-RIS-supported transmission}

In this subsection, a STAR-RIS-assisted secure communication system, as shown in Fig. \ref{figure_2b}, is considered. This system consists of a single-antenna S, two single-antenna Ds, one in the reflection zone and one in the transmission zone, and an E in the reflection zone and another E in the transmission zone, with the aid of a STAR-RIS that consists of $N_\mathrm{ref}$ passive elements. The STAR-RIS is divided into two parts with $N_\text{t}$ transmitting elements and $N_\text{r}$ reflecting elements. The received signals at the legitimate users in the reflection and transmission zones are expressed as in \eqref{ref_received} and \eqref{tra_received}, respectively, while the received signals at the Es in the reflection and transmission zones are
\begin{equation}
y_\mathrm{e}^{\mathrm{ref}} = \sqrt{p} \, h_{\mathrm{s {e_{ref}}}} x + \sqrt{p} \, \mathbf{h}_{\mathrm{s {r_{ref}}}}^{\mathrm{T}} \boldsymbol{\Phi}_{\text{r}, \text{e}} \mathbf{h}_{\mathrm{r e_\mathrm{ref}}} x + n_{\text{e}_\mathrm{ref}},
\end{equation}
and
\begin{equation}
y_\mathrm{e}^{\mathrm{tra}} = \sqrt{p} \, h_{\mathrm{s {e_{tra}}}} x + \sqrt{p} \, \mathbf{h}_{\mathrm{s {r_{tra}}}}^{\mathrm{T}} \boldsymbol{\Phi}_{\text{t}, \text{e}} \mathbf{h}_{\mathrm{r e_\mathrm{tra}}} x + n_{\text{e}_\mathrm{tra}},
\end{equation}
respectively, where $h_{\mathrm{s {e_{ref}}}} \in \mathbb{C}$ and $h_{\mathrm{s {e_{tra}}}} \in \mathbb{C}$
are the channels between S and Es in the reflection and transmission zones, respectively, $\mathbf{h}_{\mathrm{r e_\mathrm{ref}}} \in \mathbb{C}^{N_\text{r}}$ and $\mathbf{h}_{\mathrm{r e_\mathrm{tra}}} \in \mathbb{C}^{N_\text{t}}$ denote the channel between each part in the STAR-RIS and Es in the reflection and transmission zones, respectively, while $n_{\text{e}_\mathrm{ref}} \sim \mathcal{N}\mathcal{C}\left(0, {\sigma_{\text{e}_\mathrm{ref}}^{2}}\right)$ and $n_{\text{e}_\mathrm{tra}} \sim \mathcal{N}\mathcal{C}\left(0, {\sigma_{\text{e}_\mathrm{tra}}^{2}}\right)$ are the receiver noises at Es in the reflection and transmission zones, respectively. It is assumed in this paper that $\sigma_{\text{e}_\mathrm{ref}}^{2} = \sigma_{\text{e}_\mathrm{tra}}^{2} = \sigma^{2}$. 
In addition, the diagonal reflection and transmission matrices can be defined as $\boldsymbol{\Phi}_{\text{r}, \text{e}}  \triangleq \alpha_\text{r} \operatorname {diag} \left(\zeta e^{j \theta_{\text{r},1}}, \ldots, \zeta e^{j \theta_{\text{r},{N}_{\text{r}}}}\right)$ and $\boldsymbol{\Phi}_{\text{t}, \text{e}}  \triangleq \alpha_\text{t} \operatorname {diag} \left({\sqrt{1-{\zeta}^2}}e^{j \theta_{\text{t},1}}, \ldots, {\sqrt{1-{\zeta}^2}} e^{j \theta_{\text{t}, {N}_{\text{t}}}}\right)$, respectively. 
It is also assumed that $\alpha_\text{r} = \alpha_\text{t} = \alpha$.

\begin{proposition}\label{Proposition_8}
The secrecy rates of the STAR-RIS-supported system at the reflection and transmission zones, denoted by $R_{\mathrm{sec}}^{\mathrm{ref}}$ and $R_{\mathrm{sec}}^{\mathrm{tra}}$, are given by the following expressions
\begin{multline}\label{rate_secrecy_ref}
R_{\mathrm{sec}}^{\mathrm{ref}} = \\
\max\Bigg[0, \log_{2}\left(1 + \frac{p \left(\sqrt{\beta_{\mathrm{s d_\mathrm{ref}}}} + N_\mathrm{r} \alpha_\mathrm{r} \zeta \sqrt{\beta_{\mathrm{sr}} \beta_{\mathrm{r d_\mathrm{ref}}}}\right)^{2}}{\sigma^{2}}\right) \\
- \log_{2}\left(1 + \frac{p {\beta_{\mathrm{s e_\mathrm{ref}}}}}
{p N_\mathrm{r}^2 \alpha_\mathrm{r}^2 \zeta^2 \beta_{\mathrm{sr}} \beta_{\mathrm{r e_\mathrm{ref}}} + \sigma^{2}}\right) \Bigg],
\end{multline}
and
\begin{multline}\label{rate_secrecy_tra}
R_{\mathrm{sec}}^{\mathrm{tra}} = \\
\max\Bigg[0, \log_{2}\Bigg(1 + \frac{p \left(\sqrt{\beta_{\mathrm{s d_\mathrm{tra}}}} + N_\mathrm{t} \alpha_\mathrm{t} \sqrt{1-\zeta^2} \sqrt{\beta_{\mathrm{sr}} \beta_{\mathrm{r d_\mathrm{tra}}}}\right)^{2}}{\sigma^{2}}\Bigg) \\
- \log_{2}\left(1 + \frac{p {\beta_{\mathrm{s e_\mathrm{tra}}}}}
{p N_\mathrm{t}^2 \alpha_\mathrm{t}^2 (1-\zeta^2) \beta_{\mathrm{sr}} \beta_{\mathrm{r e_\mathrm{tra}}} + \sigma^{2}}\right) \Bigg].
\end{multline}
\end{proposition}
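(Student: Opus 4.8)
The plan is to follow the structure of the proof of Proposition~\ref{Proposition_7}, since the STAR-RIS secrecy rate in each zone is simply the difference between an already-established legitimate-user rate and a newly-derived eavesdropper rate, clipped at zero. I would begin from the generic definitions $R_{\mathrm{sec}}^{\mathrm{ref}} = \max\big[0,\,R_{\mathrm{d}}^{\mathrm{ref}} - R_{\mathrm{e}}^{\mathrm{ref}}\big]$ and $R_{\mathrm{sec}}^{\mathrm{tra}} = \max\big[0,\,R_{\mathrm{d}}^{\mathrm{tra}} - R_{\mathrm{e}}^{\mathrm{tra}}\big]$. The legitimate-user terms require no new work: $R_{\mathrm{d}}^{\mathrm{ref}}$ and $R_{\mathrm{d}}^{\mathrm{tra}}$ are precisely \eqref{rate_ref} and \eqref{rate_tra} of Lemma~2, which already incorporate the optimal phase alignment toward $\mathrm{U}_{\mathrm{r}}$ and $\mathrm{U}_{\mathrm{t}}$ and the MS split into $N_\mathrm{r}$ reflecting and $N_\mathrm{t}$ transmitting elements.

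The substantive step is deriving the eavesdropper rates. Starting from the received signals $y_{\mathrm{e}}^{\mathrm{ref}}$ and $y_{\mathrm{e}}^{\mathrm{tra}}$, the eavesdropper in each zone is assumed to treat the cascaded STAR-RIS term as interference that it cannot separate from the direct path, exactly as for the conventional RIS in Proposition~\ref{Proposition_7}. This gives $R_{\mathrm{e}}^{\mathrm{ref}} = \log_2\!\big(1 + p\beta_{\mathrm{se_{ref}}}/(p\,|\mathbf{h}_{\mathrm{s r_{ref}}}^{\mathrm{T}}\boldsymbol{\Phi}_{\mathrm{r},\mathrm{e}}\mathbf{h}_{\mathrm{r e_{ref}}}|^{2} + \sigma^{2})\big)$ and an analogous expression in the transmission zone. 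I would then expand $\mathbf{h}_{\mathrm{s r_{ref}}}^{\mathrm{T}}\boldsymbol{\Phi}_{\mathrm{r},\mathrm{e}}\mathbf{h}_{\mathrm{r e_{ref}}} = \alpha_\mathrm{r}\zeta\sum_{n_\mathrm{r}=1}^{N_\mathrm{r}} e^{j\theta_{n_\mathrm{r}}}[\mathbf{h}_{\mathrm{s r_{ref}}}]_{n_\mathrm{r}}[\mathbf{h}_{\mathrm{r e_{ref}}}]_{n_\mathrm{r}}$ and invoke the same deterministic-gain convention used in Lemma~2 and Proposition~\ref{Proposition_7}, namely $\sum_{n_\mathrm{r}}|[\mathbf{h}_{\mathrm{s r_{ref}}}]_{n_\mathrm{r}}[\mathbf{h}_{\mathrm{r e_{ref}}}]_{n_\mathrm{r}}| = N_\mathrm{r}\sqrt{\beta_{\mathrm{sr}}\,\beta_{\mathrm{re_{ref}}}}$, to obtain $|\mathbf{h}_{\mathrm{s r_{ref}}}^{\mathrm{T}}\boldsymbol{\Phi}_{\mathrm{r},\mathrm{e}}\mathbf{h}_{\mathrm{r e_{ref}}}|^{2} = N_\mathrm{r}^{2}\alpha_\mathrm{r}^{2}\zeta^{2}\beta_{\mathrm{sr}}\beta_{\mathrm{re_{ref}}}$. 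Substituting yields the eavesdropper term in \eqref{rate_secrecy_ref}; replacing $(N_\mathrm{r},\alpha_\mathrm{r},\zeta^{2},\beta_{\mathrm{re_{ref}}},\beta_{\mathrm{se_{ref}}})$ by $(N_\mathrm{t},\alpha_\mathrm{t},1-\zeta^{2},\beta_{\mathrm{re_{tra}}},\beta_{\mathrm{se_{tra}}})$ gives the term in \eqref{rate_secrecy_tra}. Plugging $R_{\mathrm{d}}^{\mathrm{ref}},R_{\mathrm{e}}^{\mathrm{ref}}$ and $R_{\mathrm{d}}^{\mathrm{tra}},R_{\mathrm{e}}^{\mathrm{tra}}$ back into the $\max[0,\cdot]$ forms produces \eqref{rate_secrecy_ref} and \eqref{rate_secrecy_tra}.

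I expect the only delicate point --- more a modeling convention than a genuine obstacle --- to be the bookkeeping of phases and magnitudes at the eavesdropper. The STAR-RIS phases $\theta_{n_\mathrm{r}}$ (and $\theta_{n_\mathrm{t}}$) are fixed by the legitimate user's alignment, so the cascaded term at the eavesdropper is not in general coherently combined; the expressions above correspond to the same boundary/worst-case convention already adopted for the RIS in Proposition~\ref{Proposition_7}, in which every per-element product is assigned the common magnitude $\sqrt{\beta_{\mathrm{sr}}\,\beta_{\mathrm{re_{ref}}}}$ and the amplitude and power-splitting factors $\alpha_\mathrm{r},\zeta$ (respectively $\alpha_\mathrm{t},\sqrt{1-\zeta^{2}}$) are pulled outside the summation. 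I would state this convention explicitly, and also note that under the MS protocol the reflection-zone eavesdropper is illuminated only by the $N_\mathrm{r}$ reflecting elements and the transmission-zone eavesdropper only by the $N_\mathrm{t}$ transmitting elements, so the two zones decouple and can be analyzed independently; with $\alpha_\mathrm{r}=\alpha_\mathrm{t}=\alpha$ the result then matches the notation in the statement.
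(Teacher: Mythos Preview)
Your proposal is correct and mirrors the paper's own proof almost step for step: start from $R_{\mathrm{sec}}^{\mathrm{ref}}=\max[0,R_{\mathrm{d}}^{\mathrm{ref}}-R_{\mathrm{e}}^{\mathrm{ref}}]$ (and the analogous transmission-zone identity), import $R_{\mathrm{d}}^{\mathrm{ref}},R_{\mathrm{d}}^{\mathrm{tra}}$ from \eqref{rate_ref}--\eqref{rate_tra}, treat the cascaded STAR-RIS term at each eavesdropper as interference, expand $\mathbf{h}_{\mathrm{sr}}^{\mathrm{T}}\boldsymbol{\Phi}\mathbf{h}_{\mathrm{re}}$ elementwise, and apply the deterministic-gain convention $\sum_{n}|[\mathbf{h}_{\mathrm{sr}}]_n[\mathbf{h}_{\mathrm{re}}]_n|=N\sqrt{\beta_{\mathrm{sr}}\beta_{\mathrm{re}}}$ to obtain the closed forms. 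Your explicit remark that the phase/magnitude bookkeeping at the eavesdropper is a modeling convention (inherited from Proposition~\ref{Proposition_7}) is, if anything, a slightly cleaner acknowledgment than the paper's own wording.
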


\begin{proof}
The secrecy rates of the STAR-RIS in \eqref{rate_secrecy_ref} and \eqref{rate_secrecy_tra} are generally found as $R_{\mathrm{sec}}^{\mathrm{ref}} = \max\left[0, R_{\mathrm{d}}^{\mathrm{ref}} - R_{\mathrm{e}}^{\mathrm{ref}}\right]$ and $R_{\mathrm{sec}}^{\mathrm{tra}} = \max\left[0, R_{\mathrm{d}}^{\mathrm{tra}} - R_{\mathrm{e}}^{\mathrm{tra}}\right]$, respectively. In this setup, $R_{\mathrm{sec}}^{\mathrm{ref}}$ and $R_{\mathrm{sec}}^{\mathrm{tra}}$ are expressed as in \eqref{rate_ref} and \eqref{rate_tra}, while the rates at Es in the reflection and transmission zones, denoted by $R_{\mathrm{e}}^{\mathrm{ref}}$ and $R_{\mathrm{e}}^{\mathrm{tra}}$, can be expressed as in \eqref{rate_secrecy_ref_top} and \eqref{rate_secrecy_tra_top}, respectively.
\begin{figure*}[t!]
\begin{equation} \label{rate_secrecy_ref_top}
R_{\mathrm{e}}^{\mathrm{ref}} = \min_{\theta_{1}, \ldots, \theta_{N_\mathrm{r}}} \log_{2}\left(1 + \frac{p \left|h_{\mathrm{s {e_{ref}}}} \right|^{2}}
{p \left|\mathbf{h}_{\mathrm{s {r_{ref}}}}^{\mathrm{T}} \boldsymbol{\Phi}_{\text{r}, \text{e}} \mathbf{h}_{\mathrm{r e_\mathrm{ref}}}\right|^{2} + \sigma^{2}}\right) = 
\log_{2}\left(1 + \frac{p \left|h_{\mathrm{s {e_{ref}}}} \right|^{2}}
{p \big( \alpha_\text{r} \zeta \sum_{n_\text{r}=1}^{N_\text{r}} e^{j \theta_{n_\text{r}}}\left[\mathbf{h}_{\mathrm{s {r_{ref}}}}\right]_{n_\text{r}}\left[\mathbf{h}_{\mathrm{r e_\mathrm{ref}}}\right]_{n_\text{r}} \big)^{2} + \sigma^{2} } \right).
\end{equation}
\end{figure*}
and
\begin{figure*}[t!]
\begin{equation} \label{rate_secrecy_tra_top}
R_{\mathrm{e}}^{\mathrm{tra}} = \min_{\theta_{1}, \ldots, \theta_{N_\mathrm{t}}} \log_{2}\left(1 + \frac{p \left|h_{\mathrm{s {e_{tra}}}} \right|^{2}}
{p \left|\mathbf{h}_{\mathrm{s {r_{tra}}}}^{\mathrm{T}} \boldsymbol{\Phi}_{\text{t}, \text{e}} \mathbf{h}_{\mathrm{r e_\mathrm{tra}}}\right|^{2} + \sigma^{2}}\right) =
\log_{2}\left(1 + \frac{p \left|h_{\mathrm{s {e_{tra}}}} \right|^{2}}
{p \big( \alpha_\text{t} {\sqrt{1-{\zeta}^2}} \sum_{n_\text{t}=1}^{N_\text{t}} e^{j \theta_{n_\text{t}}}\left[\mathbf{h}_{\mathrm{s {r_{tra}}}}\right]_{n_\text{t}}\left[\mathbf{h}_{\mathrm{r e_\mathrm{tra}}}\right]_{n_\text{t}} \big)^{2} + \sigma^{2} } \right).
\end{equation}
\hrule
\end{figure*}
These two expressions can be formulated in terms of the channel gains as
\begin{equation}
R_{\mathrm{e}}^{\mathrm{ref}} = \log_{2}\left(1 + \frac{p {\beta_{\mathrm{s e_\mathrm{ref}}}}}
{p N_\text{r}^2 \alpha_\text{r}^2 \zeta^2 \beta_{\mathrm{sr}} \beta_{\mathrm{r e_\mathrm{ref}}} + \sigma^{2}}\right),
\end{equation}
and
\begin{equation}
R_{\mathrm{e}}^{\mathrm{tra}} = \log_{2}\left(1 + \frac{p {\beta_{\mathrm{s e_\mathrm{tra}}}}}
{p N_\text{t}^2 \alpha_\text{t}^2 (1-\zeta^2) \beta_{\mathrm{sr}} \beta_{\mathrm{r e_\mathrm{tra}}} + \sigma^{2}}\right),
\end{equation}
respectively. For any given phase-shifting matrices, $\boldsymbol{\Phi}_{\text{r}, \text{e}}$ and $\boldsymbol{\Phi}_{\text{t}, \text{e}}$, $\mathbf{h}_{\mathrm{s {r_{ref}}}}^{\mathrm{T}} \boldsymbol{\Phi}_{\text{r}, \text{e}} \mathbf{h}_{\mathrm{r e_\mathrm{ref}}} = \alpha_\text{r} \zeta \sum_{n_\text{r}=1}^{N_\text{r}} e^{j \theta_{n_\text{r}}}\left[\mathbf{h}_{\mathrm{s {r_{ref}}}}\right]_{n_\text{r}}\left[\mathbf{h}_{\mathrm{r e_\mathrm{ref}}}\right]_{n_\text{r}}$ and $\mathbf{h}_{\mathrm{s {r_{tra}}}}^{\mathrm{T}} \boldsymbol{\Phi}_{\text{t}, \text{e}} \mathbf{h}_{\mathrm{r e_\mathrm{tra}}} = \alpha_\text{t} {\sqrt{1-{\zeta}^2}} \sum_{n_\text{t}=1}^{N_\text{t}}e^{j \theta_{n_\text{t}}}\left[\mathbf{h}_{\mathrm{s {r_{tra}}}}\right]_{n_\text{t}}\left[\mathbf{h}_{\mathrm{r e_\mathrm{tra}}}\right]_{n_\text{t}}$, where $\left[\mathbf{h}_{\mathrm{r e_\mathrm{ref}}}\right]_{n_\text{r}}$ and $\left[\mathbf{h}_{\mathrm{r e_\mathrm{tra}}}\right]_{n_\text{t}}$ are the $n_\text{r}$-th components and $n_\text{t}$-th components, respectively. In addition, the $n_\text{r}$-th and $n_\text{t}$-th phase-shifts at the STAR-RIS are chosen as $\theta_{n_\text{r}}=\arg \left(h_{\mathrm{s e_\mathrm{ref}}}\right)-\arg \left(\left[\mathbf{h}_{\mathrm{s {r_{ref}}}}\right]_{n_\text{r}}\left[\mathbf{h}_{\mathrm{r e_\mathrm{ref}}}\right]_{n_\text{r}}\right)$ and $\theta_{n_\text{t}}=\arg \left(h_{\mathrm{s e_\mathrm{tra}}}\right)-\arg \left(\left[\mathbf{h}_{\mathrm{s {r_{tra}}}}\right]_{n_\text{t}}\left[\mathbf{h}_{\mathrm{r e_\mathrm{tra}}}\right]_{n_\text{t}}\right)$ so that the optimal solution is achieved. For simplicity, the following notations, $\sum_{n_\text{r}=1}^{N_\text{r}}\left|\left[\mathbf{h}_{\mathrm{s {r_{ref}}}}\right]_{n_\text{r}}\left[\mathbf{h}_{\mathrm{r e_\mathrm{ref}}}\right]_{n_\text{r}}\right|={N_\text{r}}\sqrt{\beta_{\mathrm{sr}} \beta_{\mathrm{r e_\mathrm{ref}}}}$, $\sum_{n_\text{t}=1}^{N_\text{t}}\left|\left[\mathbf{h}_{\mathrm{s {r_{tra}}}}\right]_{n_\text{t}}\left[\mathbf{h}_{\mathrm{r e_\mathrm{tra}}}\right]_{n_\text{t}}\right|={N_\text{t}}\sqrt{\beta_{\mathrm{sr}} \beta_{\mathrm{r e_\mathrm{tra}}}}$, ${\left|h_{\mathrm{s e_\mathrm{ref}}}\right|^{2}} = \beta_{\mathrm{s e_\mathrm{ref}}}$ and ${\left|h_{\mathrm{s e_\mathrm{tra}}}\right|^{2}} = \beta_{\mathrm{s e_\mathrm{tra}}}$, are used. Therefore, the aforementioned secrecy rate expressions \eqref{rate_secrecy_ref} and \eqref{rate_secrecy_tra} can be readily derived.
\end{proof}

\section{Numerical Results and Discussion}
\label{sec4}
In this section, we compare the performance of STAR-RIS, HD-DF relaying, FD-DF relaying, and RIS-supported systems in terms of both rate and secrecy rate. The evaluation is conducted by varying several key parameters, including the reflection-to-transmission power ratio $\zeta$, the distance $d$, the number of elements $N_\mathrm{ref}$, the transmit power $p$, and the element-splitting factor $K$\footnote{The element-splitting factor represents the distribution ratio between the reflecting elements and the total number of elements in the STAR-RIS, with values in the range $0 < K < 1$.}, where $N_\text{r} = K {N_\mathrm{ref}}$ and $N_\text{t} = (1 - K) {N_\mathrm{ref}}$. For the sake of fair comparisons, the R is placed at the same location as the RIS and STAR-RIS units. Following \cite{REF_Emil_3} and \cite{9569598}, both line-of-sight (LoS) and non-LOS (NLoS) versions of the 3GPP Urban Micro are adopted to characterize the channel gains for distances of at least ten meters. Since shadow fading is neglected, the channel gains are expressed as a function of the carrier frequency $f_c$ (in GHz) and the distance $d$ (in meters), given by
\begin{equation}
\begin{aligned}
&\beta(d) \, [\mathrm{dB}] = G_{\text{T}} + G_{\text{R}} + \\
& \begin{cases}-28-20\log_{10}(f_c) - 22 \log_{10}(d) & \text {if LoS}, \\
-22.7 - 26\log_{10}(f_c) - 36.7 \log_{10}(d) & \text {if NLoS},
\end{cases}
\end{aligned}
\end{equation}
where $G_{\text{T}}$ and $G_{\text{R}}$ denote the antenna gains (in dBi) at the transmitter and receiver, respectively. Equal-sized 5 $\mathrm{dBi}$ antennas at S, R, RIS and STAR-RIS are assumed, while omnidirectional antennas with 0 $\mathrm{dBi}$ are assumed at the legitimate users and Es. 
As shown in Figs. \ref{figure_1} and \ref{figure_2}, the legitimate Ds in the reflection and transmission zones are denoted by $\text{U}_{\text{r}}$ and $\text{U}_{\text{t}}$, respectively, while the Es are denoted in Fig. \ref{figure_2} by $\text{E}_{\text{r}}$ and $\text{E}_{\text{t}}$. Additionally, the distances between S and Ds in the reflection and transmission zones are defined by the variables $d_\text{r}$ and $d_\text{t}$, respectively, and the distances between S and Es in the reflection and transmission zones are defined by $d_\text{r,e}$ and $d_\text{t,e}$, respectively, while R, RIS and STAR-RIS are placed at fixed locations, and S is located at the origin. In addition, $d_\text{sr}$ denotes the distance between S and R, while $d_{\text{s} \text{d}_\text{DF}}$ and $d_{\text{s} \text{e}_\text{DF}}$ for DF relaying cases and both $d_\text{r}$ and $d_\text{t}$ for STAR-RIS setup can be found using the Pythagoras theorem to be used in calculating the channel gains, given the values of $d_{\text{s} \text{d}_\text{r}}$, $d_{\text{s} \text{d}_\text{t}}$, $d_{\text{s} \text{e}_\text{r}}$, $d_{\text{s} \text{e}_\text{t}}$, $d_\textit{v}$ and $\it{d}_{v_e}$. 
The noise power is considered as $-174 + 10 \log_{10}(B) + \textit{NF}$ in dBm, where it is assumed that the bandwidth is $B=10$ MHz, the noise figure is $\textit{NF} = 10$ dB, and the carrier frequency $f_c = 3$ GHz. We assume $\alpha_\text{r} = \alpha_\text{t} = 1$, and a residual loop interference channel gain of $\beta_{\text{li}} = -130$ dB, representing nearly perfect self-interference cancellation.

\begin{figure*}[t]
\centering
\subfloat[\label{figure_3a}]{\includegraphics[width=8.7cm,height=6.5cm]{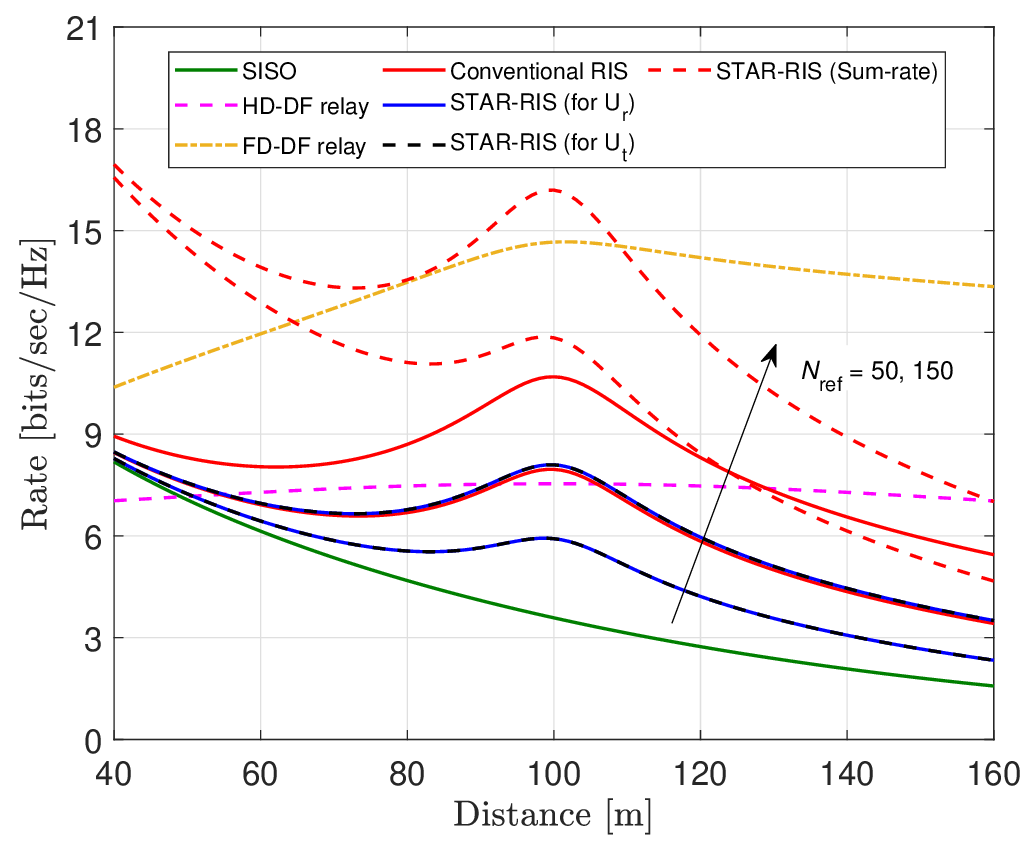}}
\subfloat[\label{figure_3b}]
{\includegraphics[width=8.7cm,height=6.5cm]{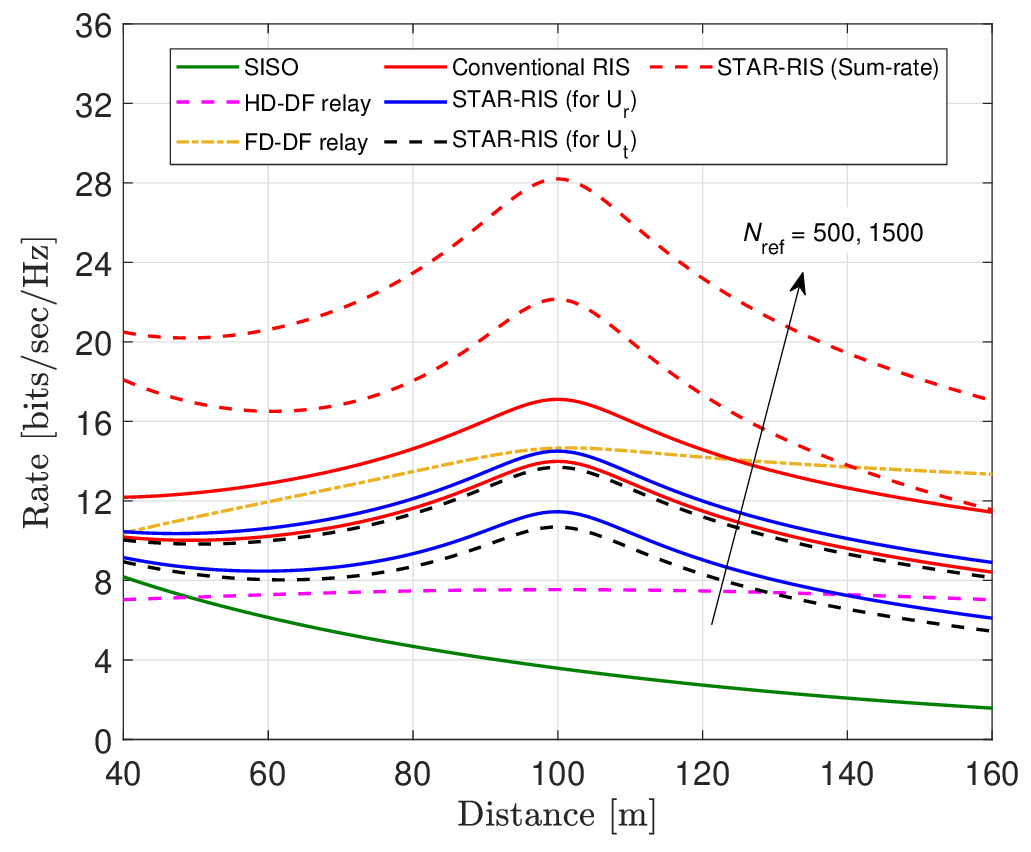}}
\caption{Achievable rate performance versus the distances, $d_{\text{s} \text{d}_\text{r}} = d_{\text{s} \text{d}_\text{t}}$, with $p$ = 20 dBm, $\it{d}_\text{sr}$ = 100 m, $\it{d}_v$ = 10 m, and $K$ = 0.5: (a) $\zeta = 1/{\sqrt{2}}$ and $N_\mathrm{ref} = [50, 150]$,(b) $\zeta = 0.8$ and $N_\mathrm{ref} = [500, 1500]$.}
\label{figure_3}
\end{figure*}

Fig. \ref{figure_3} illustrates the rate performance comparison among STAR-RIS and the SISO, RIS, HD-DF relaying and FD-DF relaying systems versus the distances $d_{\text{s} \text{d}_\text{r}}$ and $d_{\text{s} \text{d}_\text{t}}$, with different values of $\zeta$ and ${N_\mathrm{ref}}$. As observed, the SISO systems yield the lowest rate, while the FD-DF relaying system achieves the highest rate when the number of elements is small. Moreover, increasing the number of elements improves the rate performance of both STAR-RIS and RIS. As found from \eqref{Nr_max}-\eqref{Nt_max} and \cite[Eq.~(15)]{REF_Emil_3}, and illustrated in Fig. \ref{figure_3a} for $\zeta = 1/{\sqrt{2}}$, it can be noticed that at least 58, 58 and 41 elements are needed for the STAR-RIS-$\text{U}_{\text{r}}$, STAR-RIS-$\text{U}_{\text{t}}$ and RIS, respectively, to outperform HD-DF relaying when $d_{\text{s} \text{d}_\text{r}} = d_{\text{s} \text{d}_\text{t}} = d_{\text{sr}}$. In this case, the same number of elements is required to achieve the same rate at $\text{U}_{\text{t}}$ and $\text{U}_{\text{r}}$ when $\zeta = 1/{\sqrt{2}}$, (i.e., $\zeta = \sqrt{1 - \zeta^2}$). On the other hand, as found from \eqref{eq19}-\eqref{eq21} and shown in Fig. \ref{figure_3b}, it can be observed that at least 792, 1056, and 633 elements are required, respectively, for STAR-RIS-$\text{U}_{\text{r}}$, STAR-RIS-$\text{U}_{\text{t}}$, and conventional RIS, to provide a higher achievable rate than the FD-DF relaying scheme. In this case, the STAR-RIS-$\text{U}_{\text{r}}$ consistently achieves a higher rate than STAR-RIS-$\text{U}_{\text{t}}$. This behavior arises because $\zeta > 1/\sqrt{2}$ in Fig. \ref{figure_3b}, resulting in the amplitude coefficient of $\boldsymbol{\Phi}_{\text{r}, \text{d}}$ being larger than that of $\boldsymbol{\Phi}_{\text{t}, \text{d}}$. It can also be observed that the STAR-RIS system offers a clear advantage over the conventional RIS in terms of sum rate. 

Fig. \ref{figure_4} depicts the achievable rates of the STAR-RIS and different benchmarks as functions of the transmit power. As expected, the achievable rates improve monotonically with increasing the transmit power, which is intuitive. At low transmit power levels, the FD-DF relaying scheme slightly outperforms the other systems. However, its rate improvements appear to diminish at higher transmit power levels, indicating a transition from a power-limited into an interference-limited regime, where its rate performance saturates and eventually becomes the worst among the considered schemes. Overall, these findings highlight the scalability advantage of both STAR-RIS- and RIS-assisted systems, which exhibit superior performance at higher transmit power budgets and larger element deployments.

Fig. \ref{figure_5} demonstrates the achievable rate performance of the STAR-RIS and benchmarks versus $\zeta$ for different values of $N_\mathrm{ref}$. It illustrates a non-monotonic relationship between the achievable rate of STAR-RIS and $\zeta$, while the rates of SISO, RIS, and HD- and FD-DF relaying systems remain independent of $\zeta$. Therefore, this demonstrates the existence of a fundamental trade-off between the rate and $\zeta$, with the optimal value $\zeta = 1/{\sqrt{2}}$ balancing this trade-off. Notably, the optimal value of $\zeta = 1/{\sqrt{2}}$ remains constant across different values of $N_\mathrm{ref}$. This figure also shows that both STAR-RIS and RIS systems outperform the HD-DF relaying scheme with the lowest value of $N_\mathrm{ref}$, and indicates high sensitivity to the value of $\zeta$. In contrast, the results with the highest value of $N_\mathrm{ref}$ show better rate performance, where STAR-RIS and RIS systems outperform the FD-DF relaying scheme. Therefore, increasing the number of elements yields significant performance gains and enhances the robustness of STAR-RIS-and RIS-assisted systems.

The achievable rate performance of the STAR-RIS, SISO, RIS, HD-DF relaying, and FD-DF relaying systems versus the number of elements is shown in Fig. \ref{figure_6} for two different values of the transmit power. It is evident that the transmit power impacts the rate performance of all systems, and that increasing both the number of elements and the transmit power enables STAR-RIS- and RIS-assisted systems to outperform the FD-DF relaying scheme. These results confirm that fully exploiting the benefits of STAR-RIS and RIS technologies requires proper adjustment of the number of elements and careful selection of the transmit power.

In Fig. \ref{figure_7}, the achievable rate performance of the STAR-RIS, SISO, RIS, HD-DF relaying, and FD-DF relaying systems is illustrated as a function of the element-splitting factor $K$, for different values of $N_\mathrm{ref}$. It is clear from this figure that both conventional RIS and STAR-RIS systems show a strong dependence on $K$. By contrast, the SISO, HD- and FD-DF relaying systems exhibit flat curves, since their rate performance is independent of $K$. Two STAR-RIS configurations (i.e., for $\text{U}_\text{r}$ and $\text{U}_\text{t}$) are considered in this comparison, and both achieve superior rate performance compared to the HD- and FD-DF relaying systems, particularly when $N_\mathrm{ref}$ is large\footnote{Note that the performance of $\text{U}_\text{r}$ is the complement of $\text{U}_\text{t}$ due to $N_\text{t} = (1 - K) N_\mathrm{ref}$.}. It is also observed that increasing $K$ initially enhances the achievable rate of the STAR-RIS system ($\text{U}_\text{r}$), while the performance gradually saturates with increasing $K$. In addition, with a small number of elements, both STAR-RIS and RIS systems may provide only modest gains over the HD- and FD-DF relaying schemes. However, as the number of elements increases, the rate of STAR-RIS and RIS improves substantially, eventually outperforming the FD-DF relaying scheme, which suffers from interference limitations at high rate values. Notably, the optimal value of $K$, at which the rates of STAR-RIS-$\text{U}_{\text{t}}$ and STAR-RIS-$\text{U}_{\text{r}}$ coincide, remains unchanged across different numbers of elements; however, the achievable rate at this point increases significantly with larger $N_\mathrm{ref}$. Overall, Fig. \ref{figure_7} shows that the STAR-RIS system is highly sensitive to the choice of $K$ and benefits significantly from the larger number of elements. This highlights its scalability advantage over classical relaying schemes.

Fig. \ref{figure_8} shows the secrecy rate performance of the STAR-RIS and benchmarks against the transmit power $p$, under different values of $N_\mathrm{ref}$ and $\zeta$. In particular, in Fig. \ref{figure_8a}, with $N_\mathrm{ref}=100$ elements and $\zeta=1/{\sqrt{2}}$, the STAR-RIS and RIS systems offer only limited secrecy rate improvements, whereas the FD-DF relaying scheme outperforms all other systems in the low and moderate transmit power regimes. However, at higher transmit power, both STAR-RIS and RIS systems exhibit a steeper secrecy rate improvement. Nevertheless, their performance remains ultimately limited by the finite number of passive elements. In contrast, Fig. \ref{figure_8b} with $N_\mathrm{ref}=1000$ elements and $\zeta=0.5$ shows that increasing $N_\mathrm{ref}$ to 1000 enables the STAR-RIS and RIS systems to achieve substantial secrecy rate gains, surpassing the FD-DF relaying scheme at medium and high transmit power levels. In addition, the STAR-RIS-$\text{U}_\text{r}$ and STAR-RIS-$\text{U}_\text{t}$ demonstrate superior scalability and robustness, highlighting the critical role of deploying a large number of elements in enhancing physical-layer security. Moreover, the STAR-RIS results in Fig. \ref{figure_8} highlight the impact of $\zeta$ on secrecy rate performance. For $\zeta = 1/\sqrt{2}$, both STAR-RIS-$\text{U}_\text{r}$ and STAR-RIS-$\text{U}_\text{t}$ yield identical secrecy rates since $\zeta = \sqrt{1-\zeta^{2}}$. In contrast, for $\zeta=0.5$, the STAR-RIS-$\text{U}_\text{t}$ achieves a higher secrecy rate than STAR-RIS-$\text{U}_\text{r}$. This is because $\zeta < 1/\sqrt{2}$ in Fig. \ref{figure_8b}, resulting in the amplitude coefficient of $\boldsymbol{\Phi}_{\text{r}, \text{d}}$ being smaller than that of $\boldsymbol{\Phi}_{\text{t}, \text{d}}$. Overall, these findings highlight two key observations: (i) the FD-DF relaying scheme is advantageous in the low-power regime but its rate performance levels off due to residual loop interference; and (ii) the secrecy rate performance of both STAR-RIS and RIS systems strongly depends on the number of elements. Hence, scaling the number of elements enhances the beamforming resolution required to achieve significant secrecy rate gains at higher transmit power levels. It can also be noticed from this figure that the STAR-RIS system gradually exhibits better performance than the conventional RIS in terms of secrecy sum rate, especially at higher transmit power levels.

\begin{figure}[t]
\centering
{\includegraphics[width=8.5cm,height=6.5cm]{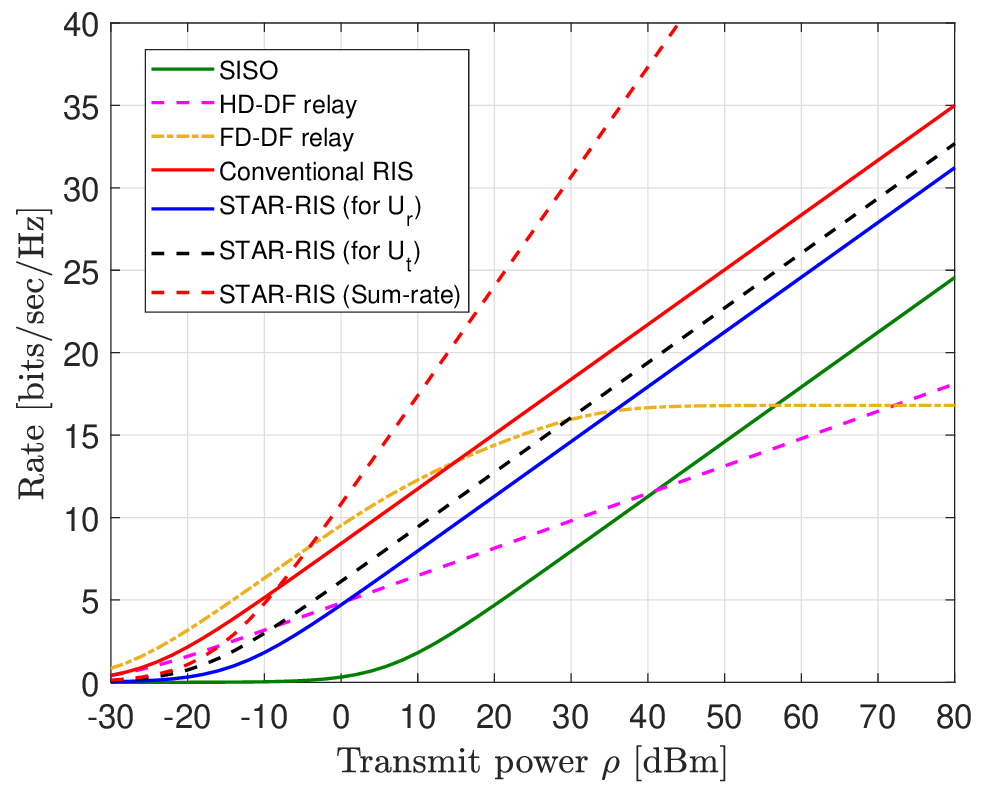}}
\caption{Achievable rate performance versus the transmit power, $p$, with $d_{\text{s} \text{d}_\text{r}} = d_{\text{s} \text{d}_\text{t}}$ = 80 m, $\it{d}_\text{sr}$ = 60 m, $\it{d}_v$ = 10 m, $N_\mathrm{ref}$ = 1000 elements, $K$ = 0.5, and $\zeta$ = 0.5.}
\label{figure_4}
\end{figure}

\begin{figure}[t!]
\centering
{\includegraphics[width=8.5cm,height=6.5cm]{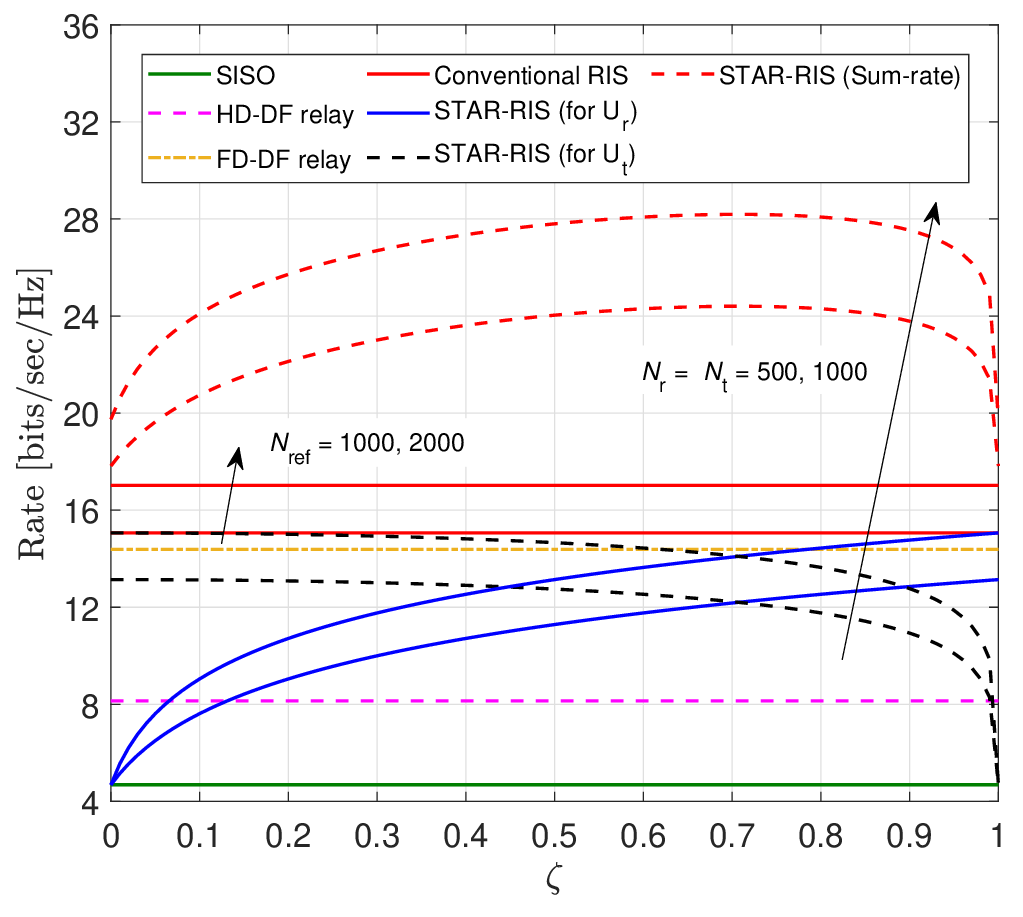}}
\caption{Achievable rate performance versus $\zeta$, with $d_{\text{s} \text{d}_\text{r}} = d_{\text{s} \text{d}_\text{t}}$ = 80 m, $\it{d}_\text{sr}$ = 60 m, $\it{d}_v$ = 10 m, $K$ = 0.5, $p$ = 20 dBm, and different values of $N_\mathrm{ref}$.}
\label{figure_5}
\end{figure}

\begin{figure}[t]
\centering
\subfloat[\label{figure_6a}]{\includegraphics[width=4.5cm,height=5.8cm]{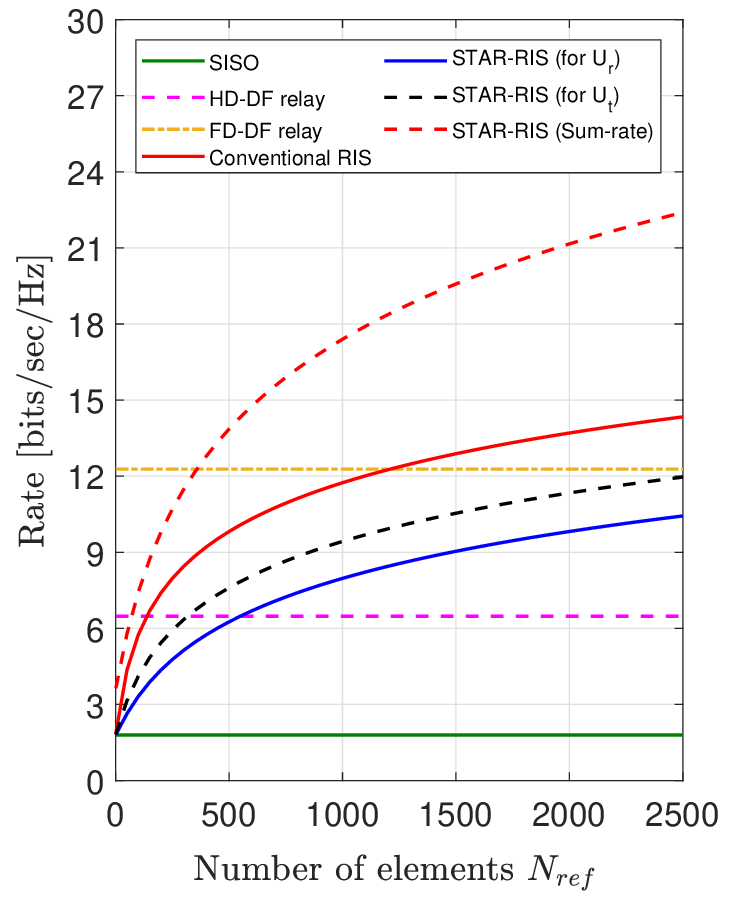}}
\subfloat[\label{figure_6b}]
{\includegraphics[width=4.5cm,height=5.8cm]{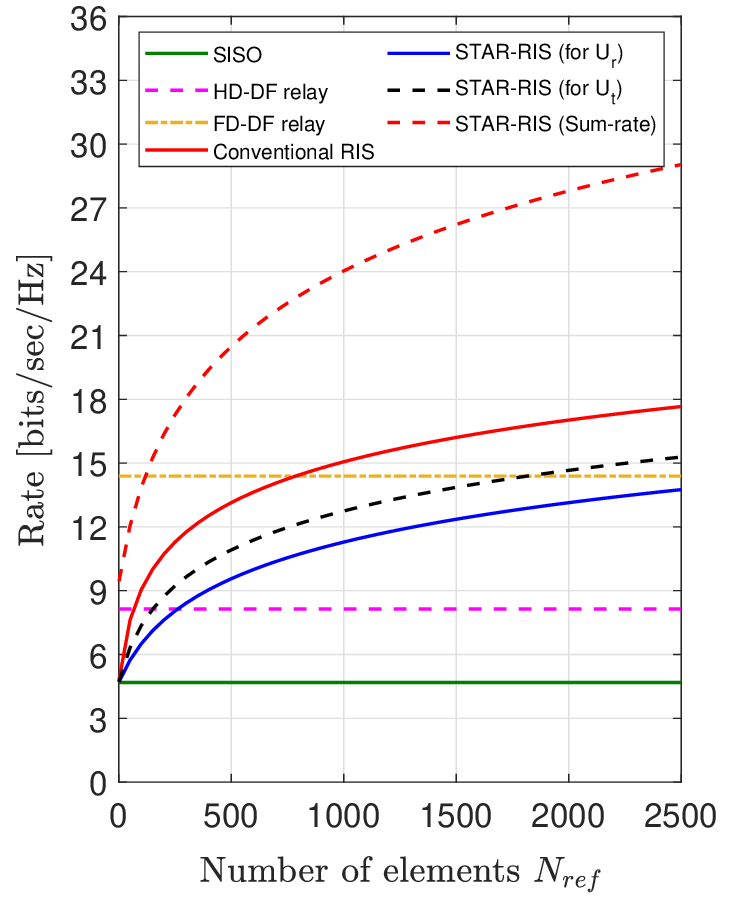}}
\caption{Achievable rate performance versus the number of elements, $N_\mathrm{ref}$, with $d_{\text{s} \text{d}_\text{r}} = d_{\text{s} \text{d}_\text{t}}$ = 80 m, $\it{d}_\text{sr}$ = 60 m, $\it{d}_v$ = 10 m, $K$ = 0.5 and $\zeta$ = 0.5: (a) $p$ = 10 dBm, (b) $p$ = 20 dBm.}
\label{figure_6}
\end{figure}

\begin{figure}[t!]
\centering
{\includegraphics[width=8.5cm,height=6.5cm]{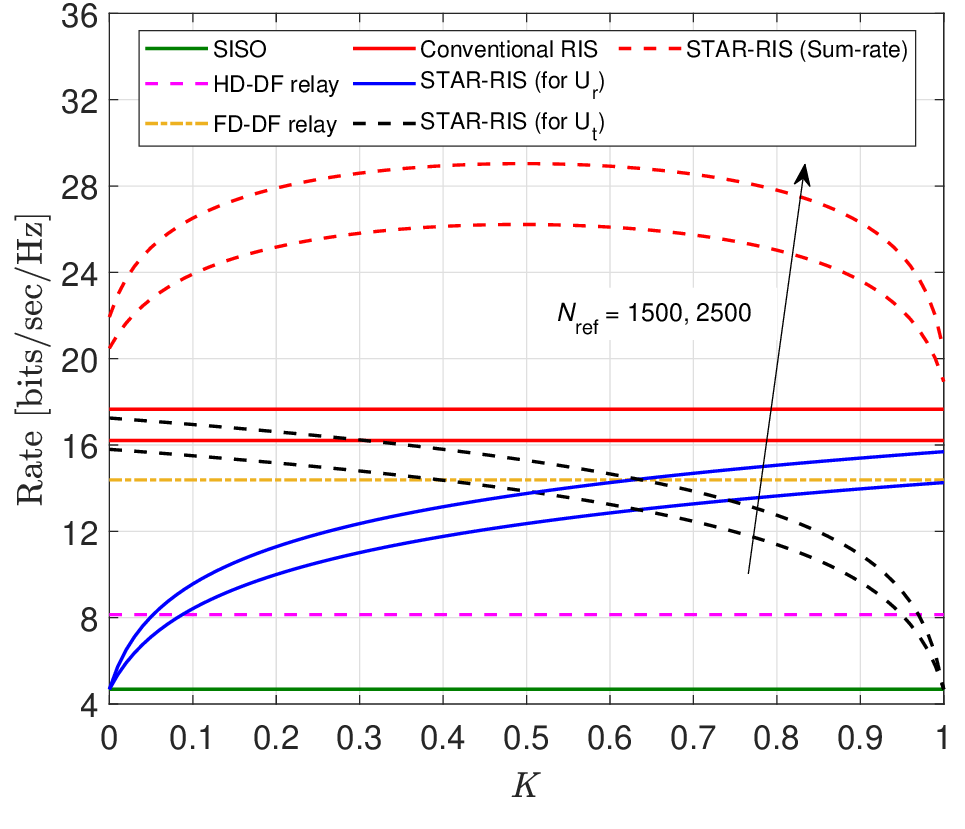}}
\caption{Achievable rate performance versus $K$, with $d_{\text{s} \text{d}_\text{r}} = d_{\text{s} \text{d}_\text{t}}$ = 80 m, $\it{d}_\text{sr}$ = 60 m, $\it{d}_v$ = 10 m, $p$ = 20 dBm, $\zeta$ = 0.5 and different values of $N_\mathrm{ref}$.}
\label{figure_7}
\end{figure}

\begin{figure}[t]
\centering
\subfloat[\label{figure_8a}]{\includegraphics[width=8.5cm,height=6.5cm]{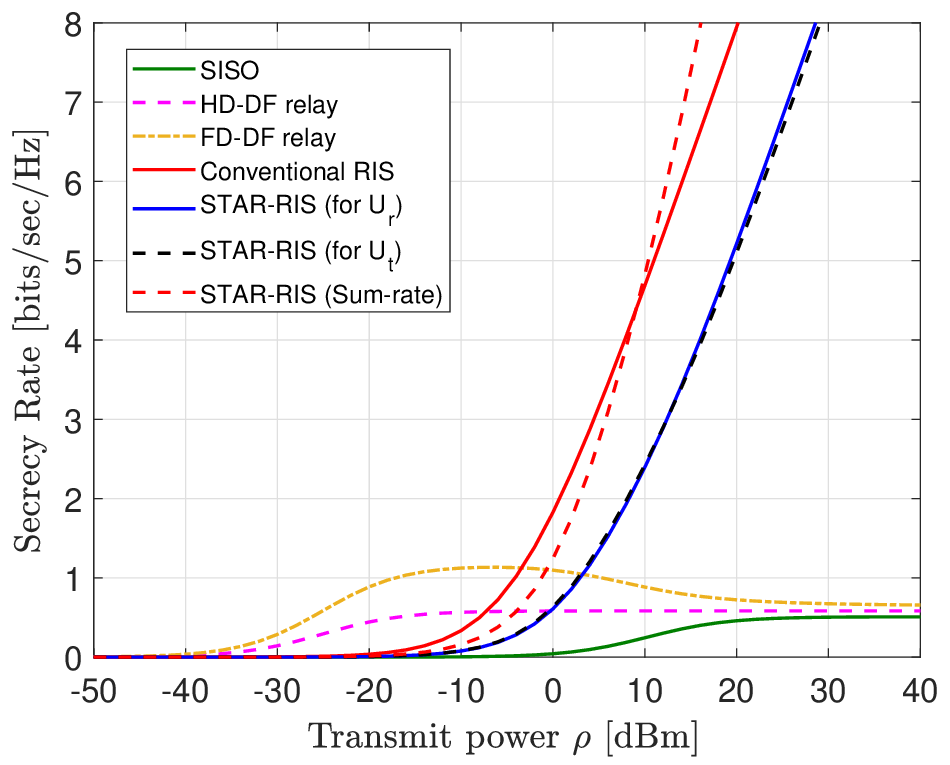}}
\hspace{\fill}
\subfloat[\label{figure_8b}]
{\includegraphics[width=8.5cm,height=6.5cm]{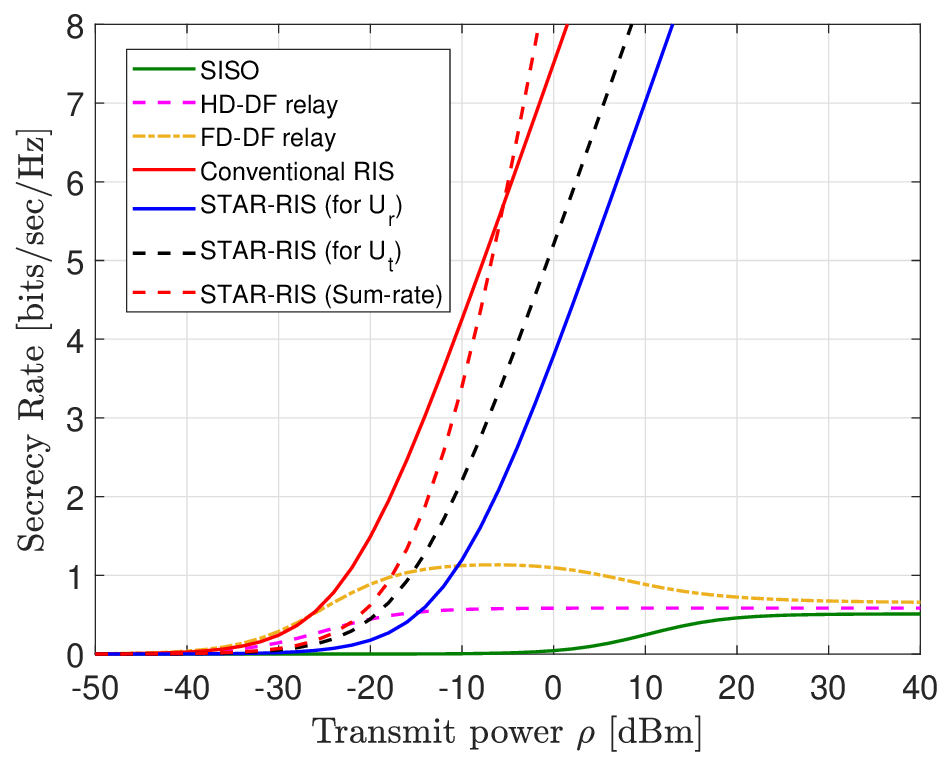}}
\caption{Secrecy rate performance versus the transmit power, $p$, with $d_{\text{s} \text{d}_\text{r}}$ = $d_{\text{s} \text{d}_\text{t}}$ = 100 m, $d_{\text{s} \text{e}_\text{r}}$ = 110 m,
$d_{\text{s} \text{e}_\text{t}}$ = 120 m, $\it{d}_\text{sr}$ = 80 m, $\it{d}_v$ = 10 m, $\it{d}_{v_e}$ = 12 m, and $K$ = 0.5: (a) $N_\mathrm{ref}$ = 100 elements and $\zeta = 1/{\sqrt{2}}$, (b) $N_\mathrm{ref}$ = 1000 elements and $\zeta$ = 0.5.}
\label{figure_8}
\end{figure}

\begin{figure}[t]
\centering
\subfloat[\label{figure_9a}]{\includegraphics[width=8.5cm,height=6.5cm]{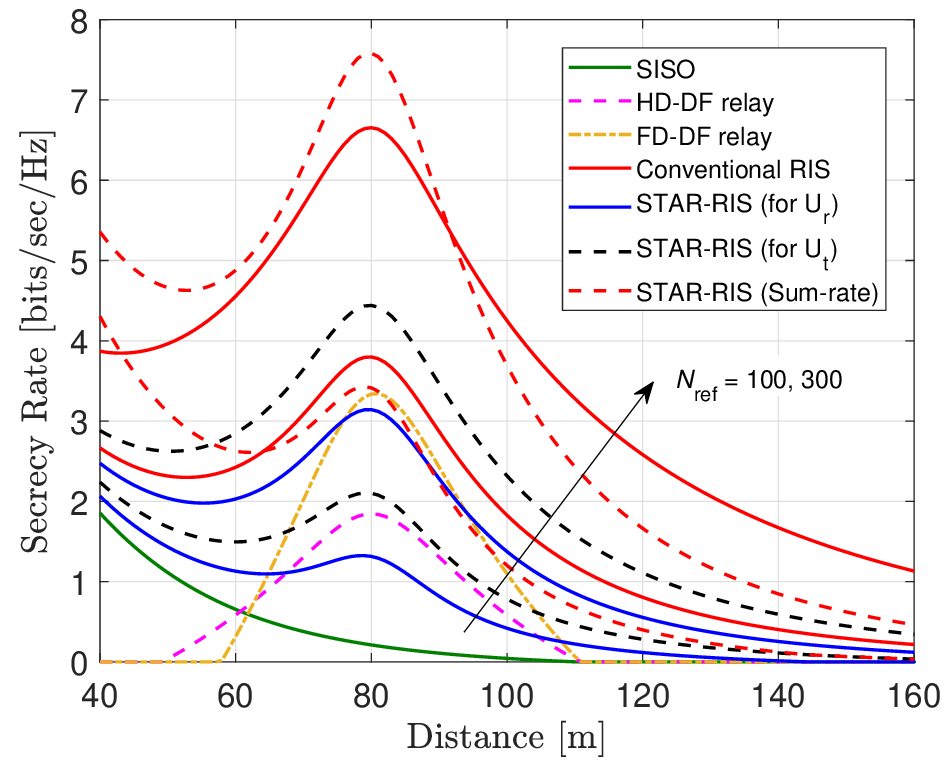}}
\hspace{\fill}
\subfloat[\label{figure_9b}]
{\includegraphics[width=8.5cm,height=6.5cm]{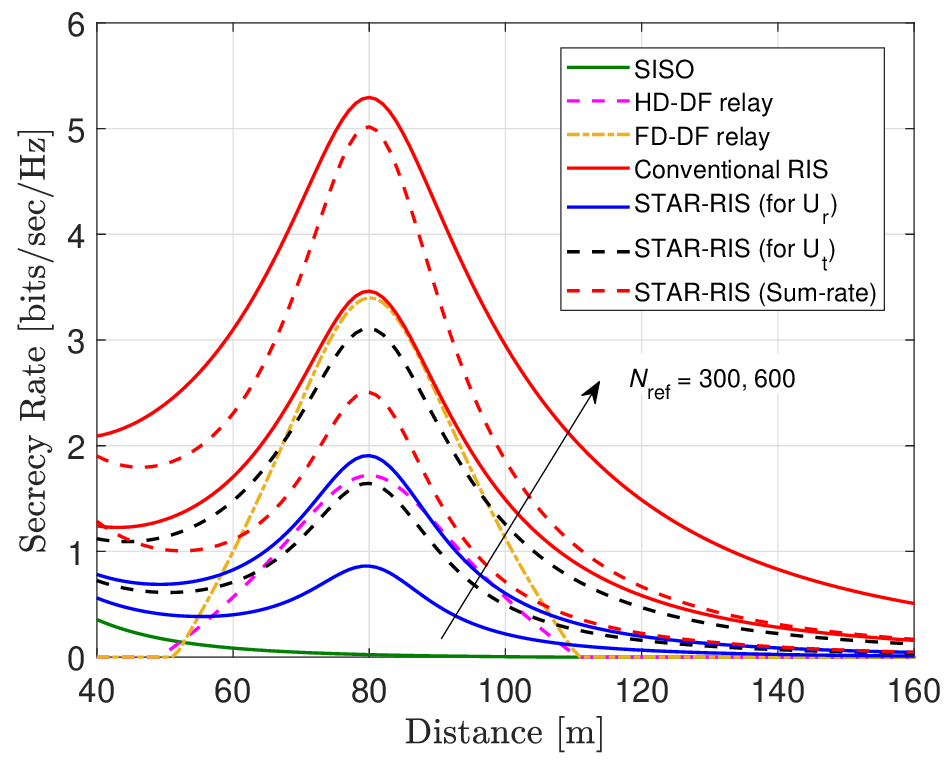}}
\caption{Secrecy rate performance versus the distances, $d_{\text{s} \text{d}_\text{r}} = d_{\text{s} \text{d}_\text{t}}$, with $d_{\text{s} \text{e}_\text{r}}$ = 110 m,
$d_{\text{s} \text{e}_\text{t}}$ = 120 m, $\it{d}_\text{sr}$ = 80 m, $\it{d}_v$ = 10 m, $\it{d}_{v_e}$ = 12 m, $\zeta$ = 0.5 and $K$ = 0.5: (a) with $p$ = 0 dBm and $N_\mathrm{ref} = [100, 300]$ elements, (b) with $p$ = -10 dBm and $N_\mathrm{ref} = [300, 600]$ elements.}
\label{figure_9}
\end{figure}

In Fig. \ref{figure_9}, the secrecy rate performance of the STAR-RIS, SISO, RIS, HD-DF relaying, and FD-DF relaying systems is shown versus the distances $d_{\text{s} \text{d}_\text{r}} = d_{\text{s} \text{d}_\text{t}}$ under different values of $p$ and $N_\mathrm{ref}$. It can be clearly seen from both subfigures \ref{figure_9a} and \ref{figure_9b} that the SISO system achieves a very limited secrecy rate, and the highest secrecy rate is at $d_{\text{s} \text{d}_\text{r}} = d_{\text{s} \text{d}_\text{t}} = d_\text{sr}$. It is also shown that the FD-DF relaying scheme attains the highest secrecy rate in the mid-range distances (around 80–100 m), where R is best positioned to balance S-R and R-D links. However, its secrecy rate performance drops significantly outside this range, particularly as the D moves farther from S. Additionally, the STAR-RIS and RIS systems achieve lower peak secrecy rates than FD-DF relaying scheme with a small number of elements, but they provide more stable performance over a wider distance range. Importantly, increasing the number of elements enhances the secrecy rate of STAR-RIS and RIS systems. 
This figure also indicates that under a lower transmit power, the STAR-RIS and RIS systems require a larger number of elements to surpass the performance of HD- and FD-DF relaying schemes.

\begin{figure}[t]
\centering
{\includegraphics[width=8.5cm,height=6.3cm]{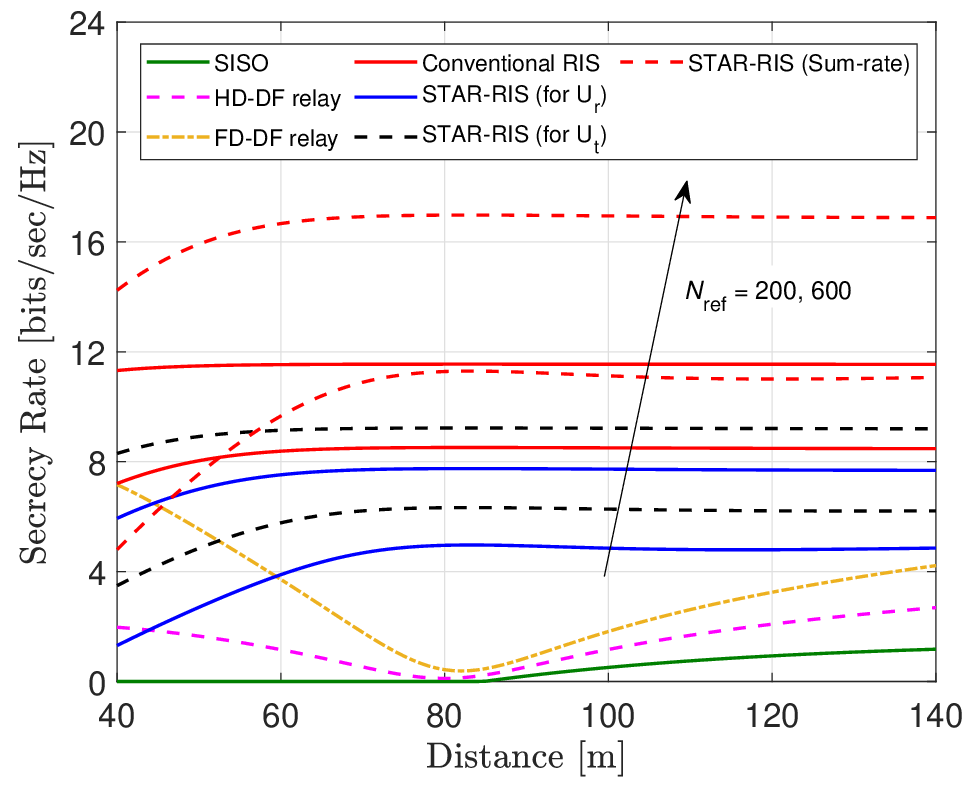}}
\caption{Secrecy rate performance versus the distances, $d_{\text{s} \text{e}_\text{r}} = d_{\text{s} \text{e}_\text{t}}$, with $d_{\text{s} \text{d}_\text{r}}$ = $d_{\text{s} \text{d}_\text{t}}$ = 85 m, $\it{d}_\text{sr}$ = 80 m, $\it{d}_v$ = 10 m, $\it{d}_{v_e}$ = 12 m, $\zeta$ = 0.5, $K$ = 0.5, $p$ = 10 dBm and varying $N_\mathrm{ref}$.}
\label{figure_10}
\end{figure}

\begin{figure}[t!]
\centering
{\includegraphics[width=8.5cm,height=6.3cm]{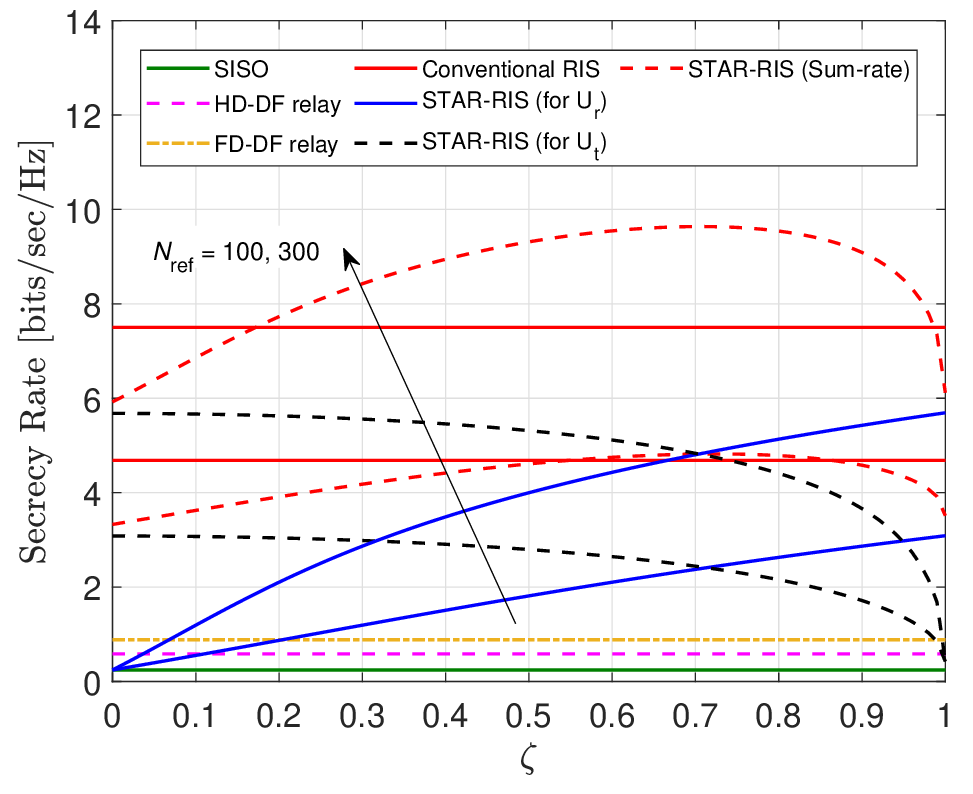}}
\caption{Secrecy rate performance versus $\zeta$, with $d_{\text{s} \text{d}_\text{r}}$ = $d_{\text{s} \text{d}_\text{t}}$ = 100 m, $d_{\text{s} \text{e}_\text{r}}$ = 110 m,
$d_{\text{s} \text{e}_\text{t}}$ = 120 m, $\it{d}_\text{sr}$ = 80 m, $\it{d}_v$ = 10 m, $\it{d}_{v_e}$ = 12 m, $K$ = 0.5, $p$ = 10 dBm, and different values of $N_\mathrm{ref}$.}
\label{figure_11}
\end{figure}

Fig. \ref{figure_10} shows the secrecy rate performance of the STAR-RIS, SISO, RIS, HD-DF relaying, and FD-DF relaying systems versus the distances $d_{\text{s} \text{e}_\text{r}} = d_{\text{s} \text{e}_\text{t}}$ with a different number of elements. This figure reveals that the secrecy rate remains low for all configurations when $d_{\text{s} \text{e}} < d_{\text{s} \text{r}}$, with the lowest secrecy rate values observed in SISO, HD- and FD-DF relaying schemes at $d_{\text{s} \text{e}} = d_{\text{s} \text{r}}$. It is also clear that the FD-DF relaying scheme outperforms the STAR-RIS system at low values of $d_{\text{s} \text{e}}$. However, in SISO, HD- and FD-DF relaying schemes, the secrecy rate improves slightly as Es move further away from the R. In contrast, deploying a large number of elements in STAR-RIS and RIS systems yields substantial performance gains. Moreover, for any given distance, the secrecy rate is consistently higher for STAR-RIS and RIS systems with a larger number of elements. This is mainly because larger STAR-RISs or RISs offer higher beamforming gains and enhanced flexibility in shaping the wireless propagation environment. Therefore, this comparison shows that STAR-RIS and RIS systems are highly effective technologies for improving the physical layer security, with their performance enhancing significantly with increasing the number of elements.

The secrecy rate performance of the STAR-RIS and benchmarks versus $\zeta$ is depicted in Fig. \ref{figure_11} with varying the number of elements. A similar trend is seen in Fig. \ref{figure_5}, in which one can confirm that there is a non-monotonic relationship between the secrecy rate and $\zeta$, while the secrecy rate performance of the SISO, RIS, HD- and FD-DF relaying systems remains independent of $\zeta$. It can be seen from this figure that the optimal value $\zeta = 1/{\sqrt{2}}$ balances the trade-off between the secrecy rate and $\zeta$, and remains constant across different values of $N_\mathrm{ref}$. It is also noticeable that both STAR-RIS and RIS systems outperform the HD-DF relaying scheme with low values of $N_\mathrm{ref}$, and the secrecy rate performance improves and outperforms the FD-DF relaying scheme as the number of elements increases. Consistent with the trends observed in Fig. \ref{figure_5}, which showed robustness against increasing the number of elements, these results confirm that increasing the number of elements in STAR-RIS and RIS systems is a fundamental strategy for achieving substantial performance gains in the secrecy rate and enhancing the resilience of RIS-assisted secure communication systems.

In Fig. \ref{figure_12}, the secrecy rate performance of the STAR-RIS and benchmarks versus the number of elements is presented for two different values of the transmit power. This figure shows that the secrecy rate increases with the transmit power across all systems, as observed in Fig. \ref{figure_6}. Furthermore, increasing the number of elements enables the STAR-RIS and RIS systems to outperform the HD- and FD-DF relaying schemes. Notably, unlike the results observed in Fig. \ref{figure_6}, a small number of elements and low transmit power are required for STAR-RIS and RIS systems to surpass HD- and FD-DF relaying systems. Therefore, these results confirm that carefully selecting the transmit power and properly adjusting the number of elements are crucial to increasing the secrecy rate gains of STAR-RIS and RIS systems.

\begin{figure}[t]
\centering
\subfloat[\label{figure_12a}]{\includegraphics[width=4.3cm,height=5.9cm]{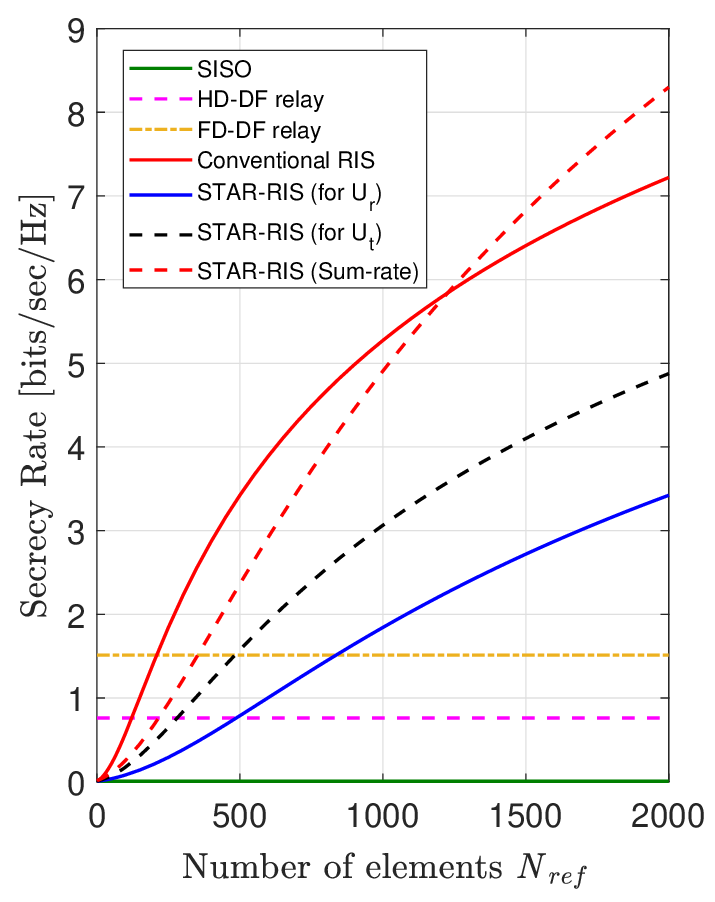}}
\subfloat[\label{figure_12b}]
{\includegraphics[width=4.3cm,height=5.9cm]{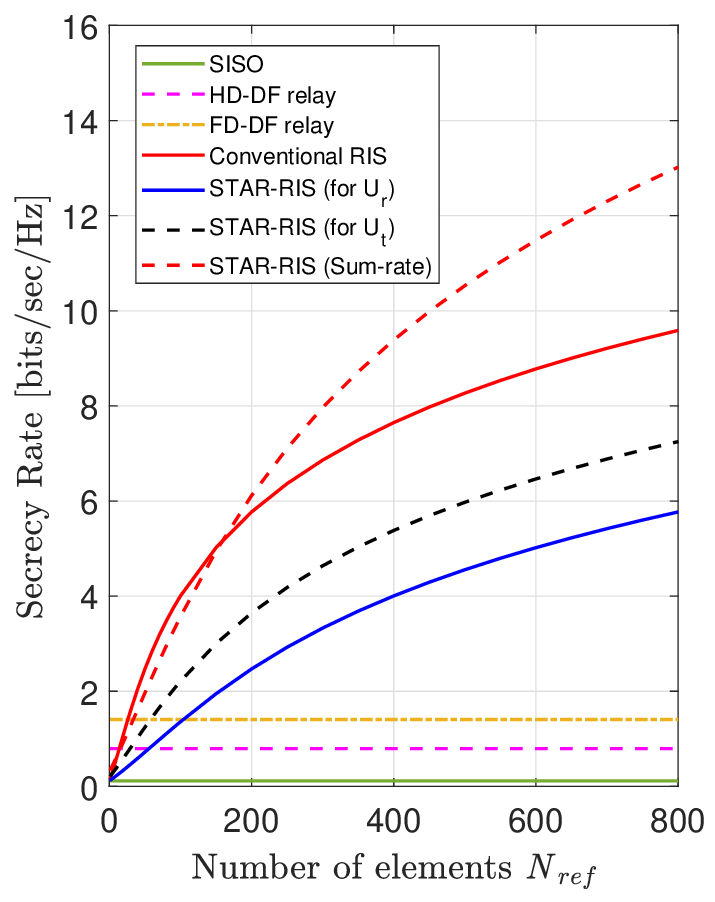}}
\caption{Secrecy rate performance versus the number of elements, $N_\mathrm{ref}$, with $d_{\text{s} \text{d}_\text{r}}$ = $d_{\text{s} \text{d}_\text{t}}$ = 100 m, $d_{\text{s} \text{e}_\text{r}}$ = 110 m,
$d_{\text{s} \text{e}_\text{t}}$ = 120 m, $\it{d}_\text{sr}$ = 90 m, $\it{d}_v$ = 10 m, $\it{d}_{v_e}$ = 12 m, $K$ = 0.5 and $\zeta$ = 0.5, (a) with $p$ = -10 dBm, (b) with $p$ = 5 dBm.}
\label{figure_12}
\end{figure}

\begin{figure}[t!]
\center
{\includegraphics[width=8.5cm,height=6.3cm]{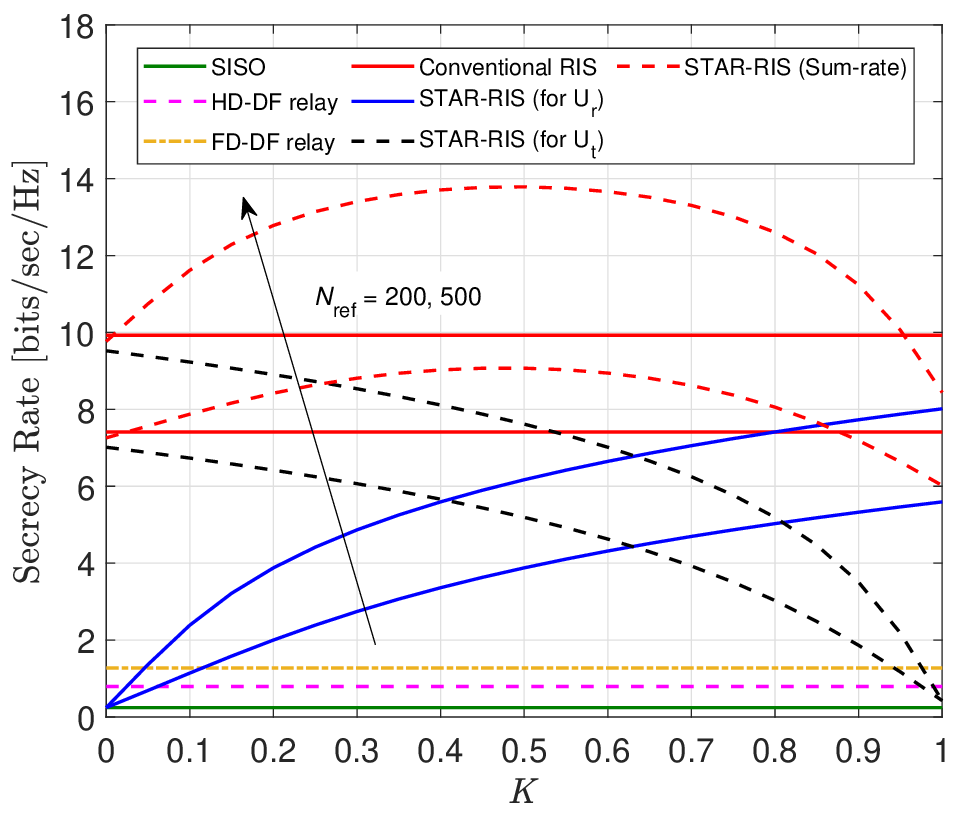}}
\caption{Secrecy rate performance versus $K$, with $d_{\text{s} \text{d}_\text{r}}$ = $d_{\text{s} \text{d}_\text{t}}$ = 100 m, $d_{\text{s} \text{e}_\text{r}}$ = 110 m,
$d_{\text{s} \text{e}_\text{t}}$ = 120 m, $\it{d}_\text{sr}$ = 90 m, $\it{d}_v$ = 10 m, $\it{d}_{v_e}$ = 12 m, $p$ = 10 dBm, $\zeta$ = 0.5 and varying $N_\mathrm{ref}$.}
\label{figure_13}
\end{figure}

Finally, Fig. \ref{figure_13} depicts the secrecy rate performance of the STAR-RIS, SISO, RIS, HD-DF relaying, and FD-DF relaying systems versus $K$ with different values of $N_\mathrm{ref}$. In this figure, a similar trend is seen as in Fig. \ref{figure_7}; that is both STAR-RIS and RIS systems depend strongly on $K$. Unlike the results observed in Fig. \ref{figure_7}, the secrecy rate of HD- and FD-DF relaying systems is low, and relatively few elements are required for both RIS and the two STAR-RIS configurations to outperform HD- and FD-DF relaying systems. In summary, Fig. \ref{figure_13} demonstrates that both STAR-RIS and RIS systems, unlike classical relaying schemes, strongly depend on $K$ and the number of passive elements. Therefore, their ability to achieve higher secrecy rates with a low number of elements highlights their scalability and performance benefits over SISO, HD-, and FD-DF relaying systems.

\textit{To sum up, a set of numerical simulations is presented in Figs. \ref{figure_3}–\ref{figure_13} to validate the analytical findings. These results illustrate both the achievable rate (Figs. \ref{figure_3}–\ref{figure_7}) and secrecy rate (Figs. \ref{figure_8}–\ref{figure_13}) performance of STAR-RIS, SISO, RIS, HD-DF relaying, and FD-DF relaying systems under various network conditions. Specifically, the impact of key system parameters is investigated, including the S-D distance (Figs. \ref{figure_3}, \ref{figure_9}, and \ref{figure_10}), transmit power (Figs. \ref{figure_4} and \ref{figure_8}), reflection-to-transmission power ratio $\zeta$ (Figs. \ref{figure_5} and \ref{figure_11}), the number of reflecting elements $N_\mathrm{ref}$ (Figs. \ref{figure_6} and \ref{figure_12}), and the element-splitting factor $K$ (Figs. \ref{figure_7} and \ref{figure_13}). Through these comparisons, the scalability and security advantages of STAR-RIS-assisted systems over conventional RIS and DF relaying benchmarks are highlighted.}

\section{Conclusion}\label{sec5}

In this paper, the emerging STAR-RIS technology was compared to HD- and FD-DF relaying schemes, SISO and conventional RIS-supported systems in the absence and presence of Es. Simulation results showed that tens of passive elements in RIS and STAR-RIS-supported systems are needed to surpass HD-DF relaying scheme, while hundreds of elements are needed to surpass the FD-DF relaying. In contrast, the secrecy rate analysis indicated that a relatively small number of elements is required for the RIS and STAR-RIS systems to outperform the classical FD-DF relaying schemes. Overall, the findings in this paper confirm that STAR-RIS systems provide substantial secrecy rate advantages over conventional benchmarks, particularly under optimized configurations. This, however, makes them highly promising solutions for secure future wireless networks.

\bibliographystyle{IEEEtran}

\bibliography{Ref1}

@ARTICLE{REF1,
  author={Liu, Yuanwei and Liu, Xiao and Mu, Xidong and Hou, Tianwei and Xu, Jiaqi and Di Renzo, Marco and Al-Dhahir, Naofal},
  journal={IEEE Commun. Surv. Tut.}, 
  title={Reconfigurable Intelligent Surfaces: Principles and Opportunities}, 
  year={2021},
  volume={23},
  number={3},
  pages={1546-1577}}

@ARTICLE{REF222,
  author={Huang, Chongwen and Zappone, Alessio and Alexandropoulos, George C. and Debbah, Mérouane and Yuen, Chau},
  journal={IEEE Trans. Wireless Commun.}, 
  title={Reconfigurable Intelligent Surfaces for Energy Efficiency in Wireless Communication}, 
  year={2019},
  volume={18},
  number={8},
  pages={4157-4170}}

@ARTICLE{REF_Emil_3,
  author={Björnson, Emil and Özdogan, Özgecan and Larsson, Erik G.},
  journal={IEEE Wireless Commun. Lett.}, 
  title={Intelligent Reflecting Surface Versus Decode-and-Forward: How Large Surfaces are Needed to Beat Relaying?}, 
  year={2020},
  volume={9},
  number={2},
  pages={244-248}}

@ARTICLE{REF333,
  author={Farhadi, G. and Beaulieu, N.C.},
  journal={IEEE Trans. Wireless Commun.}, 
  title={On the ergodic capacity of multi-hop wireless relaying systems}, 
  year={2009},
  volume={8},
  number={5},
  pages={2286-2291}}

@ARTICLE{REF_STAR1_4,
  author={Xu, Jiaqi and Liu, Yuanwei and Mu, Xidong and Dobre, Octavia A.},
  journal={IEEE Commun. Lett.}, 
  title={{STAR-RISs}: Simultaneous Transmitting and Reflecting Reconfigurable Intelligent Surfaces}, 
  year={2021},
  volume={25},
  number={9},
  pages={3134-3138}}

@ARTICLE{REF_Aldababsa_10,
  author={Aldababsa, Mahmoud and Khaleel, Aymen and Basar, Ertugrul},
  journal={IEEE Commun. Lett.}, 
  title={{STAR-RIS-NOMA} Networks: An Error Performance Perspective}, 
  year={2022},
  volume={26},
  number={8},
  pages={1784-1788}}

@ARTICLE{5961159,
  author={Riihonen, Taneli and Werner, Stefan and Wichman, Risto},
  journal={IEEE Trans. Wireless Commun.}, 
  title={Hybrid Full-Duplex/Half-Duplex Relaying with Transmit Power Adaptation}, 
  year={2011},
  volume={10},
  number={9},
  pages={3074-3085}}

@ARTICLE{10373918,
  author={Ding, Qian and Yang, Jie and Luo, Yang and Luo, Chunbo},
  journal={IEEE Commun. Lett.}, 
  title={Intelligent Reflecting Surfaces vs. Full-Duplex Relays: A Comparison in the Air}, 
  year={2024},
  volume={28},
  number={2},
  pages={397-401},
  keywords={Relays;Full-duplex system;Autonomous aerial vehicles;Three-dimensional displays;Transmitting antennas;Receiving antennas;Protocols;Intelligent reflecting surfaces;unmanned aerial vehicle;amplify-and-forward;decode-and-forward;full-duplex},
  doi={10.1109/LCOMM.2023.3344599}}

@ARTICLE{10598369,
  author={Ding, Qian and Yang, Jie and Luo, Yang and Luo, Chunbo},
  journal={IEEE Trans. Veh. Technol.}, 
  title={Intelligent Reflecting Surface vs. Conventional Full-Duplex Relay in {mmWave MIMO} Networks: A Comprehensive Performance Comparison}, 
  year={2024},
  volume={73},
  number={11},
  pages={17231-17246},
  keywords={Relays;Millimeter wave communication;Interference cancellation;Array signal processing;MIMO;Full-duplex system;Signal to noise ratio;Full-duplex relays;intelligent reflecting surfaces;millimeter-wave;multiple-input-multiple-output},
  doi={10.1109/TVT.2024.3427779}}

@INPROCEEDINGS{9703626,
  author={Abualhayja’a, Mohammad and Centeno, Anthony and Mohjazi, Lina and Butt, M. Majid and Sehier, Philippe and Abbasi, Qammer H and Imran, Muhammad A.},
  booktitle={2021 IEEE USNC-URSI Radio Science Meeting (Joint with AP-S Sympos)}, 
  title={Performance of Reconfigurable Intelligent Surfaces vs. Relaying for {UAV}-Assisted Communications}, 
  year={2021},
  volume={},
  number={},
  pages={58-59},
  keywords={Wireless communication;Meetings;Reconfigurable intelligent surfaces;Autonomous aerial vehicles;Reconfigurable intelligent surfaces;decode-and-forward relaying;UAV communication},
  doi={10.23919/USNC-URSI51813.2021.9703626}}

@ARTICLE{10963359,
  author={Abualhayja'a, Mohammad and Wagih, Mahmoud and Centeno, Anthony and Imran, Muhammad Ali and Mohjazi, Lina},
  journal={IEEE Access}, 
  title={How Much Power Is Needed for {RIS} to Beat Relays? A Sustainability Framework}, 
  year={2025},
  volume={13},
  number={},
  pages={64061-64075},
  keywords={Relays;Wireless communication;Reconfigurable intelligent surfaces;Optimization;Power dissipation;Energy efficiency;Array signal processing;Wireless networks;Power demand;Costs;Smart radio environment;reconfigurable intelligent surfaces;relay networks;passive beamforming;power dissipation;energy efficiency},
  doi={10.1109/ACCESS.2025.3560131}}

@INPROCEEDINGS{9625201,
  author={Gu, Qi and Wu, Dan and Su, Xin and Jin, Jing and Yuan, Yifei and Wang, Jiangzhou},
  booktitle={2021 IEEE 94th Veh. Technol. Conf. (VTC2021-Fall)}, 
  title={Performance Comparisons between Reconfigurable Intelligent Surface and Full/Half-duplex Relays}, 
  year={2021},
  volume={},
  number={},
  pages={01-06},
  keywords={Vehicular and wireless technologies;Array signal processing;Mean square error methods;Reconfigurable intelligent surfaces;Minimization;Metasurfaces;MIMO;Reconfigurable intelligent surface (RIS);full duplex relay (FDR);multiple input multiple output (MIMO);weighted minimum mean square error (MMSE);the 6th generation (6G)},
  doi={10.1109/VTC2021-Fall52928.2021.9625201}}

@ARTICLE{10811893,
  author={Hashempour, Hamid Reza and Berardinelli, Gilberto and Adeogun, Ramoni and Jorswieck, Eduard A.},
  journal={IEEE Internet of Things Journal}, 
  title={Power-Efficient Cooperative Communication Within {IIoT} Subnetworks: Relay or {RIS}?}, 
  year={2025},
  volume={12},
  number={9},
  pages={12483-12500},
  keywords={Relays;Protocols;Industrial Internet of Things;Ultra reliable low latency communication;Reconfigurable intelligent surfaces;Frequency division multiaccess;Time division multiple access;6G mobile communication;Power control;Uplink;Amplify and forward (AF);cooperative transmission;decode and forward (DF);Industrial Internet of Things (IIoT) subnetwork;reconfigurable intelligent surface (RIS)},
  doi={10.1109/JIOT.2024.3521001}}

@ARTICLE{9359653,
  author={Ye, Jia and Kammoun, Abla and Alouini, Mohamed-Slim},
  journal={IEEE Open Journal of the Communications Society}, 
  title={Spatially-Distributed {RISs} vs Relay-Assisted Systems: A Fair Comparison}, 
  year={2021},
  volume={2},
  number={},
  pages={799-817},
  keywords={Relays;Signal to noise ratio;Rayleigh channels;Power system reliability;Power demand;Closed-form solutions;Channel estimation;Decode-and-forward relays;energy efficiency;far-field;near-filed;outage probability;reconfigurable intelligent surfaces;Poisson point process},
  doi={10.1109/OJCOMS.2021.3060929}}

@ARTICLE{10015640,
  author={Bazrafkan, Armin and Poposka, Marija and Hadzi-Velkov, Zoran and Popovski, Petar and Zlatanov, Nikola},
  journal={IEEE Trans. Wireless Commun.}, 
  title={Performance Comparison Between a Simple Full-Duplex Multi-Antenna Relay and a Passive Reflecting Intelligent Surface}, 
  year={2023},
  volume={22},
  number={8},
  pages={5461-5472},
  keywords={Relays;Antenna arrays;Transmitting antennas;Receiving antennas;Interference cancellation;Phase shifters;Arrays;Full-duplex relay;reflecting intelligent surface;performance comparisons},
  doi={10.1109/TWC.2023.3234329}}

@InProceedings{10.1007/978-3-031-28076-4_10,
author="Goh, Chi Yen
and Leow, Chee Yen",
editor="Arai, Kohei",
title="A Comparative Study of Reconfigurable Intelligent Surfaces with Relays in {UAV} Cooperative Communications",
booktitle="Advances in Information and Communication",
year="2023",
publisher="Springer Nature Switzerland",
address="Cham",
pages="100--106",
abstract="Reconfigurable Intelligent Surface (RIS) is an emerging technique proposed for the 6th Generation (6G) network to manipulate the propagation environment for wireless transmission. Since unmanned aerial vehicle (UAV) offers 3D mobility and high availability of line-of-sight link, the integration of UAV with the RIS (UAV-RIS) have received much attention. This paper aims to study the comparison between the RIS and the conventional active relays for UAV cooperative communications. The numerical results show that the energy efficiencies of the active relays are higher when the UAV is placed close to the middle point between the BS and UE, whereas the RIS performs better when the UAV is close to either the BS or the UE. Besides, decode-and-forward (DF) relay achieves higher energy efficiency than the amplify-and-forward (AF) relays. In addition, the performance of the full-duplex relaying mode is affected by the residual self-interference (SI), while the performance of RIS is significantly affected by the number of elements. There is none of the relaying modes works well in all conditions, and the performance of the relaying modes depends on the system frequency used.",
isbn="978-3-031-28076-4"
}

@ARTICLE{9119122,
  author={Di Renzo, Marco and Ntontin, Konstantinos and Song, Jian and Danufane, Fadil H. and Qian, Xuewen and Lazarakis, Fotis and De Rosny, Julien and Phan-Huy, Dinh-Thuy and Simeone, Osvaldo and Zhang, Rui and Debbah, Meroaune and Lerosey, Geoffroy and Fink, Mathias and Tretyakov, Sergei and Shamai, Shlomo},
  journal={IEEE Open Journal of the Communications Society}, 
  title={Reconfigurable Intelligent Surfaces vs. Relaying: Differences, Similarities, and Performance Comparison}, 
  year={2020},
  volume={1},
  number={},
  pages={798-807},
  keywords={Relays;Surface waves;Wireless communication;Radio transmitters;Antenna arrays;5G;6G;reconfigurable intelligent surfaces;relays;smart radio environments},
  doi={10.1109/OJCOMS.2020.3002955}}

@ARTICLE{8811733,
  author={Wu, Qingqing and Zhang, Rui},
  journal={IEEE Trans. Wireless Commun.}, 
  title={Intelligent Reflecting Surface Enhanced Wireless Network via Joint Active and Passive Beamforming}, 
  year={2019},
  volume={18},
  number={11},
  pages={5394-5409},
  keywords={Array signal processing;Interference;Wireless communication;MIMO communication;Receivers;Phase shifters;Signal to noise ratio;Intelligent reflecting surface;joint active and passive beamforming;phase shift optimization},
  doi={10.1109/TWC.2019.2936025}}

@ARTICLE{9184098,
  author={Björnson, Emil and Sanguinetti, Luca},
  journal={IEEE Open Journal of the Communications Society}, 
  title={Power Scaling Laws and Near-Field Behaviors of Massive {MIMO} and Intelligent Reflecting Surfaces}, 
  year={2020},
  volume={1},
  number={},
  pages={1306-1324},
  keywords={Signal to noise ratio;Relays;Receiving antennas;MIMO communication;Phased arrays;Gain;Intelligent reflecting surface;reconfigurable intelligent surface;software-controlled meta-surface;massive MIMO;regenerative MIMO relays;asymptotic limits;power scaling law;near-field;far-field},
  doi={10.1109/OJCOMS.2020.3020925}}

@ARTICLE{9095301,
  author={Boulogeorgos, Alexandros-Apostolos A. and Alexiou, Angeliki},
  journal={IEEE Access}, 
  title={Performance Analysis of Reconfigurable Intelligent Surface-Assisted Wireless Systems and Comparison With Relaying}, 
  year={2020},
  volume={8},
  number={},
  pages={94463-94483},
  keywords={Wireless communication;Signal to noise ratio;Closed-form solutions;Probability density function;Power system reliability;Probability;Error analysis;Amplify-and-forward;average signal-to-noise-ratio;beyond 5G systems;ergodic capacity;high-signal-to-noise-ratio approximation;meta-surfaces;multipath fading;outage probability;performance analysis;reconfigurable intelligent surfaces;symbol error rate;theoretical framework},
  doi={10.1109/ACCESS.2020.2995435}}

@ARTICLE{8910627,
  author={Wu, Qingqing and Zhang, Rui},
  journal={IEEE Commun. Mag.}, 
  title={Towards Smart and Reconfigurable Environment: Intelligent Reflecting Surface Aided Wireless Network}, 
  year={2020},
  volume={58},
  number={1},
  pages={106-112},
  keywords={Resource management;Wireless networks;Array signal processing;Communication system security;Reflection;Interference},
  doi={10.1109/MCOM.001.1900107}}

@ARTICLE{9475160,
  author={Pan, Cunhua and Ren, Hong and Wang, Kezhi and Kolb, Jonas Florentin and Elkashlan, Maged and Chen, Ming and Di Renzo, Marco and Hao, Yang and Wang, Jiangzhou and Swindlehurst, A. Lee and You, Xiaohu and Hanzo, Lajos},
  journal={IEEE Commun. Mag.}, 
  title={Reconfigurable Intelligent Surfaces for 6G Systems: Principles, Applications, and Research Directions}, 
  year={2021},
  volume={59},
  number={6},
  pages={14-20},
  keywords={Wireless communication;6G mobile communication;NOMA;Reconfigurable intelligent surfaces;Hardware;Energy efficiency;5G mobile communication},
  doi={10.1109/MCOM.001.2001076}}

@ARTICLE{8796365,
  author={Basar, Ertugrul and Di Renzo, Marco and De Rosny, Julien and Debbah, Merouane and Alouini, Mohamed-Slim and Zhang, Rui},
  journal={IEEE Access}, 
  title={Wireless Communications Through Reconfigurable Intelligent Surfaces}, 
  year={2019},
  volume={7},
  number={},
  pages={116753-116773},
  keywords={Wireless networks;5G mobile communication;Surface waves;STEM;6G mobile communication;6G;large intelligent surfaces;meta-surfaces;reconfigurable intelligent surfaces;smart reflect-arrays;software-defined surfaces;wireless communications;wireless networks},
  doi={10.1109/ACCESS.2019.2935192}}

@ARTICLE{9140329,
  author={Di Renzo, Marco and Zappone, Alessio and Debbah, Merouane and Alouini, Mohamed-Slim and Yuen, Chau and de Rosny, Julien and Tretyakov, Sergei},
  journal={IEEE Journal on Selected Areas in Communications}, 
  title={Smart Radio Environments Empowered by Reconfigurable Intelligent Surfaces: How It Works, State of Research, and The Road Ahead}, 
  year={2020},
  volume={38},
  number={11},
  pages={2450-2525},
  keywords={Surface waves;Wireless networks;Wireless sensor networks;Communication system security;Prototypes;Surface treatment;5G;6G;reconfigurable intelligent surfaces;smart radio environments;mathematical theory of communication;mathematical theory of electromagnetism},
  doi={10.1109/JSAC.2020.3007211}}

@INPROCEEDINGS{10211119,
  author={Zhou, Yuetian and Nan, Fang and Gao, Lei},
  booktitle={2023 IEEE Int. Sympos. Broadband Multi. Sys. Broadcast. (BMSB)}, 
  title={Operator Perspective: How to Choose between the {RIS} and Full/Half-Duplex Relay}, 
  year={2023},
  volume={},
  number={},
  pages={1-6},
  keywords={6G mobile communication;Wireless communication;Energy consumption;Interference cancellation;Costs;Half-duplex system;Full-duplex system;RIS/IRS;Full-Duplex Relay;Half-Duplex Relay;Decode-and-Forward (DF);Amplify-and-Forward (AF)},
  doi={10.1109/BMSB58369.2023.10211119}}

@INPROCEEDINGS{10183245,
  author={Zhou, Yuetian and Nan, Fang},
  booktitle={2023 IEEE Int. Wireless Commun. Mobile Comp (IWCMC)}, 
  title={Performance Evaluation between the {RIS} and Full/Half-Duplex Relay}, 
  year={2023},
  volume={},
  number={},
  pages={550-555},
  keywords={6G mobile communication;Wireless communication;Performance evaluation;Interference cancellation;System performance;Full-duplex system;Routing;RIS/IRS;Full-Duplex Relay;Half-Duplex Relay;Decode-and-Forward (DF);Amplify-and-Forward (AF)},
  doi={10.1109/IWCMC58020.2023.10183245}}

@article{JIANG2024514,
title = {Secure performance comparison for {NOMA}: Reconfigurable intelligent surface or amplify-and-forward relay?},
journal = {J. Info. Intellig.},
volume = {2},
number = {6},
pages = {514-524},
year = {2024},
issn = {2949-7159},
doi = {https://doi.org/10.1016/j.jiixd.2024.07.001},
author = {Chengjun Jiang and Chensi Zhang and Chongwen Huang and Jiaying He and Zhe Zhang and Jianhua Ge}}

@article{26262626,
title = {Comparative analysis of intelligent reflecting surfaces and {AF/DF} relaying for energy efficient wireless communication},
journal = {J. Wireless Com Network},
volume = {2025},
number = {26},
pages = {},
year = {2025},
issn = {},
author = {Kuldeep Yadav and Himanshu Parashar and Soumen Mondal}}

@INPROCEEDINGS{9217321,
  author={Chatzigeorgiou, Ioannis},
  booktitle={2020 IEEE 31st Annual Int. Sympos. on Personal, Indoor Mob. Radio Commun.}, 
  title={The Impact of {5G} Channel Models on the Performance of Intelligent Reflecting Surfaces and Decode-and-Forward Relaying}, 
  year={2020},
  volume={},
  number={},
  pages={1-4},
  keywords={5G mobile communication;Relays;Channel models;Device-to-device communication;Surface waves;Antennas;Fading channels;Path loss;channel model;millimeter waves;intelligent reflecting surface;decode-and-forward relaying},
  doi={10.1109/PIMRC48278.2020.9217321}}

@INPROCEEDINGS{9154308,
  author={Ntontin, Konstantinos and Di Renzo, Marco and Lazarakis, Fotis},
  booktitle={2020 IEEE 21st Int. Workshop Sig. Process. Adv. Wireless Commun. (SPAWC)}, 
  title={On the Rate and Energy Efficiency Comparison of Reconfigurable Intelligent Surfaces with Relays}, 
  year={2020},
  volume={},
  number={},
  pages={1-5},
  keywords={Relays;Radio frequency;Wireless communication;Power demand;Insertion loss;Integrated circuit modeling;Dipole antennas;D band;reconfigurable intelligent surfaces;relays;rate;energy efficiency},
  doi={10.1109/SPAWC48557.2020.9154308}}

@INPROCEEDINGS{9569598,
  author={Shaikh, Mohd Hamza Naim and Bohara, Vivek Ashok and Srivastava, Anand and Ghatak, Gourab},
  booktitle={2021 IEEE 32nd Ann. Int. Sympos. Pers., Indoor Mob. Radio Commun. (PIMRC)}, 
  title={Intelligent Reflecting Surfaces Versus Full-Duplex Relaying: Performance Comparison for Non-Ideal Transmitter Case}, 
  year={2021},
  volume={},
  number={},
  pages={513-518},
  keywords={Wireless communication;Upper bound;Channel capacity;Radio transmitters;Scattering;Receivers;Energy efficiency;Intelligent reflecting surface;energy efficiency;decode-and-forward;channel capacity;non-linearity;full-duplex;self-interference},
  doi={10.1109/PIMRC50174.2021.9569598}}

@ARTICLE{9690478,
  author={Liu, Yuanwei and Mu, Xidong and Xu, Jiaqi and Schober, Robert and Hao, Yang and Poor, H. Vincent and Hanzo, Lajos},
  journal={IEEE Wireless Commun.}, 
  title={{STAR}: Simultaneous Transmission and Reflection for 360° Coverage by Intelligent Surfaces}, 
  year={2021},
  volume={28},
  number={6},
  pages={102-109},
  keywords={Protocols;Wireless networks;Prototypes;Switches;Reconfigurable intelligent surfaces;MISO communication},
  doi={10.1109/MWC.001.2100191}}

@ARTICLE{11142285,
  author={Elganimi, Taissir Y. and Rabie, Khaled M. and Aldababsa, Mahmoud and Li, Xingwang and Alouini, Mohamed-Slim},
  journal={IEEE Commun. Lett.}, 
  title={{STAR-RISs} Versus Full-Duplex Decode-and-Forward Relaying: Which is Better?}, 
  year={2025},
  volume={29},
  number={11},
  pages={2556-2560},
  keywords={Reflection;Reconfigurable intelligent surfaces;Full-duplex system;Receivers;Protocols;Energy efficiency;Training;Relays;Radio frequency;Noise;STAR-RIS;full-duplex decode-and-forward relaying;repetition coding;achievable rate;energy efficiency},
  doi={10.1109/LCOMM.2025.3602930}}

\end{document}